\numberwithin{equation}{section}
\declaretheoremstyle[
    spaceabove=\topsep, spacebelow=\topsep,
    bodyfont=\normalfont
]{mydefstyle}
\declaretheorem[name=Theorem,
    refname={Theorem,Theorems},
    numberwithin=section]{thm}
\declaretheorem[name=Lemma,
    refname={Lemma,Lemmas},
    sibling=thm]{lemma}
\declaretheorem[name=Corollary,
    refname={Corollary,Corollaries},
    sibling=thm]{corollary}
\declaretheorem[name=Definition,
    style=mydefstyle,
    numberwithin=section,
    refname={Definition,Definitions}]{mydef}
\newcommand{\sig}{\mathsf{S}}
\newcommand{\supp}{\mathrm{supp}}
\newcommand{\arity}{\mathrm{arity}}
\newcommand{\inp}[1]{\mathrm{inp}(#1)}
\newcommand{\innerprod}[1]{\left<#1\right>}
\newcommand{\myvert}{\,|\,}
\newcommand{\cP}{\mathrm{P}}
\newcommand{\cNP}{\mathrm{NP}}
\newcommand{\sP}{\#\mathrm{P}}
\newcommand{\redT}{\leq_\mathsf{T}}
\newcommand{\eqredT}{\equiv_\mathsf{T}}
\newcommand{\Holant}{\mathrm{Holant}}
\newcommand{\CSP}{\#\mathrm{CSP}}
\newcommand{\atype}{\mathscr{A}}
\newcommand{\ptype}{\mathscr{P}}
\newcommand{\ttype}{\mathscr{T}}
\title{The Complexity of Holant Problems over Boolean Domain \\with Non-negative Weights}
\date{}
\author{Jiabao Lin\\ Peking University\\ joblin@pku.edu.cn \and Hanpin Wang\\ Peking University\\ whpxhy@pku.edu.cn}
\begin{document}

\begin{titlingpage}
    \maketitle
    \begin{abstract}
    \normalsize
        Holant problem is a general framework to study the computational complexity of counting problems. We prove a complexity dichotomy theorem for Holant problems over the Boolean domain with non-negative weights. It is the first complete Holant dichotomy where constraint functions are not necessarily symmetric.

        Holant problems are indeed read-twice \#CSPs. Intuitively, some \#CSPs that are \#P-hard become tractable when restricted to read-twice instances. To capture them, we introduce the Block-rank-one condition. It turns out that the condition leads to a clear separation. If a function set $\mathcal{F}$ satisfies the condition, then $\mathcal{F}$ is of affine type or product type. Otherwise (a) $\Holant(\mathcal{F})$ is \#P-hard; or (b) every function in $\mathcal{F}$ is a tensor product of functions of arity at most 2; or (c) $\mathcal{F}$ is transformable to a product type by some real orthogonal matrix. Holographic transformations play an important role in both the hardness proof and the characterization of tractability.
    \end{abstract}
\end{titlingpage}

\section{Introduction}

There has been considerable interest in several frameworks to study the complexity of counting problems. One natural framework is the counting Constraint Satisfaction Problem (\#CSP) \cite{CH1996,BD2007,DGJ2009,BDGJR2009,DR2010,BDGJJR2012,CCL2011,CC2012,Bulatov2013}. Another is Graph Homomorphism (GH) \cite{Lovasz1967,HN1990,DG2000,BG2005,DGP2007,GGJT2010,CC2010,CCL2013}, which can be seen as a special case of \#CSP. Such frameworks express a large class of counting problems in the Sum-of-Product form. It is known that if $\cP\neq \cNP$, then there exists a problem that is neither in P nor NP-complete \cite{Ladner1975}. And there is an analogue of Ladner's Theorem for the class \#P. However, for these frameworks, various beautiful dichotomy theorems have been proved, classifying all problems in the broad class into those which are computable in polynomial time (in P) and those which are \#P-hard. A natural question is: For how broad a class of counting problems can one prove a dichotomy theorem?

While GH can express many interesting graph parameters, Freedman, Lov{\'a}sz and Schrijver \cite{FLS2007} showed that the number of perfect matchings of a graph cannot be represented as a homomorphism function. Inspired by holographic algorithms \cite{Valiant2008,Valiant2006}, Cai, Lu and Xia \cite{CLX2009} proposed a more refined framework called Holant Problems. Here we give a brief introduction. In this paper, constraint functions are defined over the Boolean domain, if not specified. Let $\mathcal{F}$ denote a set of algebraic complex-valued functions. A \emph{signature grid} $\Omega$ is a tuple $(G,\mathcal{F},\pi)$ where $G=(V,E)$ is an undirected graph, and $\pi$ is a map that maps each vertex $v\in V$ to some function $f_v\in\mathcal{F}$ and its incident edges $E(v)$ to the input variables of $f_v$. The counting problem on $\Omega$ is to compute
\begin{equation*}
    \Holant_\Omega = \sum_{\sigma:E\to\{0,1\}}\prod_{v\in V}f_v(\sigma |_{E(v)}),
\end{equation*}
where $\sigma |_{E(v)}$ is the restriction of $\sigma$ to $E(v)$. All such signature grids constitute the set of instances of the problem $\Holant(\mathcal{F})$. For example, consider the problem of counting perfect matchings (\#PM) on graph $G$. In a perfect matching, every vertex is saturated by exactly one edge. Such constraint on a vertex of degree $n$ can be expressed as an \textsc{Exact}-\textsc{One} function $f:\{0,1\}^n\to\{0,1\}$, which takes the value 1 if and only if its input has Hamming weight 1. If every vertex is assigned such a function, then the value $\Holant_\Omega$ is exactly the number of perfect matchings. Let $\mathcal{F}$ denote the set of all \textsc{Exact}-\textsc{One} functions, then $\Holant(\mathcal{F})$ represents the problem \#PM.

The Holant framework is general enough: \#CSPs can be viewed as special Holant problems where all equality functions are available \cite{CLX2009}. However, the very generality makes it more difficult to prove a dichotomy. A function is \emph{symmetric} if the function values only depend on the Hamming weights of inputs, like the \textsc{Exact}-\textsc{One} functions. Satisfactory progress has been made in the complexity classification of Holant problems specified by sets of symmetric functions \cite{CHL2012,HL2012,GLV2013,CGW2013,CFGW2015}. And in the process, some unexpected tractable classes were discovered. They give many deep insights into both tractability and hardness.

It still remains open whether a complete dichotomy exists, since the definition of Holant problems does not require that constraint functions be symmetric. Such restriction is stringent and generally it is not imposed in \#CSP. Cai, Lu and Xia \cite{CLX2011b} proved a dichotomy without symmetry for a special family of Holant problems, called $\Holant^*$, where all unary functions are assumed to be available. But without this assumption, as in \cite{CGW2013}, more tractable classes will be released, which makes the hardness proof very different.

We prove a dichotomy theorem for Holant problems with non-negative algebraic real weights. It is the first complete Holant dichotomy where constraint functions are not necessarily symmetric and no auxiliary function is assumed to be available. This generalizes the results on Boolean \#CSP in $\cite{CH1996,DGJ2009}$, and the dichotomies in \cite{HL2012,CGW2013} restricted to non-negative case. Our proof starts with an infinitary condition, but finally obtains an explicit criterion (\autoref{thm:dichotomy}).
%
%

%
%


A simple observation is that, Holant problems are indeed \emph{read-twice} \#CSPs where every variable in an instance appears exactly twice (see \autoref{subsec:CSP}). Intuitively, some \#CSPs that are \#P-hard become tractable when restricted to read-twice instances. To capture them, we need insights into what makes a problem hard in \#CSP. Inspired by dichotomy theorems over general domains \cite{BG2005,DR2013,CCL2011,CC2012}, we introduce the Block-rank-one condition for Holant problems (see \autoref{subsec:block-rank-one}). It is known that non-block-rank-one structures imply  hardness in \#CSP. So our condition is necessary for tractability since it is imposed on the functions defined by read-twice instances. Surprisingly, on the Boolean domain, the Block-rank-one condition is also \emph{sufficient} and leads to a clear separation:
\begin{enumerate}[label=\Roman*.,align=left,leftmargin=0pt,listparindent=\parindent,labelwidth=0pt,itemindent=!]
  \item \textbf{Function set $\mathcal{F}$ satisfies the condition}. Then $\CSP(\mathcal{F})$ is in P, and hence its subproblem $\Holant(\mathcal{F})$ is also in P.
  \item \textbf{Function set $\mathcal{F}$ violates the condition}. Then (a) $\Holant(\mathcal{F})$ is \#P-hard or (b) $\CSP(\mathcal{F})$ is \#P-hard but $\Holant(\mathcal{F})$ is tractable.
\end{enumerate}

First we discuss Part II. We can prove \#P-hardness directly, or further induce an orthogonal holographic transformation. After performing the transformation, we have to handle real-valued functions. Luckily, we can even prove a dichotomy theorem for a family of \emph{complex-valued} Holant problems (\autoref{thm:Holant-nontrivial-eq}). And towards this theorem, we prove a lemma (\autoref{lemma:multiple-to-single}) on how to ``extract'' a function from its tensor powers. The proof is non-constructive and the idea can simplify some existing proofs. For example,
it can be shown directly that the two problems $\CSP^d(\mathcal{F}\cup\{[1,0]^{\otimes d},[0,1]^{\otimes d}\})$ and $\CSP^d(\mathcal{F}\cup\{[1,0],[0,1]\})$ in \cite{HL2012} are equivalent under polynomial-time Turing reduction.

%

Now consider Part I. It can be derived that $\mathcal{F}$ is of affine type or $\mathcal{F}$ is of product type, exactly the criterion given by Dyer, Goldberg and Jerrum \cite{DGJ2009}. Dichotomies for \#CSP over general domains \cite{Bulatov2013,DR2013,BDGJJR2012,CCL2011} are very different from those over the Boolean domain \cite{CH1996,DGJ2009}. Our proof builds a connection between them.

The Block-rank-one condition is a little conceptual. To obtain the structure of $\mathcal{F}$, we introduce an equivalent notion, called \emph{balance}, for Holant problems (see \autoref{subsec:balance}). The equivalence is simply built on the concept of \emph{vector representation} in \cite{CCL2011}, which was used to design a polynomial-time algorithm for \#CSP. Back to non-negative \#CSP, we find that actually the notions of weak balance and balance (different from our version for Holant) in \cite{CCL2011} are equivalent, without assuming $\mathrm{FP}\neq \sP$. Therefore, to decide the complexity of a problem $\CSP(\mathcal{F})$, we only need to decide whether $\mathcal{F}$ is of weak balance.

The remainder of this paper is organized as follows. Preliminary definitions and notations are given in \autoref{sec:preliminaries}. In \autoref{sec:decomposition}, we show that, given a function $F=f\otimes g$, under certain conditions we may assume that the component $f$ is freely available. This will be technically useful in later proofs. In \autoref{sec:non-unitary}, we prove \autoref{thm:Holant-nontrivial-eq}. It is an important part of hardness. Some direct applications of this theorem are presented in \autoref{sec:p-transformability}. And in \autoref{sec:special-4}, we consider certain functions of arity $4$ and complete the preparations for Part II. The dichotomy is introduced in \autoref{sec:dichotomy} (\autoref{thm:dichotomy}). In \autoref{subsec:block-rank-one} we define the Block-rank-one condition and finish the proof of Part II. And the remaining two subsections are devoted to Part I. In \autoref{sec:back-to-csp}, we give a simple proof of the equivalence between the two notions defined in \cite{CCL2011}.

\section{Preliminaries}
\label{sec:preliminaries}
\subsection{Functions and Signatures}

Let $\mathbb{C}$ and $\mathbb{R}_+$ denote the set of algebraic complex numbers and the set of algebraic non-negative real numbers, respectively. Throughout this paper, we refer to them simply as complex and non-negative numbers.

The functions we discussed are over the Boolean domain $\{0,1\}$, if not specified.

Given a function $f:\{0,1\}^n\to \mathbb{C}$, we will often write it as a vector of dimension $2^n$ whose entries are the function values, indexed by $\mathbf{x}\in\{0,1\}^n$ lexicographically. This vector is called a \emph{signature}. If the values of an $n$-ary function only depend on the Hamming weights of inputs, then the function is called \emph{symmetric} and can be expressed as $[f_0,f_1,...,f_n]$ where $f_k$ is the function value for inputs of Hamming weight $k$. For example, the ternary logic OR function has the signature $[0,1,1,1]$.

Generally, given a function $f$ of arity $n$, we can express it as a $2^r\times 2^{n-r}$ matrix ($1\leq r\leq n$), denoted by $M_{[r]}(f)$. The rows and columns are indexed by $\mathbf{x}\in\{0,1\}^r$ and $\mathbf{y}\in\{0,1\}^{n-r}$ respectively, and $f(\mathbf{x},\mathbf{y})$ is the $(\mathbf{x},\mathbf{y})^{\mathrm{th}}$ entry of the matrix. And the matrices $\{M_{[r]}(f)\myvert r\in [n]\}$ are called the \emph{signature matrices} of $f$. When the integer $r$ is clear from the context, we simply write $M_f$. For example, given a function $f$ of arity $4$, we often write it as a $4\times 4$ matrix:

\begin{equation*}
   M_f= \begin{bmatrix}
            f_{0000} & f_{0001} & f_{0010} & f_{0011} \\
            f_{0100} & f_{0101} & f_{0110} & f_{0111} \\
            f_{1000} & f_{1001} & f_{1010} & f_{1011} \\
            f_{1100} & f_{1101} & f_{1110} & f_{1111} \\
        \end{bmatrix}.
\end{equation*}

In most cases, if not confused, we identify functions, signatures and signature matrices. But in \autoref{sec:dichotomy}, we shall distinguish a function from its matrix representations.

Let $S_n$ denote the symmetric group on indices $\{1,2,...,n\}$. Given an $n$-ary function $f$ and a permutation $\pi\in S_n$, we define the function $f_\pi(x_1,x_2,...,x_n)=f(x_{\pi(1)},x_{\pi(2)},...,x_{\pi(n)})$.

Let $f:\{0,1\}^r\to \mathbb{C}$ and $g:\{0,1\}^s\to \mathbb{C}$ be two functions, and $r,s\geq 1$. We use $f\otimes g$ to denote their \emph{tensor product}, the function $F$ of arity $r+s$ such that for all $\mathbf{x}\in\{0,1\}^r$ and $\mathbf{y}\in\{0,1\}^s$,
\begin{equation*}
    F(\mathbf{x},\mathbf{y})=f(\mathbf{x})g(\mathbf{y}).
\end{equation*}
Let $h$ be a function and $\pi$ be a permutation such that $h_\pi=f\otimes g$ for some $f$ and $g$. If we do not care about the permutation, we also write $h=f\otimes g$. A function $F$ is \emph{reducible} if $F_\pi$ is a tensor product of two functions for some permutation $\pi$. Otherwise $F$ is called \emph{irreducible}.

A function is called \emph{degenerate} if it is a tensor product of some unary functions. Otherwise we call it \emph{non-degenerate}. In particular, a complex-valued symmetric signature $f=[f_0,f_1,...,f_n]$ is degenerate if and only if $f=[x,y]^{\otimes n}$ for some $x,y\in \mathbb{C}$.

We use $\arity(f)$ to denote the arity of a function $f$.

\subsection{Holant Problems}

Let $\mathcal{F}$ be a (not necessarily finite) set of complex-valued functions. A \emph{signature grid} $\Omega$ is a tuple $(G,\mathcal{F},\pi)$ where $G=(V,E)$ is an undirected graph, and $\pi$ is a map that maps each vertex $v\in V$ to some function $f_v\in\mathcal{F}$ and its incident edges $E(v)$ to the input variables of $f_v$. The counting problem on $\Omega$ is to compute
\begin{equation*}
    \Holant_\Omega = \sum_{\sigma:E\to\{0,1\}}\prod_{v\in V}f_v(\sigma |_{E(v)}),
\end{equation*}
where $\sigma |_{E(v)}$ is the restriction of $\sigma$ to $E(v)$. When $\mathcal{F}$ is fixed, we simply use $(G,\pi)$ to denote a signature grid.

\begin{mydef}[Holant Problems]
    Given a function set $\mathcal{F}$, we define the counting problem $\Holant(\mathcal{F})$:\\
    \emph{Input}: A signature grid $\Omega=(G,\pi)$;\\
    \emph{Output}: $\Holant_\Omega$.
\end{mydef}

Note that the function set $\mathcal{F}$ can be infinite. We say that $\Holant(\mathcal{F})$ is \#P-hard, if there is a finite subset $\mathcal{G}$ of $\mathcal{F}$ such that the problem $\Holant(\mathcal{G})$ is \#P-hard. When $\mathcal{F}$ is infinite, an input instance of  $\Holant(\mathcal{F})$ should include the descriptions of functions that appear in the signature grid.

To introduce the holographic reductions, we define bipartite Holant problems. Let $\Holant(\mathcal{F}\myvert\mathcal{G})$ denote the Holant problem on bipartite graphs $H=(U,V,E)$ where each vertex in $U\ (\text{respectively, }V)$ is assigned a function from $\mathcal{F}\ (\text{respectively, }\mathcal{G})$. A Holant problem $\Holant(\mathcal{F})$ can seen as the bipartite problem $\Holant({=_2}\myvert \mathcal{F})$.

Let $T$ be a $2\times 2$ matrix and let $\mathcal{F}$ be a function set. Whenever we write $T\mathcal{F}$, the functions in $\mathcal{F}$ are viewed as column vectors and, $T\mathcal{F}=\{T^{\otimes n}f\myvert f\in\mathcal{F}\text{ and }n=\arity(f)\}$. Similarly, $\mathcal{F}T=\{fT^{\otimes n}\myvert f\in\mathcal{F}\text{ and }n=\arity(f)\}$ where the functions in $\mathcal{F}$ are expressed as row vectors.

Let $T$ be a matrix in $GL_2(\mathbb{C})$. We say there is a \emph{holographic reduction} defined by $T$ from $\Holant(\mathcal{F}\myvert\mathcal{G})$ to $\Holant(\mathcal{F}'\myvert\mathcal{G}')$, if $\mathcal{F}T\subseteq \mathcal{F}'$ and $T^{-1}\mathcal{G}\subseteq \mathcal{G}'$. The holographic reduction maps a signature grid $\Omega=(G,\mathcal{F}\myvert\mathcal{G},\pi)$ to $\Omega'=(G,\mathcal{F}'\myvert\mathcal{G}',\pi')$: For each vertex $v$ of $G$, $\pi'$ assigns the function $f_vT$ or $T^{-1}f_v$ to $v$, depending on which part $v$ belongs to.

\begin{thm}[Valiant's Holant Theorem \cite{Valiant2008}]
    Let $T$ be any matrix in $GL_2(\mathbb{C})$. Suppose that the holographic reduction defined by $T$ maps a signature grid $\Omega$ to  $\Omega'$. Then $\Holant_\Omega=\Holant_{\Omega'}$.
\end{thm}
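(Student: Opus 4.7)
The plan is to use the elementary identity $T T^{-1} = I_2$, which in coordinates reads $\sum_{c\in\{0,1\}} T_{ac}(T^{-1})_{cb} = \delta_{ab}$, and to observe that $\Holant_\Omega$ is a tensor contraction over edges in which each edge contributes exactly a factor $\delta_{\tau^u_e,\tau^v_e}$ forcing the two endpoint variables to agree. Inserting a ``resolution of the identity'' on every edge via this formula lets the resulting $T$ and $T^{-1}$ factors be absorbed into the endpoint signatures, producing precisely the signatures prescribed by $\Omega'$.

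More concretely, I would first rewrite the Holant sum by splitting the assignment $\sigma(e)$ on each edge $e=(u,v)$ into two independent Boolean variables $\tau^u_e,\tau^v_e$ subject to the constraint $\delta_{\tau^u_e,\tau^v_e}$:
\begin{equation*}
    \Holant_\Omega \;=\; \sum_{\tau^u,\,\tau^v} \Big(\prod_{e=(u,v)} \delta_{\tau^u_e,\tau^v_e}\Big) \prod_{u\in U} f_u\bigl(\tau^u|_{E(u)}\bigr) \prod_{v\in V} g_v\bigl(\tau^v|_{E(v)}\bigr).
\end{equation*}
Substituting $\delta_{\tau^u_e,\tau^v_e} = \sum_{c_e} T_{\tau^u_e,c_e}(T^{-1})_{c_e,\tau^v_e}$, exchanging the order of summation, and grouping the $T$-factors at each vertex $u\in U$ and the $T^{-1}$-factors at each vertex $v\in V$, the inner sum over $\tau^u_e$ at $u$ collapses to $f_u T^{\otimes \deg(u)}$ evaluated at $(c_e)_{e\in E(u)}$, while the inner sum over $\tau^v_e$ at $v$ collapses to $(T^{-1})^{\otimes \deg(v)} g_v$ evaluated at $(c_e)_{e\in E(v)}$. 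The remaining outer sum over the new edge variables $(c_e)_{e\in E}$ is, by definition, $\Holant_{\Omega'}$.

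The argument is fundamentally a bookkeeping exercise rather than a deep one, and I anticipate no serious obstacle. The only subtle point is verifying that the per-edge factors of $T$ accumulating at a single vertex $u$ combine to give the tensor power $T^{\otimes \deg(u)}$ acting on $f_u$; this is immediate from the definition of tensor product, because $T^{\otimes n}$ applied to a function of arity $n$ acts as $T$ independently on each of the $n$ coordinates. The bipartite form used here already covers the general case, since $\Holant(\mathcal{F})$ can be realized as $\Holant({=_2}\mid \mathcal{F})$ as noted in the excerpt.
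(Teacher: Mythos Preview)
Your argument is correct and is essentially the standard proof of Valiant's Holant Theorem: insert the identity $T T^{-1}=I_2$ on every edge, expand, and regroup the $T$-factors into the $U$-side signatures and the $T^{-1}$-factors into the $V$-side signatures. Note, however, that the paper does not supply its own proof of this theorem; it merely states the result and cites \cite{Valiant2008}. So there is nothing in the paper to compare against beyond confirming that your conventions match theirs (they do: $f_u \mapsto f_u T^{\otimes \deg(u)}$ as a row vector and $g_v \mapsto (T^{-1})^{\otimes \deg(v)} g_v$ as a column vector is exactly the paper's $\mathcal{F}T$, $T^{-1}\mathcal{G}$).
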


We will use $\redT$ to denote polynomial-time Turing reductions and use $\eqredT$ to denote the equivalence relation under polynomial-time Turing
reductions.

If there is a holographic reduction defined by $T$ from $\Holant(\mathcal{F}\myvert\mathcal{G})$ to $\Holant(\mathcal{F}'\myvert\mathcal{G}')$, then $\Holant(\mathcal{F}\myvert\mathcal{G})\redT \Holant(\mathcal{F}'\myvert\mathcal{G}')$. In particular, if $\mathcal{F}'=\mathcal{F}T$ and $\mathcal{G}'=T^{-1}\mathcal{G}$, then the two problems are equivalent under polynomial-time Turing reductions.

Given a matrix $M$, we use $M^{\mathsf T}$ to denote its transpose. A complex matrix $M$ is orthogonal if $M^{\mathsf T}M=I$, the identity matrix. For any orthogonal matrix $H$, $[1,0,1]H^{\otimes 2}=[1,0,1]$. This gives an important method to normalize a function set.
\begin{thm}
    Let $F$ be a function set and let $H$ be an orthogonal matrix. Then
    \begin{equation*}
        \Holant(H\mathcal{F})\eqredT \Holant(\mathcal{F}).
    \end{equation*}
\end{thm}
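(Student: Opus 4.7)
The plan is to realize $\Holant(\mathcal{F})$ as a bipartite Holant problem and apply an orthogonal holographic transformation so that the equality part is left invariant while $\mathcal{F}$ is carried to $H\mathcal{F}$. Concretely, I would first rewrite $\Holant(\mathcal{F})$ as $\Holant({=_2}\myvert \mathcal{F})$, using the observation already noted in the excerpt that any Holant instance can be viewed as bipartite by splitting each edge with a fresh binary equality $[1,0,1]$.

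Next, I would apply the holographic reduction defined by the matrix $T=H^{-1}$, which, since $H$ is orthogonal, equals $H^{\mathsf T}$ and is itself orthogonal. By the definition of the reduction, the left-hand signatures $=_2$ transform as a row vector: $[1,0,1]\,(H^{-1})^{\otimes 2}=[1,0,1]\,(H^{\mathsf T})^{\otimes 2}$. Applying the highlighted identity $[1,0,1]M^{\otimes 2}=[1,0,1]$ for any orthogonal $M$ (here $M=H^{\mathsf T}$) gives that the left part remains $\{=_2\}$. The right-hand signatures, viewed as column vectors, are transformed by $T^{-1}=H$, so $\mathcal{F}$ becomes $H\mathcal{F}$. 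Thus the reduction produces the signature grid of $\Holant({=_2}\myvert H\mathcal{F})=\Holant(H\mathcal{F})$.

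By Valiant's Holant Theorem, corresponding signature grids $\Omega$ and $\Omega'$ satisfy $\Holant_\Omega=\Holant_{\Omega'}$, so instances of one problem translate into instances of the other with the same value, yielding $\Holant(\mathcal{F})\redT \Holant(H\mathcal{F})$. The reverse reduction is obtained symmetrically: $H$ itself is invertible and orthogonal, so the holographic reduction defined by $H$ carries $\Holant({=_2}\myvert H\mathcal{F})$ back to $\Holant({=_2}\myvert \mathcal{F})$ by the same computation, giving the equivalence $\eqredT$. (Alternatively, one can simply invoke the remark right after the definition of holographic reductions stating that when $\mathcal{F}'=\mathcal{F}T$ and $\mathcal{G}'=T^{-1}\mathcal{G}$, the two problems are Turing-equivalent; this applies here with $T=H^{-1}$.)

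There is no real obstacle; the only point that needs care is direction and side conventions, namely distinguishing between row-vector and column-vector actions and making sure one chooses $T=H^{-1}$ (rather than $T=H$) so that the column-vector action on the right side produces $H\mathcal{F}$ as required. Once the conventions are fixed, the proof is a one-line holographic transformation together with the invariance of $[1,0,1]$ under $(\cdot)^{\otimes 2}$ for orthogonal matrices.
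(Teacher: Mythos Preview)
Your proposal is correct and follows exactly the paper's approach: rewrite $\Holant(\mathcal{F})$ as the bipartite problem $\Holant({=_2}\myvert \mathcal{F})$, apply the holographic reduction by an orthogonal matrix, and use the invariance $[1,0,1]M^{\otimes 2}=[1,0,1]$ for orthogonal $M$ to conclude the equivalence. The paper's proof is just the one-line chain $\Holant(H\mathcal{F})\eqredT \Holant({=_2}\myvert H\mathcal{F})\eqredT \Holant({=_2}\myvert \mathcal{F})\eqredT \Holant(\mathcal{F})$, which is precisely what you spelled out in detail.
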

\begin{proof}
    Since the equality function $=_2$ is invariant under orthogonal transformation, we have
    \begin{equation*}
        \Holant(H\mathcal{F})\eqredT \Holant({=_2}\myvert H\mathcal{F})\eqredT \Holant({=_2}\myvert \mathcal{F})\eqredT \Holant(\mathcal{F}).
    \end{equation*}
\end{proof}

\subsection{Realizability}

To determine the complexity of $\Holant(\mathcal{F})$ for a given $\mathcal{F}$, a basic technique is realizing certain functions from $\mathcal{F}$. Formally, the notion of realizability is defined by $\mathcal{F}$-gate \cite{CLX2011a}.

Let $\mathcal{F}$ be a set of functions. An $\mathcal{F}$-gate $\Gamma$ is a tuple $(G,\pi)$ where $G=(V,E,D)$ is a graph with regular edges $E$ and some dangling edges $D$ (See \autoref{fig:tetrahedron} for an example). Other than these dangling edges, the gate $\Gamma$ is the same as a signature grid: $\pi$  maps each vertex $v\in V$ to some function $f_v\in\mathcal{F}$ and it incident edges (including the dangling ones) to the input variables of $f_v$. We denote the edges in $E$ by $1,2,...,m$ and the dangling edges in $D$ by $m+1,m+2,...,m+n$. Then we can define a function $f$ for $\Gamma$:
\begin{equation*}
    f(y_1,y_2,...,y_n)=\sum_{x_1,x_2,...,x_m\in\{0,1\}}F(x_1,x_2,...,x_m,y_1,y_2,...,y_n)
\end{equation*}
where $(y_1,y_2,...,y_n)\in\{0,1\}^n$ is an assignment on the dangling edges and $F(\mathbf{x},\mathbf{y})$ denotes the product of evaluations at all vertices of $V$. We say the function $f$ is \emph{realizable} from the function set $\mathcal{F}$. The set $E$ of internal edges is allowed to be empty, in which case $f\in\mathcal{F}$ or it is the tensor product of several functions in $\mathcal{F}$. An $n$-ary function in $\mathcal{F}$ can be seen as a single vertex with $n$ dangling edges (or inputs).

Given a function set $\mathcal{F}$, we define
\begin{equation*}
    \sig(\mathcal{F})=\{g \myvert g\text{ is realizable from }\mathcal{F}\}.
\end{equation*}
When $\mathcal{F}=\{g\}$, we write $\sig(g)$. Note that if two functions $\bar{g}$ and $g$ satisfy $\bar g=cg$ for some constant $c\neq 0$, then $\Holant(\mathcal{F}\cup{\bar{g}})\eqredT\Holant(\mathcal{F}\cup\{g\})$ for any $\mathcal{F}$. Based on this observation, if not confused,  sometimes we also say $\bar{g}$ is realizable from $\mathcal{F}$ if $g\in\sig(\mathcal{F})$.

We list some basic facts on realizable functions.
\begin{lemma}
    If $\mathcal{G}\subseteq\sig(\mathcal{F})$, then $\sig(\mathcal{G})\subseteq \sig(\mathcal{F})$.
\end{lemma}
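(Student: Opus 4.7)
The plan is to realize any function in $\sig(\mathcal{G})$ by an $\mathcal{F}$-gate through substitution. Fix $h\in\sig(\mathcal{G})$ and let $\Gamma=(G,\pi)$ with $G=(V,E,D)$ be a $\mathcal{G}$-gate realizing $h$. For each $v\in V$, the assigned function $g_v:=\pi(v)$ lies in $\mathcal{G}\subseteq\sig(\mathcal{F})$, so by definition there is an $\mathcal{F}$-gate $\Gamma_v$ with $\arity(g_v)$ ordered dangling edges that realizes $g_v$.

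I would then form a new gate $\Gamma'$ by replacing each vertex $v$ of $\Gamma$ with the corresponding $\Gamma_v$, identifying the $i$-th dangling edge of $\Gamma_v$ with the edge of $\Gamma$ that $\pi$ assigns to the $i$-th input variable of $g_v$. After this surgery, an edge of $\Gamma$ in $E$ has both its endpoints replaced by subgates, so the two matching dangling edges merge into a single internal edge of $\Gamma'$; an edge of $\Gamma$ in $D$ has only one endpoint in $V$ and so remains a dangling edge of $\Gamma'$. Hence $\Gamma'$ is a well-defined $\mathcal{F}$-gate whose dangling edges are exactly those of $\Gamma$, in the same order.

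To verify that $\Gamma'$ realizes $h$, I would unfold the Holant sum for $\Gamma'$, partitioning the internal-edge assignments into those inherited from $E$ (call them $\mathbf{x}$) and, for each $v$, those internal to $\Gamma_v$ (call them $\mathbf{u}_v$). Distributivity then yields
\begin{equation*}
    \sum_{\mathbf{x}}\sum_{\mathbf{u}_1,\ldots,\mathbf{u}_{|V|}}\prod_{v\in V}\prod_{w\in V(\Gamma_v)}f_w=\sum_{\mathbf{x}}\prod_{v\in V}\Bigl(\sum_{\mathbf{u}_v}\prod_{w\in V(\Gamma_v)}f_w\Bigr)=\sum_{\mathbf{x}}\prod_{v\in V}g_v(\sigma|_{E(v)})=h(\mathbf{y}),
\end{equation*}
where the first equality is the distributive law over disjoint summations, the second uses that each $\Gamma_v$ realizes $g_v$, and the third uses that $\Gamma$ realizes $h$. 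This gives $h\in\sig(\mathcal{F})$. I do not anticipate a genuine obstacle; the only subtlety is bookkeeping, namely making sure the substitution respects both the variable ordering at each $\Gamma_v$ and the distinction between internal and dangling edges, so that $\Gamma'$ is a legitimate $\mathcal{F}$-gate with the correct dangling-edge signature.
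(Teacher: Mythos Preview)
Your argument is correct: substituting each $\mathcal{G}$-vertex by an $\mathcal{F}$-gate that realizes it, and then using distributivity to collapse the nested sum, is exactly the standard way to establish this closure property. The paper itself does not supply a proof---it lists this lemma among ``some basic facts on realizable functions'' and moves on---so there is nothing to compare against beyond noting that your write-up makes explicit the bookkeeping the paper leaves implicit.
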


\begin{lemma}
\label{lemma:realize-red}
    If $f\in\sig(\mathcal{F})$, then $\Holant(\mathcal{F}\cup \{f\})\eqredT \Holant(\mathcal{F})$.
\end{lemma}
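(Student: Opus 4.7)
The plan is to prove the two directions of the Turing equivalence separately. The easier direction, $\Holant(\mathcal{F})\redT\Holant(\mathcal{F}\cup\{f\})$, is immediate: every signature grid over $\mathcal{F}$ is already a signature grid over $\mathcal{F}\cup\{f\}$, so the identity reduction works and the answer is unchanged.

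For the nontrivial direction $\Holant(\mathcal{F}\cup\{f\})\redT\Holant(\mathcal{F})$, I would use a gate-substitution argument. Since $f\in\sig(\mathcal{F})$, fix once and for all an $\mathcal{F}$-gate $\Gamma=(G_\Gamma,\pi_\Gamma)$ with internal edges $E_\Gamma$ and $n=\arity(f)$ dangling edges, whose gate function equals $f$. Given any signature grid $\Omega=(G,\pi)$ over $\mathcal{F}\cup\{f\}$, construct an equivalent signature grid $\Omega'$ over $\mathcal{F}$ as follows: for each vertex $v$ of $G$ with $\pi(v)=f$, delete $v$ and insert a fresh isomorphic copy of $\Gamma$, identifying the $n$ dangling edges of the copy (in the order given by $\pi$) with the $n$ edges of $G$ formerly incident to $v$. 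Vertices of $G$ that were already assigned functions in $\mathcal{F}$ are left untouched. This construction is clearly polynomial time, since $\Gamma$ is a fixed object independent of the input.

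The verification step is to check that $\Holant_{\Omega}=\Holant_{\Omega'}$. Let $v_1,\dots,v_k$ be the vertices of $G$ assigned $f$, and partition the edges of $\Omega'$ into the original edges $E(G)$ and the disjoint unions of internal edges of the inserted copies of $\Gamma$. Summing in the Holant of $\Omega'$ first over the internal-edge assignments of each inserted copy, the contribution at each $v_i$ collapses, by the very definition of the gate function, to $f(\sigma|_{E(v_i)})$. The remaining sum is exactly $\Holant_\Omega$. This step is mostly notational bookkeeping, so no real obstacle arises.

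If anything, the only subtle point is handling the case $E_\Gamma=\emptyset$ (where $\Gamma$ may merely be a tensor product of several functions of $\mathcal{F}$, possibly sharing no vertex with the rest), and the case in which the same edge of $G$ is incident to two copies of $f$, which must be identified with one dangling edge from each of two distinct inserted copies of $\Gamma$. Both situations are handled by the standard conventions on $\mathcal{F}$-gates already introduced in the preceding paragraphs, so no new idea is needed. Combined with \autoref{lemma:realize-red}'s companion fact that scalar multiples do not affect complexity, this finishes the proof.
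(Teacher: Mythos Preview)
Your argument is correct and is exactly the standard gate-substitution proof one would expect here. In fact, the paper states this lemma without proof, treating it as a basic fact about realizability; so there is no ``paper's proof'' to compare against, and your write-up supplies the routine justification. One minor remark: your final sentence about scalar multiples is unnecessary for this lemma and slightly confusing, since the scalar-multiple observation is a separate comment the paper makes \emph{after} the lemma rather than part of its proof; you can simply delete that sentence.
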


Given a function, we can permute its inputs arbitrarily. This will often make a proof clear.
\begin{lemma}
\label{lemma:function-permutation}
    Let $f$ be a realizable function of arity $n$. Then for all permutation $\pi\in S_n$, $f_\pi$ is realizable.
\end{lemma}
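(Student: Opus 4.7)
The plan is to prove this essentially by relabeling of the dangling edges, since the $\mathcal{F}$-gate formalism allows arbitrary naming of the output ports. I would begin by fixing an $\mathcal{F}$-gate $\Gamma = (G, \pi_0)$ with $G = (V,E,D)$ that realizes $f$, with internal edges labeled $1,\dots,m$ and dangling edges labeled $m+1,\dots,m+n$ corresponding to the input variables $y_1,\dots,y_n$ of $f$, so that
\begin{equation*}
    f(y_1,\ldots,y_n) = \sum_{x_1,\ldots,x_m\in\{0,1\}} F(x_1,\ldots,x_m,y_1,\ldots,y_n),
\end{equation*}
where $F$ denotes the product of the vertex evaluations.

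Next I would construct a new $\mathcal{F}$-gate $\Gamma'$ with exactly the same underlying graph and the same assignment $\pi_0$ of functions to vertices and edges-to-variables at each vertex; the only change is that the external label of the dangling edge previously called $m+\pi(i)$ is now called $m+i$. Because the internal wiring and every local function assignment stay the same, the product $F$ evaluated at any configuration is unchanged; only the names by which we expose the dangling values to the outside world have been permuted. Reading off the function defined by $\Gamma'$ then gives
\begin{equation*}
    f'(z_1,\dots,z_n) \;=\; f(z_{\pi(1)},\dots,z_{\pi(n)}) \;=\; f_\pi(z_1,\dots,z_n),
\end{equation*}
so $f_\pi$ is realized by $\Gamma'$.

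There is no real obstacle here; the lemma is pure bookkeeping about the $\mathcal{F}$-gate definition. The only thing one has to be careful about is to separate the role of $\pi_0$ (the internal map in the gate, which associates each incident edge of a vertex $v$ to an input slot of $f_v$) from the external labeling of the dangling edges that determines which tuple entry of the realized function each dangling edge corresponds to; the construction above modifies only the latter. If one prefers a more ``hands-on'' argument, it suffices to do it for a transposition of two adjacent input positions and then iterate, since transpositions generate $S_n$, but the one-step relabeling argument is cleanest.
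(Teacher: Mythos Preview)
Your argument is correct: realizability is defined up to how the dangling edges are externally labeled, so relabeling dangling edge $m+\pi(i)$ as $m+i$ in the same $\mathcal{F}$-gate yields $f_\pi$. The paper itself offers no proof of this lemma---it is listed among ``some basic facts on realizable functions'' and left to the reader---so your writeup is exactly the kind of bookkeeping verification the authors had in mind.
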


If a function is realizable, we also say its signature matrices are realizable.
\begin{lemma}
    If a matrix $M$ is realizable, then $M^{\mathsf T}$ and $MM^{\mathsf T}$ are both realizable.
\end{lemma}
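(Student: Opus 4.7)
The plan is to unpack the definition $M = M_{[r]}(f)$ for the realizable function $f$ of arity $n$ (so $M$ is a $2^r \times 2^{n-r}$ matrix with entries $f(\mathbf{x},\mathbf{y})$), and then construct each of $M^{\mathsf T}$ and $MM^{\mathsf T}$ as a signature matrix of an explicit $\mathcal{F}$-gate built from copies of $f$.

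For $M^{\mathsf T}$, the observation is purely combinatorial: its $(\mathbf{y},\mathbf{x})$-entry is $f(\mathbf{x},\mathbf{y})$, so $M^{\mathsf T} = M_{[n-r]}(f_\pi)$ where $\pi$ is the permutation that swaps the first $r$ input positions with the last $n-r$ positions. Since $f$ is realizable, \autoref{lemma:function-permutation} yields that $f_\pi$ is realizable, and therefore so is the matrix $M^{\mathsf T}$.

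For $MM^{\mathsf T}$, I would exhibit the standard ``gluing'' $\mathcal{F}$-gate. Take two disjoint copies of a single-vertex $\mathcal{F}$-gate for $f$ (which has $n$ dangling edges each), label the dangling edges of the first copy $x_1,\dots,x_r,y_1,\dots,y_{n-r}$ and of the second copy $x_1',\dots,x_r',y_1',\dots,y_{n-r}'$, and then identify $y_i$ with $y_i'$ for each $i$ to make $n-r$ internal edges, leaving $x_1,\dots,x_r,x_1',\dots,x_r'$ dangling. By the definition of an $\mathcal{F}$-gate, the realized function is
\begin{equation*}
    g(\mathbf{x},\mathbf{x}') = \sum_{\mathbf{y}\in\{0,1\}^{n-r}} f(\mathbf{x},\mathbf{y})\, f(\mathbf{x}',\mathbf{y}),
\end{equation*}
whose $2^r\times 2^r$ signature matrix $M_{[r]}(g)$ has $(\mathbf{x},\mathbf{x}')$-entry equal to $\sum_{\mathbf{y}} M_{\mathbf{x},\mathbf{y}} M_{\mathbf{x}',\mathbf{y}} = (MM^{\mathsf T})_{\mathbf{x},\mathbf{x}'}$. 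Hence $MM^{\mathsf T}$ is realizable.

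There is essentially no obstacle; the only mildly delicate point is matching the input-ordering convention so that the dangling edges of the gate genuinely correspond to the row indices of $MM^{\mathsf T}$, which is handled by a final application of \autoref{lemma:function-permutation} if needed. The construction is also compatible with the earlier $M^{\mathsf T}$ step, since one could equally realize $MM^{\mathsf T}$ by composing $f$ with $f_\pi$ via the same edge-identification.
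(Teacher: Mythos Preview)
Your argument is correct. The paper states this lemma without proof, listing it among the ``basic facts on realizable functions,'' so there is nothing to compare against; your write-up supplies exactly the standard construction one would expect (permuting inputs for $M^{\mathsf T}$, and gluing two copies of the gate along the last $n-r$ inputs for $MM^{\mathsf T}$). One tiny remark: you speak of a ``single-vertex $\mathcal{F}$-gate for $f$,'' but $f$ is only assumed realizable, not in $\mathcal{F}$; just take two copies of whatever $\mathcal{F}$-gate realizes $f$, and the rest of your argument goes through verbatim.
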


Finally we introduce the notion of \emph{derivative} defined by Cai and Fu \cite{CF2016}. Let $f$ be a function of arity $n$ and let $S$ be a proper subset of $[n]$. We use $\partial_{[x,y]}^S(f)$ to denote the function obtained by connecting one copy of the function $[x,y]$ to each input of $f$ in $S$. In particular, when $S=\{i\}$ for some $i\in [n]$ and $[x,y]=[1,0]\ (\text{or } [0,1])$, we use the simplified notation $f^{x_i=0}\ (\text{or } f^{x_i=1})$.

\subsection{Weighted Counting CSP}
\label{subsec:CSP}
Let $\mathcal{F}$ be a set of complex-valued functions. Then the problem $\CSP(\mathcal{F})$ is defined as follows. An input instance $I$ of the problem consists of
\begin{enumerate}[label=\textbullet]
  \item A finite set of variables $V=\{x_1,...,x_n\}$;
  \item A finite set of constraints $\{C_1,...,C_m\}$: Each has the form $(F_i,\mathbf{x}_i)$ where $F_i\in\mathcal{F}$ and $\mathbf{x}_i$ is a tuple of (not necessarily distinct) variables from $V$.
\end{enumerate}
The instance $I$ \emph{defines} a function of arity $n$:
\begin{equation*}
  F_I(x_1,...,x_n)=\prod_{i=1}^{m} F_i(\mathbf{x}_i).
\end{equation*}
The output is the following sum:
\begin{equation*}
    Z(I)=\sum_{\mathbf{x}\in \{0,1\}^n}F_I(\mathbf{x}).
\end{equation*}

Let $d\geq 1$ be an integer and let $\mathcal{F}$ be a set of complex-valued functions. The problem $\CSP^d(\mathcal{F})$ is the restriction of $\CSP(\mathcal{F})$ where every variable occurs a multiple of $d$ times. This special family of $\CSP$ was first studied in \cite{HL2012}, which played important roles later in proving Holant dichotomies \cite{CGW2013,CFGW2015,CF2016}.

Given a positive integer $k$, we use $=_k$ to denote the following $k$-ary \emph{equality} function:
\begin{equation*}
    f(x_1,x_2,...,x_k)=\begin{cases}
                     1, & \mbox{if } x_1=x_2=\cdots = x_k \\
                     0, & \mbox{otherwise}.
                   \end{cases}
\end{equation*}
The problem $\CSP^d(\mathcal{F})$ is exactly the problem $\Holant(\{=_d,=_{2d},=_{3d},...\}\myvert \mathcal{F})$. Therefore, the Holant framework is expressive enough to simulate $\CSP$.

On the other hand, Holant problems are indeed read-twice \#CSPs. Given a signature grid, we assume that the numbering of its vertices and edges is also given. If these edges are viewed as variables, then the signature grid is a \#CSP instance where every variable appears exactly twice. So we also say that a signature grid defines a function. And the concept of realizability can be defined in the CSP language.

\begin{lemma}
\label{lemma:csp2-Holant}
    $\CSP^2(\mathcal{F})\redT \Holant(\mathcal{F}\cup\{=_4\})$.
\end{lemma}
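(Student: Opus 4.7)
The plan is to build, for any given $\CSP^2(\mathcal{F})$ instance $I$, a signature grid $\Omega$ of $\Holant(\mathcal{F}\cup\{=_4\})$ with $Z(I)=\Holant_\Omega$. Since $\CSP^2(\mathcal{F})$ is by the earlier identification the same as $\Holant(\{=_2,=_4,=_6,\ldots\}\myvert\mathcal{F})$, what is really needed is to realize every even-arity equality $=_{2k}$ from $\{=_4\}$ (the case $=_2$ is free in Holant, since an internal edge by itself implements the binary equality), and then to assemble the whole instance in a way that preserves the sum-of-products value.

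First I would construct an $\{=_4\}$-gate realizing $=_{2k}$ for each $k\ge 2$. The gadget $\Gamma_k$ is a chain of $k-1$ copies of $=_4$ in which consecutive copies share one internal edge: the two end copies contribute $3$ dangling edges each and each of the $k-3$ intermediate copies contributes $2$, for a total of $2k$ dangling edges. Because every shared internal edge is governed by $=_4$ on both sides, any nonzero configuration must assign the same Boolean value to every edge of $\Gamma_k$; hence $\Gamma_k$ realizes $=_{2k}$ (up to a harmless nonzero multiplicative constant if one chooses to include further $=_4$'s in a tree layout instead of a chain, but the chain gives exactly $=_{2k}$). The size of $\Gamma_k$ is linear in $k$, so this is polynomial.

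Next I would assemble $\Omega$ as follows. For each constraint $(F_i,\mathbf{x}_i)$ of $I$ introduce a \emph{constraint vertex} $u_i$ with signature $F_i$ of arity $\arity(F_i)$, whose incident edges correspond to the positions of $\mathbf{x}_i$. For each variable $x$ of $I$, say $x$ occurs $2k_x$ times ($k_x\ge 1$), collect its $2k_x$ constraint-side slots; if $k_x=1$ link the two slots by a single edge (this is the implicit $=_2$), and if $k_x\ge 2$ attach them to the $2k_x$ dangling edges of a fresh copy of $\Gamma_{k_x}$. The resulting $\Omega$ is a signature grid over $\mathcal{F}\cup\{=_4\}$, constructible in polynomial time from $I$.

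The correctness follows by a direct expansion of $\Holant_\Omega$: summing first over the internal edges of each $\Gamma_{k_x}$ forces all $2k_x$ dangling edges of that gadget to carry a common Boolean value, which we identify with the value of $x$; what remains is precisely $\sum_{\mathbf{x}}\prod_i F_i(\mathbf{x}_i)=Z(I)$. The reduction then returns the value of the oracle call to $\Holant(\mathcal{F}\cup\{=_4\})$ on $\Omega$. I do not anticipate a genuine obstacle here; the only point requiring a moment of care is verifying that the chain gadget produces exactly the equality signature (rather than a nonzero scalar multiple of it), which follows because at every nonzero assignment each $=_4$ evaluates to $1$.
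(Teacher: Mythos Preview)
Your proposal is correct and takes essentially the same approach as the paper: both use the identification $\CSP^2(\mathcal{F})\eqredT\Holant(\{=_2,=_4,=_6,\ldots\}\mid\mathcal{F})$ and then realize each $=_{2k}$ from $=_4$. Your explicit chain of $k-1$ copies of $=_4$ is exactly the unrolling of the paper's one-line induction (connect $=_{2(k-1)}$ and $=_4$ via an edge to get $=_{2k}$); the remainder of your write-up just spells out the assembly of $\Omega$ and the correctness check that the paper leaves implicit.
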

\begin{proof}
    We have shown that $\CSP^2(\mathcal{F})\eqredT \Holant(\{=_2,=_{4},=_{6},...\}\myvert \mathcal{F})$.

    For $k>2$, supposing that $=_{2(k-1)}$ is realizable, we can obtain $=_{2k}$ by connecting $=_{2(k-1)}$ and $=_4$ via an edge. Thus $\CSP^2(\mathcal{F})\redT \Holant(\mathcal{F}\cup\{=_4\})$.
\end{proof}

Sometimes we need to reduce from $\CSP$ to $\CSP^2$.
\begin{lemma}
\label{lemma:csp-to-csp2}
    Let $\mathcal{F}=\{F_1,...,F_r\}$ be a set of functions. And for each $i\in [r]$, there is a function $f_i$ of arity $m_i$ such that for all $x_1,...,x_{2m_i}\in\{0,1\}$,
    \begin{equation*}
        F_i(x_1,...,x_{2m_i})=f_i(x_1,...,x_{m_i})\prod_{j=1}^{m_i}g(x_j,x_{j+m_i}),
    \end{equation*}
    where $g$ is the binary equality function $=_2$. Then $\CSP(\{f_1,...,f_r\})\redT \CSP^2(\mathcal{F})$.
\end{lemma}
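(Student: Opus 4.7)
The plan is to give a direct reduction that transforms each $\CSP$ instance into a $\CSP^2$ instance of the same value, by exploiting the fact that each $F_i$ factors as $f_i$ on its first $m_i$ inputs times a set of binary equalities that pair each of those inputs with one of the last $m_i$ inputs. The key observation is that when we feed the same variable into position $j$ and position $j+m_i$, every factor $g(x_j,x_{j+m_i})=g(y,y)=1$ is automatically satisfied, so $F_i$ collapses to $f_i$ evaluated on the shared variables.

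Concretely, given an instance $I$ of $\CSP(\{f_1,\ldots,f_r\})$ with variable set $V=\{x_1,\ldots,x_n\}$ and constraints $\{(f_{i_\ell},\mathbf{y}^{(\ell)})\}$ where $\mathbf{y}^{(\ell)}=(y_1^{(\ell)},\ldots,y_{m_{i_\ell}}^{(\ell)})$ is a tuple from $V$, I would build an instance $I'$ of $\CSP^2(\mathcal{F})$ on the same variable set $V$ by replacing each constraint $(f_{i_\ell},\mathbf{y}^{(\ell)})$ with
\begin{equation*}
    \bigl(F_{i_\ell},\,(y_1^{(\ell)},\ldots,y_{m_{i_\ell}}^{(\ell)},y_1^{(\ell)},\ldots,y_{m_{i_\ell}}^{(\ell)})\bigr).
\end{equation*}
Each variable $x \in V$ now occurs exactly twice as often in $I'$ as in $I$, so its multiplicity is a multiple of $2$ and $I'$ is a valid $\CSP^2(\mathcal{F})$ instance.

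Finally, I would verify $Z(I)=Z(I')$ by fixing any assignment $\sigma:V\to\{0,1\}$ and noting that the value contributed by the new constraint is
\begin{equation*}
    F_{i_\ell}\bigl(\sigma(\mathbf{y}^{(\ell)}),\sigma(\mathbf{y}^{(\ell)})\bigr)
    = f_{i_\ell}\bigl(\sigma(\mathbf{y}^{(\ell)})\bigr)\prod_{j=1}^{m_{i_\ell}} g\bigl(\sigma(y_j^{(\ell)}),\sigma(y_j^{(\ell)})\bigr)
    = f_{i_\ell}\bigl(\sigma(\mathbf{y}^{(\ell)})\bigr),
\end{equation*}
since $g(a,a)=1$. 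Thus $F_{I'}(\sigma)=F_I(\sigma)$ for every $\sigma$, summing gives $Z(I)=Z(I')$, and the construction is clearly polynomial-time, yielding $\CSP(\{f_1,\ldots,f_r\})\redT \CSP^2(\mathcal{F})$. There is no genuine obstacle here; the only thing to be careful about is bookkeeping the multiplicities so that the resulting instance truly satisfies the read-$2$ constraint, which is handled automatically by repeating each argument tuple exactly once.
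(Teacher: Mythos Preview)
Your proposal is correct and is essentially identical to the paper's own proof: the paper constructs $I'=\{(F_i,\mathbf{x},\mathbf{x})\mid (f_i,\mathbf{x})\in I\}$ and observes that $I$ and $I'$ define the same function, hence $Z(I)=Z(I')$. Your write-up simply fills in the routine verifications (the equality factors collapse and the multiplicities double) that the paper leaves implicit.
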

\begin{proof}
    Given an instance $I$ of $\CSP(\{f_1,...,f_r\})$, we construct an instance $I'$ of $\CSP^2(\mathcal{F})$:
    \begin{equation*}
        I'=\{(F_i,\mathbf{x},\mathbf{x})\myvert (f_i,\mathbf{x})\in I\}.
    \end{equation*}
    Then $I$ and $I'$ define the same function. Therefore, $Z(I)=Z(I')$.
\end{proof}

Cai, Lu and Xia \cite{CLX2014} proved a dichotomy for complex-weighted $\CSP$ over the Boolean domain. Before introducing the dichotomy, we need to define two tractable classes of functions.

\begin{mydef}
    The \emph{support} of an $n$-ary function $f$, denoted by $\supp(f)$, is the set $\{\mathbf{x}\in\mathbb{Z}_2^n\myvert f(\mathbf{x})\neq 0\}$.
\end{mydef}

A Boolean relation is \emph{affine} if it is the set of solutions to a system of linear equations over the field $\mathbb{Z}_2$. We say that $f$ has affine support if its support is affine.

\begin{mydef}
    A function $f$ of arity $n$ is \emph{affine} if its support is affine and there is a constant $\lambda\in\mathbb{C}$ such that for all $\mathbf{x}\in\supp(f)$
    \begin{equation*}
        f(\mathbf{x})=\lambda \cdot i^{Q(\mathbf{x})},
    \end{equation*}
    where $i=\sqrt{-1}$ and $Q$ is a homogeneous quadratic polynomial
    \begin{equation*}
        Q(x_1,...,x_n)=\sum_{i=1}^{n}a_ix_i^2+2\sum_{1\leq i<j\leq n} b_{ij}x_ix_j
    \end{equation*}
    with $a_i\in \mathbb{Z}_4$ and $b_{ij}\in\{0,1\}$. We use $\atype$ to denote the set of all affine functions.
\end{mydef}

In particular, if $f\in\atype$ is non-negative and not identically zero, then it has affine support and range $\{0,\lambda\}$ for some $\lambda>0$. Non-negative affine functions are also called \emph{pure affine} in \cite{DGJ2009}.

Let $\neq_2$ denote the binary disequality function $[0,1,0]$.
\begin{mydef}
    A function $f$ is of \emph{product type} if it can be expressed as a product of unary functions, binary functions of the form $=_2$ and $\neq_2$ (on not necessarily disjoint subsets of variables). We use $\ptype$ to denote the set of all functions of product type.
\end{mydef}

\begin{thm}[\cite{CLX2014}]
\label{thm:csp-dichotomy}
    Let $\mathcal{F}$ be a set of complex-valued functions. Then the problem $\CSP(\mathcal{F})$ is computable in polynomial time if $\mathcal{F}\subseteq\atype$ or $\mathcal{F}\subseteq\ptype$. Otherwise $\CSP(\mathcal{F})$ is $\sP$-hard.
\end{thm}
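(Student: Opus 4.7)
The plan is to split the proof into the easy tractability direction and the hardness direction, with the bulk of the effort on hardness. For tractability under $\mathcal{F}\subseteq\ptype$, observe that every instance of $\CSP(\mathcal{F})$ decomposes into independent connected components in the auxiliary graph whose edges come from the binary $=_2$ or $\neq_2$ factors of each constraint; each component is determined (modulo the disequalities) by a single binary choice and contributes a scalar, so $Z(I)$ factors as a product of small sums. For tractability under $\mathcal{F}\subseteq\atype$, the support of $F_I$ is the intersection of affine subspaces, hence itself an affine coset $S$ computable by Gaussian elimination; on $S$ the product of constraint values has the form $\lambda\cdot i^{Q(\mathbf{x})}$ for a homogeneous quadratic $Q$ over $\mathbb{Z}_4$, and the sum $\sum_{\mathbf{x}\in S} i^{Q(\mathbf{x})}$ is a Gauss sum whose closed-form value is obtained by diagonalizing $Q$.

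For hardness, I would proceed in three phases. First, establish a pinning lemma: if $\mathcal{F}$ lies outside both classes, then the constant unary signatures $[1,0]$ and $[0,1]$ are realizable up to $\eqredT$. This follows from a polynomial-interpolation argument that uses tensor powers of a witnessing $f\in\mathcal{F}\setminus(\atype\cup\ptype)$ to produce unary gadgets with enough distinct values to simulate pinning. Second, reduce to the binary case: with pinning in hand, all derivatives $f^{x_i=0}$ and $f^{x_i=1}$ are freely available, and both $\atype$ and $\ptype$ are closed under pinning, so iteratively pinning inputs of a witnessing $f$ of arity $\geq 3$ (after reordering inputs via \autoref{lemma:function-permutation}) must at some stage produce a binary function $g$ that is still outside both tractable classes.

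Third, classify all binary functions $g\notin\atype\cup\ptype$ and prove $\sP$-hardness for each. A binary $g$ lies in $\ptype$ precisely when, up to scaling of its two wires by unary factors, it equals $=_2$, $\neq_2$, or has rank one; it lies in $\atype$ precisely when its support is an affine line and its nonzero values are common fourth-root multiples of a single complex scalar. For any remaining $g$, the problem $\CSP(\{g\})$ encodes complex-weighted graph homomorphism with matrix $g$, for which a direct $\sP$-hardness proof can be given by reduction from counting problems such as the permanent or \#H-colourings. The main obstacle is the boundary region between $\atype$ and $\ptype$: a function that is almost product type or almost affine can collapse back into a tractable class after a holographic twist by a diagonal or orthogonal matrix, and one must show that no such twist applies. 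The complex phases are the central subtlety here, since unexpected Gauss-sum cancellations can hide tractability and are exactly what force the affine class to appear explicitly in the dichotomy.
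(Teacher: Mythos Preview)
The paper does not prove this theorem: it is quoted from \cite{CLX2014} as a known dichotomy for Boolean $\CSP$ with complex weights and is used throughout as a black box. There is therefore no proof in the present paper to compare your proposal against.

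That said, your hardness sketch has a genuine gap in the arity-reduction step. You argue that because $\atype$ and $\ptype$ are closed under pinning, iteratively pinning a witness $f\notin\atype\cup\ptype$ must eventually yield a binary $g\notin\atype\cup\ptype$. Closure under pinning runs in the wrong direction for this: it says that if $f$ lies in the class then so does every pinning, not the converse. It is entirely possible that $f\notin\ptype$ while $f^{x_i=0},f^{x_i=1}\in\ptype$ for every $i$; the ternary even-parity function $[1,0,1,0]$ is already such an example for $\ptype$ alone, and nothing in your outline rules out similar behaviour relative to $\atype\cup\ptype$. The actual proof in \cite{CLX2014} handles arity reduction with considerably more structure, using not just $[1,0],[0,1]$ but also $[1,1]$ (free in $\CSP$ via the unary equality) together with a case analysis on affine supports; the present paper imports and adapts exactly that machinery in \autoref{lemma:CLX-non-ptype} and \autoref{lemma:non-ptype}, where you can see how delicate the reduction to small arity really is.
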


\subsection{Interpolation}
Let $\alpha,\beta\in\mathbb{C}$ be two nonzero complex numbers and $\alpha/\beta$ is not a root of unity. Let $g=[a,b]$ be an unary signature with $ab\neq 0$. Then we can use $H=\begin{bmatrix}\alpha & 0 \\ 0 & \beta\end{bmatrix}$ and $g$ to interpolate all unary signatures in Holant problems:

\begin{lemma}
\label{lemma:interpolate-all-unary}
    For any complex-valued function set $\mathcal{F}$ containing $H$ and $g$,  $\Holant(\mathcal{F}\cup\{[x,y]\})\redT \Holant(\mathcal{F})$ for any $x,y\in\mathbb{C}$.
\end{lemma}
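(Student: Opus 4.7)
The plan is to apply the standard polynomial (Vandermonde) interpolation technique. First I would observe that the unary signature $[\alpha^n a,\beta^n b]$ is realizable from $\{H,g\}$ for every $n\geq 0$: chaining $n$ copies of the diagonal binary signature $H$ in a path and capping the free end with $g$ gives exactly $H^n g=[\alpha^n a,\beta^n b]$ by a direct matrix-product computation. Hence, given an instance $\Omega$ of $\Holant(\mathcal{F}\cup\{[x,y]\})$, for each $n\geq 0$ there is an instance $\Omega_n$ over $\mathcal{F}$, polynomial-time constructible in $|\Omega|$ and $n$, obtained by replacing every occurrence of $[x,y]$ in $\Omega$ with this gadget; by \autoref{lemma:realize-red} the value $\Holant_{\Omega_n}$ can be obtained via one oracle call to $\Holant(\mathcal{F})$.

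Now let $k$ be the number of occurrences of $[x,y]$ in $\Omega$. Group the $\{0,1\}$-assignments of $\Omega$ by how many of the $k$ distinguished unary vertices are evaluated at $0$; letting $c_i$ denote the aggregated contribution of all the other constraints when exactly $i$ of these vertices receive $0$, one obtains
\[
\Holant_\Omega=\sum_{i=0}^{k}c_i\,x^{\,i} y^{\,k-i},\qquad \Holant_{\Omega_n}=\sum_{i=0}^{k}c_i\,(\alpha^n a)^i(\beta^n b)^{k-i}=\sum_{i=0}^{k}u_i\,\lambda_i^{\,n},
\]
where $u_i:=c_i\,a^i b^{k-i}$ and $\lambda_i:=\alpha^i\beta^{k-i}$. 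Running $n=0,1,\ldots,k$ therefore yields a linear system of $k+1$ equations in the $k+1$ unknowns $u_0,\ldots,u_k$ with coefficient matrix $(\lambda_i^{\,n})_{0\leq n,\,i\leq k}$.

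The heart of the argument is the invertibility of this coefficient matrix, which is Vandermonde; it suffices that the $\lambda_i$ be pairwise distinct. For $i\neq j$ one has $\lambda_i/\lambda_j=(\alpha/\beta)^{\,i-j}$, which is not $1$ because $\alpha/\beta$ is by hypothesis not a root of unity. Hence the system is uniquely solvable in polynomial time over the algebraic numbers, yielding the $u_i$ and then, using $ab\neq 0$, the coefficients $c_i=u_i/(a^i b^{k-i})$; substituting back into the first identity above recovers $\Holant_\Omega$. The only real obstacle is establishing distinctness of the $\lambda_i$, and the hypothesis on $\alpha/\beta$ is tailored exactly for this step; the remaining ingredients (realizability of $H^n g$ and the solvability of a Vandermonde system) are routine.
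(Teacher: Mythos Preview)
Your proposal is correct and follows essentially the same approach as the paper: realize the family $H^n g=[\alpha^n a,\beta^n b]$, express $\Holant_\Omega$ as a degree-$k$ homogeneous polynomial in the unary values, and recover its coefficients by solving a Vandermonde system built from the evaluations $\Holant_{\Omega_n}$. Your write-up is in fact a bit more explicit than the paper's---you spell out the substitution $u_i=c_i a^i b^{k-i}$, the nodes $\lambda_i=\alpha^i\beta^{k-i}$, and the distinctness $\lambda_i/\lambda_j=(\alpha/\beta)^{i-j}\neq 1$---whereas the paper records the pairwise linear independence of $H^n g$ and $H^m g$ and then simply declares the resulting system ``Vandermonde''; the underlying argument is the same.
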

\begin{proof}
    With $H$ and $g$, we can realize a family of unary signatures $\{H^ng\myvert n\in\mathbb{N}\}$. For any $n<m\in\mathbb{N}$, $H^ng$ and $H^mg$ are linearly independent, since
    \begin{align*}
        \det\begin{bmatrix}
              H^ng & H^mg
            \end{bmatrix}=
        \det\begin{bmatrix}
              a\alpha^n & a\alpha^m \\
              b\beta^n & b\beta^m
            \end{bmatrix} = ab\alpha^n\beta^n(\beta^{m-n}-\alpha^{m-n})\neq 0.
    \end{align*}
    Let $\Omega$ be a signature grid of $\Holant(\mathcal{F}\cup\{[x,y]\})$ where $[x,y]$ appears $n$ times. Its Holant can be expressed as a polynomial in $x,y$: $\Holant_\Omega=\sum_{i=0}^{i=n} c_{i}x^iy^{n-i}$. We construct $n+1$ signature grids $\Omega_s\ (s=0,1,...,n)$ of $\Holant(\mathcal{F})$: Each $\Omega_s$ is obtained from replacing all the occurences of $[x,y]$ in $\Omega$ by $g_s=[x_s,y_s]=H^sg$. Then we can interpolate the $c_i$'s by solving the Vandermonde system:
    \begin{align*}
        \Holant_{\Omega_s}=\sum_{i=0}^{n} c_{i}x_s^iy_s^{n-i}, s=0,1,2,...,n.
    \end{align*}
\end{proof}

Using $H$, we can also interpolate all binary signatures of the form $[x,0,y]$, in a similar way as above:
\begin{lemma}
\label{lemma:interpolate-binary-eq}
    For any complex-valued function set $\mathcal{F}$ containing $H$, $\Holant(\mathcal{F}\cup\{[x,0,y]\})\redT \Holant(\mathcal{F})$ for any $x,y\in\mathbb{C}$. In particular, we can interpolate $[1,0]^{\otimes 2}$ and $[0,1]^{\otimes 2}$.
\end{lemma}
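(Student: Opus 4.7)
The plan is to adapt the interpolation argument of \autoref{lemma:interpolate-all-unary} to the binary setting. The key observation is that $[x,0,y]$ vanishes off the diagonal: each occurrence of it in a signature grid $\Omega$ contributes a factor of $x$ (when both incident edges are assigned $0$), a factor of $y$ (when both are assigned $1$), or makes the whole term vanish. Hence, if $[x,0,y]$ appears $n$ times in $\Omega$, the Holant can be written as a homogeneous bivariate polynomial
\[
\Holant_\Omega = \sum_{k=0}^{n} c_k\, x^k y^{n-k},
\]
whose coefficients $c_k$ depend only on the rest of $\Omega$.

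Next I would use $H$, which viewed as a binary signature equals $\alpha$ on input $(0,0)$, $\beta$ on $(1,1)$, and $0$ off the diagonal, to build a family of interpolating signatures. Attaching $s$ copies of $H$ in series along one dangling edge of $[x,0,y]$ preserves the diagonal structure and scales the two diagonal entries by $\alpha^s$ and $\beta^s$ respectively, yielding the realized signature $[\alpha^s x,\, 0,\, \beta^s y]$. Replacing every occurrence of $[x,0,y]$ in $\Omega$ by this realized signature produces instances $\Omega_0, \Omega_1, \ldots, \Omega_n$ of $\Holant(\mathcal{F})$. The corresponding Holants form a linear system in the unknowns $c_k$ whose coefficient matrix, after factoring $\beta^{sn}$ out of row $s$ and $x^k y^{n-k}$ out of column $k$, becomes the Vandermonde matrix with nodes $(\alpha/\beta)^s$.

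Since $\alpha/\beta$ is not a root of unity, the $n+1$ nodes $(\alpha/\beta)^s$ are distinct and the Vandermonde system is invertible, so the $c_k$ can be recovered in polynomial time and $\Holant_\Omega$ computed. The main subtlety I anticipate is the degenerate case $xy = 0$: the factoring above fails, but then the polynomial collapses to a single monomial $c_0 x^n$ or $c_n y^n$ and a single oracle call suffices (and when $x = y = 0$ the reduction is trivial because $[x,0,y]$ is identically zero). Finally, $[1,0]^{\otimes 2}=[1,0,0]$ and $[0,1]^{\otimes 2}=[0,0,1]$ are of the form $[x,0,y]$ with $y=0$ and $x=0$ respectively, giving the ``in particular'' statement.
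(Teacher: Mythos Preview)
Your overall plan matches the paper's (the paper itself gives no proof, only ``in a similar way as above''), but there is one genuine slip in how you build the interpolating instances. You propose to realize $[\alpha^s x,0,\beta^s y]$ by attaching $s$ copies of $H$ in series to one dangling edge of $[x,0,y]$, and then assert that substituting this gadget for every occurrence of $[x,0,y]$ in $\Omega$ yields instances $\Omega_s$ of $\Holant(\mathcal{F})$. It does not: the gadget still contains a copy of $[x,0,y]$, so each $\Omega_s$ is again an instance of $\Holant(\mathcal{F}\cup\{[x,0,y]\})$, and you have not escaped the extra signature at all.

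The fix is immediate and is precisely what the analogy with \autoref{lemma:interpolate-all-unary} dictates: drop $[x,0,y]$ entirely and replace each of its occurrences in $\Omega$ by the chain of $s$ copies of $H$, which realizes the binary signature $[\alpha^s,0,\beta^s]$ from $\mathcal{F}$ alone (for $s=0$ this is just an edge, i.e.\ $[1,0,1]$). Then
\[
\Holant_{\Omega_s}=\sum_{k=0}^{n} c_k\,\alpha^{sk}\beta^{s(n-k)},
\]
and your Vandermonde argument with nodes $(\alpha/\beta)^s$ goes through verbatim. With this corrected construction the factors $x^k y^{n-k}$ never enter the linear system, so the degenerate case $xy=0$ needs no separate treatment: once the $c_k$ are recovered you simply evaluate $\sum_k c_k x^k y^{n-k}$. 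Your identification of $[1,0]^{\otimes 2}=[1,0,0]$ and $[0,1]^{\otimes 2}=[0,0,1]$ for the ``in particular'' clause is correct.
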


In some cases we need to interpolate $=_4$.
\begin{lemma}[\cite{CF2016}]
\label{lemma:interpolate-=_4}
    Suppose $\mathcal{F}$ contains a complex-valued function $f$ of arity 4 with
    \begin{equation*}
        M_{[2]}(f)=\begin{bmatrix}
                       a & 0 & 0 & b \\
                       0 & 0 & 0 & 0 \\
                       0 & 0 & 0 & 0 \\
                       c & 0 & 0 & d
                     \end{bmatrix}.
    \end{equation*}
    where $\begin{bmatrix} a & b \\ c & d \end{bmatrix}$ has full rank. Then $\Holant(\mathcal{F}\cup\{=_4\})\redT \Holant(\mathcal{F})$.
\end{lemma}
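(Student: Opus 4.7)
The plan is to realize $=_4$ by chaining copies of $f$ and extracting the identity matrix via polynomial interpolation. The key observation is that $f$ has support exactly on inputs of the form $(s,s,t,t)$, where $f(s,s,t,t)=M(s,t)$ with $M=\begin{bmatrix}a&b\\c&d\end{bmatrix}$, while $=_4$ has the same block support with matrix $I$. First I would build a gadget $f^{(k)}$ by chaining $k$ copies of $f$ head-to-tail, identifying edges $3,4$ of the $i$-th copy with edges $1,2$ of the $(i+1)$-th copy; a direct calculation gives $f^{(k)}(s,s,t,t)=M^{k}(s,t)$ with $f^{(k)}$ vanishing off the block-diagonal support. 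Because $M$ has full rank, both eigenvalues $\lambda_1,\lambda_2$ of $M$ are nonzero.

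I would then split into cases by the spectral structure of $M$ so that no eigenvalue pathology breaks the interpolation. If $M=\lambda I$, then $f=\lambda\cdot{=_4}$ directly. If $M$ has distinct eigenvalues whose ratio is a primitive $m$-th root of unity, then $M^{m}=\lambda_1^{m}I$, so $f^{(m)}$ is a scalar multiple of $=_4$. Otherwise, either $M$ is diagonalizable with $\lambda_1/\lambda_2$ not a root of unity, or $M$ is a non-scalar Jordan block with repeated eigenvalue $\lambda$; both are handled by polynomial interpolation.

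For the first interpolation case, given an instance $\Omega$ of $\Holant(\mathcal{F}\cup\{=_4\})$ with $n$ occurrences of $=_4$, I would form $\Omega_k$ by replacing each $=_4$ with $f^{(k)}$. The spectral decomposition $M^{k}=\lambda_1^{k}P_1+\lambda_2^{k}P_2$ and multilinearity give
\[
\Holant_{\Omega_k}=\sum_{i=0}^{n} c_i\,\mu_i^{k},\qquad \mu_i=\lambda_1^{i}\lambda_2^{n-i},
\]
with coefficients $c_i$ depending only on $\Omega$; the identity $I=P_1+P_2$ yields $\Holant_\Omega=\sum_{i=0}^{n}c_i$. Since $\lambda_1/\lambda_2$ is not a root of unity, the $\mu_i$ are pairwise distinct, so the corresponding Vandermonde system is invertible and $\Holant_\Omega$ is recovered from $\Holant_{\Omega_1},\ldots,\Holant_{\Omega_{n+1}}$ in polynomial time. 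For the Jordan case, write $M=\lambda I+N$ with $N\neq 0$ and $N^{2}=0$; then $M^{k}=\lambda^{k}I+k\lambda^{k-1}N$, so $\Holant_{\Omega_k}/\lambda^{kn}$ is a polynomial in $k$ of degree at most $n$ whose constant term equals $\Holant_\Omega$, and standard polynomial interpolation finishes the proof. The main obstacle is the organization of the case analysis: the root-of-unity case forces several $\mu_i$ to coincide and would collapse the Vandermonde matrix, but the observation that $M^{m}$ is then a scalar matrix bypasses interpolation entirely, while the repeated-eigenvalue case requires switching from exponential-style to ordinary polynomial interpolation.
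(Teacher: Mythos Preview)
The paper does not prove this lemma at all: it is imported verbatim from \cite{CF2016} with a citation and no argument, so there is no in-paper proof to compare against. Your argument is correct and is precisely the standard one used in the Holant literature (and in the cited source): chain $k$ copies of $f$ so that the effective $2\times 2$ block becomes $M^{k}$, then recover the identity block $I$ (hence $=_4$) by interpolation, with the case split on the Jordan structure of $M$ exactly to avoid a degenerate Vandermonde system. The computations you sketch---$M_{[2]}(f^{(k)})$ has the same block-zero pattern with inner block $M^{k}$; the spectral expansion yields $\Holant_{\Omega_k}=\sum_{i=0}^{n}c_i(\lambda_1^{i}\lambda_2^{\,n-i})^{k}$ with $\Holant_\Omega=\sum_i c_i$; and in the nilpotent case $\Holant_{\Omega_k}/\lambda^{kn}$ is a degree-$\le n$ polynomial in $k$ with constant term $\Holant_\Omega$---are all correct. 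One small remark: in the root-of-unity case you should note that $\lambda_1^{m}\neq 0$ (which follows from $\det M\neq 0$), so the scalar multiple of $=_4$ is genuinely nonzero; you implicitly use this but it is worth stating.
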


\subsection{Holant* Problems}

We introduce several important classes of complex-valued functions over the Boolean domain. $\mathcal{U}$ is the set of all unary functions. And $\mathcal{T}$ is the set of functions of arity at most $2$. Given an element $\mathbf{u}=(u_1,u_2,...,u_n)\in\{0,1\}^n$, we use $\overline{\mathbf{u}}$ to denote the complement $(1-u_1,1-u_2,....,1-u_n)$. $\mathcal{E}$ denotes the function set
\begin{equation*}
    \left\{f\myvert \supp(f)\subseteq\{\mathbf{u},\overline{\mathbf{u}}\}\text{ for some }\mathbf{u}\in\{0,1\}^{\arity(f)}\right\}.
\end{equation*}
And $\mathcal{M}$ denotes the function set in which every function has support consisting of elements of Hamming weight at most one. For example, $[0,1,0,0]\in\mathcal{M}$ but $[1,0,1]\notin \mathcal{M}$.

A function set $\mathcal{F}$ is closed under tensor product if for any $f,g\in\mathcal{F}$ and any permutation $\pi$, $(f\otimes g)_\pi\in\mathcal{F}$. We use $\left<\mathcal{F}\right>$ to denote the least set containing $\mathcal{F}$ that is closed under tenor product. The set $\left<\mathcal{F}\right>$ is called the \emph{tensor closure} of $\mathcal{F}$. Let $\mathscr{T}$ denotes the tensor closure $\left<\mathcal{T}\right>$. Note that the set $\ptype$ of product-type functions, defined in \autoref{subsec:CSP}, is exactly the set $\left<\mathcal{E}\right>$. And for any $2\times 2$ matrix $M$, $M\ptype = \left<M\mathcal{E}\right>$.

$\Holant^*$ problems are the Holant problems where all unary functions are available. Given a function set $\mathcal{F}$, we use $\Holant^*(\mathcal{F})$ to denote the problem $\Holant(\mathcal{F}\cup\mathcal{U})$. Cai, Lu and Xia proved a dichotomy for $\Holant^*$ problems. The modifications of notations have been specified.

\begin{thm}[\cite{CLX2011b}]
\label{thm:Holant*}
    Let $\mathcal{F}$ be any set of complex-valued functions in Boolean variables. The problem $\Holant^*(\mathcal{F})$ is computable in polynomial time if $\mathcal{F}$ satisfies one of the following conditions:
    \begin{enumerate}
      \item $\mathcal{F}\subseteq \ttype$;
      \item $\mathcal{F}\subseteq H \ptype$ for some orthogonal matrix $H$;
      \item $\mathcal{F}\subseteq Z \ptype$ where $Z=\begin{bsmallmatrix} 1 & 1 \\ i & -i\end{bsmallmatrix}$;
      \item $\mathcal{F}\subseteq \left<Z \mathcal{M}\right>$ where $Z=\begin{bsmallmatrix} 1 & 1 \\ i & -i\end{bsmallmatrix}$ or $\begin{bsmallmatrix} 1 & 1 \\ -i & i\end{bsmallmatrix}$.
    \end{enumerate}
    In all other cases, $\Holant^*(\mathcal{F})$ is $\sP$-hard.
\end{thm}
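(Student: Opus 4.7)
The plan is to handle tractability and hardness separately. For tractability I would verify that each of the four listed classes admits a polynomial-time algorithm for $\Holant^*$; for hardness I would show that a function set outside all four classes allows one to realize, via free unaries and interpolation, a signature producing a known \#P-hard problem.

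On the tractable side, I would proceed class by class. (1) If $\mathcal{F}\subseteq\ttype$, every function decomposes as a tensor product of unary and binary signatures; replacing each vertex by its components turns the signature grid into a graph of maximum degree two, a disjoint union of paths and cycles whose Holant factorises into short matrix products and traces. (2) If $\mathcal{F}\subseteq H\ptype$ for orthogonal $H$, Valiant's Holant Theorem together with $[1,0,1]H^{\otimes 2}=[1,0,1]$ yields $\Holant^*(\mathcal{F})\eqredT \Holant^*(\ptype)$; since $\ptype=\left<\mathcal{E}\right>$, each constraint is built from $=_2,\neq_2$ and unaries, and a component-wise analysis evaluates the Holant by propagating equality/disequality classes along connected components and multiplying the accumulated unary weights. (3) If $\mathcal{F}\subseteq Z\ptype$, a holographic reduction by $Z$ sends the bipartite problem $\Holant({=_2}\,|\,\mathcal{F})$ to $\Holant({\neq_2}\,|\,\ptype)$, to which essentially the same component analysis applies, the $\neq_2$ simply flipping parities along gluing edges. (4) If $\mathcal{F}\subseteq\left<Z\mathcal{M}\right>$, after the $Z$-reduction each function becomes a tensor product of matchgate (Hamming weight $\leq 1$) signatures, and the Holant can be computed by the FKT Pfaffian algorithm on a matching-type auxiliary bipartite graph.

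For hardness, I would assume $\mathcal{F}$ violates all four conditions. If every $f\in\mathcal{F}$ has arity $\leq 2$ then $\mathcal{F}\subseteq\ttype$, a contradiction, so pick some $f\in\mathcal{F}$ of arity $n\geq 3$. Using the free availability of unaries, I would attach $[a,b]$ to all but two inputs of $f$ to manufacture a family of binary signatures realisable from $f$; by \autoref{lemma:interpolate-all-unary} and \autoref{lemma:interpolate-binary-eq} these in turn interpolate further unaries and diagonal binaries. The argument would then split on the algebraic structure of the signature matrices $M_{[r]}(f)$ for various bipartitions $[r]$: via carefully chosen unary substitutions, simultaneous diagonalisation, and suitable holographic transformations (by orthogonal matrices, by $Z$, and by combinations thereof) I would reduce to a \emph{symmetric} $\Holant^*$ instance, then invoke the known symmetric $\Holant^*$ dichotomy, or directly interpolate a concrete symmetric signature known to encode a \#P-hard problem such as counting perfect matchings or an explicit affine/product-type violator.

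The hardest part is this final case analysis. One must show that the \emph{only} obstructions to producing a \#P-hard gadget are precisely $\ttype$, $H\ptype$, $Z\ptype$ and $\left<Z\mathcal{M}\right>$; that is, no other algebraic niche can block the interpolation-plus-symmetrisation strategy. Ruling out the subtle exceptional cases, in which an apparently hard realisable signature turns out to belong to one of the matchgate or product-type families under a non-obvious holographic change of basis, is the delicate combinatorial core. I expect most of the work to go into locating the right holographic transformation to apply before symmetrising, since an ill-chosen basis can mask tractability inside what superficially looks like a hardness gadget.
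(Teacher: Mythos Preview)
This theorem is \emph{not} proved in the present paper; it is quoted verbatim from \cite{CLX2011b} and used as a black box. There is therefore no ``paper's own proof'' to compare your proposal against. The only fragment of the original argument that the paper reproduces is the arity-reduction lemma immediately following the theorem statement (Lemma~6.1 of \cite{CLX2011b}, specialised in \autoref{lemma:CLX-non-ptype}), which says that from any $F\notin\mathcal{F}'$ of large arity one can, using unaries, realise a $Q\notin\mathcal{F}'$ of strictly smaller arity, for each of the four tractable closures $\mathcal{F}'$.

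Your outline is broadly in the spirit of the actual proof in \cite{CLX2011b}: tractability is handled class by class essentially as you describe, and hardness proceeds by unary-driven arity reduction down to a bounded-arity core followed by case analysis. Two points where your sketch diverges from, or underestimates, what is actually needed: first, your plan to ``reduce to a symmetric $\Holant^*$ instance and invoke the symmetric dichotomy'' is not how \cite{CLX2011b} proceeds, and in general there is no clean symmetrisation step---the bounded-arity endgame is a direct (and lengthy) case analysis on arity-$2$ and arity-$3$ signatures under holographic normalisation, not a reduction to the symmetric theorem. Second, for class~(4) the algorithm is not FKT/Pfaffian; after the $Z$-transformation the $\mathcal{M}$-signatures have support of Hamming weight at most~$1$, and the Holant is evaluated by a direct combinatorial/linear-algebraic argument rather than via planar matchgates. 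These are not fatal to your plan as a roadmap, but they are exactly the ``delicate combinatorial core'' you flag, and the published proof spends most of its length there.
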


Lemma 6.1 in \cite{CLX2011b} played an important role in the proof of the dichotomy above:
\begin{lemma}[\cite{CLX2011b}]
    Let $\mathcal{F}'$ be any one of $\left<\mathcal{T}\right>$, or $\left<H\mathcal{E}\right>$, or $\left<Z\mathcal{E}\right>$, or $\left<Z\mathcal{M}\right>$. Let $r=3$ if $\mathcal{F'}=\left<\mathcal{T}\right>$, and $r=2$ in the other three cases. Suppose function $F\in\mathcal{F}-\mathcal{F}'$. If $r<\arity(F)$, then we can realize a function $Q$ by connecting $F$ with some unary functions, such that (1) $\Holant^*(\mathcal{F}\cup\{Q\})\eqredT \Holant^*(\mathcal{F})$; (2) $Q\notin\mathcal{F}'$ and (3) $r\leq \arity(Q)<\arity(F)$.
\end{lemma}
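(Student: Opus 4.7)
The plan is to take $Q$ to be obtained from $F$ by attaching a single unary signature $[a,b]$ to one of its inputs, so that $\arity(Q)=\arity(F)-1$. Properties (1) and (3) are then essentially free: every unary is available in $\Holant^*$, so $Q\in\sig(\mathcal{F}\cup\mathcal{U})$ and \autoref{lemma:realize-red} gives $\Holant^*(\mathcal{F}\cup\{Q\})\eqredT \Holant^*(\mathcal{F})$; and $\arity(F)>r$ yields $r\leq \arity(Q)<\arity(F)$. The entire content of the lemma is therefore property (2): exhibit some input $i$ and some unary $[a,b]$ for which $Q=\partial_{[a,b]}^{\{i\}}(F)\notin \mathcal{F}'$.

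I would establish (2) by contradiction. Suppose that for every $i\in[\arity(F)]$ and every $[a,b]$, the projection $\partial_{[a,b]}^{\{i\}}(F)=a\,F^{x_i=0}+b\,F^{x_i=1}$ lies in $\mathcal{F}'$. Taking $(a,b)=(1,0),(0,1)$ forces both $F^{x_i=0},F^{x_i=1}\in\mathcal{F}'$, and taking further generic $(a,b)$ forces each nonzero linear combination to lie in $\mathcal{F}'$. Since $F$ is completely reconstructed from the pair $(F^{x_i=0},F^{x_i=1})$ for any single $i$, the target is to glue these structural constraints into a global certificate that $F\in\mathcal{F}'$, contradicting the hypothesis.

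The argument then splits by the four choices of $\mathcal{F}'$. For $\mathcal{F}'=\langle\mathcal{T}\rangle$ with $r=3$: each single-variable projection of $F$ decomposes as a tensor product of functions of arity at most two on some partition of the surviving inputs; by a generic choice of $(a,b)$ the coarsest such tensor partition is canonical, and comparing the partitions arising from different choices of $i$ yields a consistent global partition of the inputs into blocks of size $\leq 2$, hence $F\in\langle\mathcal{T}\rangle$. For $\mathcal{F}'\in\{\langle H\mathcal{E}\rangle,\langle Z\mathcal{E}\rangle,\langle Z\mathcal{M}\rangle\}$ with $r=2$, I first perform the corresponding holographic transformation (Valiant's Holant Theorem), reducing the claim to $\langle\mathcal{E}\rangle$ or $\langle\mathcal{M}\rangle$. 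There the support of each function is rigid (antipodal pairs $\{\mathbf{u},\overline{\mathbf{u}}\}$ on each block, or Hamming-weight $\leq 1$ on each block), and consistency of the support patterns across projections again assembles into a product decomposition of $F$ itself.

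The principal obstacle will be this gluing step: showing that the block partitions (and, in the $\mathcal{E}$ and $\mathcal{M}$ cases, the compatible supports on each block) obtained from the various single-input projections are mutually consistent and lift to a genuine decomposition of $F$. I expect to handle this by using a generic $(a,b)$ to eliminate degenerate decompositions so that the coarsest tensor partition is unique, together with a direct check on the signature matrix $M_{[r]}(F)$ for the appropriate split $r$ that a partition common to all projections is already a partition of $F$. The $\langle Z\mathcal{M}\rangle$ case looks most delicate, because $\langle\mathcal{M}\rangle\subsetneq\langle\mathcal{E}\rangle$ and one has to re-establish the Hamming-weight $\leq 1$ condition after the tensor decomposition has been recovered; the extra information comes from the fact that the condition must hold on every projection, not merely one.
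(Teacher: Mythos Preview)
The paper does not prove this lemma at all; it is quoted from \cite{CLX2011b} (identified there as Lemma~6.1) and immediately specialized to the product-type case in \autoref{lemma:CLX-non-ptype}. There is therefore no in-paper proof to compare your proposal against.

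On the substance: your outline---attach a single unary to one input and argue by contradiction that if every such projection lies in $\mathcal{F}'$ then $F$ itself does---is indeed the shape of the argument in the original source, and your handling of (1) and (3) is correct. The part you flag as the ``principal obstacle'' really is where all the content lies, and your sketch there is still only a plan. Two points deserve care. First, in the $\langle\mathcal{T}\rangle$ case the phrase ``canonical coarsest tensor partition'' hides work: a tensor decomposition into irreducible factors is unique only up to scalars and ordering, and when some factors are degenerate (unary tensor products) the block structure can collapse in different ways for different $(a,b)$; you will need a genuine uniqueness statement for the \emph{irreducible} decomposition before you can compare partitions across different $i$. Second, your consistency-of-partitions argument only tells you that $F$ factors as a tensor product along some partition; you still need to show each factor has arity at most $2$ (respectively, lies in $\mathcal{E}$ or $\mathcal{M}$), and for that the generic linear combination $aF^{x_i=0}+bF^{x_i=1}$ must be exploited, not just the two pinnings. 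The original proof proceeds by a more direct case analysis on the signature-matrix rank structure rather than an abstract partition-gluing argument, which sidesteps these uniqueness issues.
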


We extract the part on non-product-type functions:
\begin{lemma}
\label{lemma:CLX-non-ptype}
    Suppose that $F$ is a function of arity $>2$ and $F\notin\ptype$. Then we can realize a function $Q$ by connecting $F$ with some unary functions, such that $Q\notin\ptype$ and $\arity(Q)=2$.
\end{lemma}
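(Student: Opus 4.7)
The plan is to prove the lemma by induction on $n = \arity(F)$, connecting a single unary function to one input at each step. Both the base case ($n = 3$) and the inductive step reduce to the following claim, after which \autoref{lemma:realize-red} ensures the composition of $n - 2$ successive pinnings realizes the final arity-$2$ function $Q \notin \ptype$.

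\textbf{Key Claim.} If $n \geq 3$ and $F \notin \ptype$, then there exist an input $i \in [n]$ and a unary function $u = [a,b]$ such that $\partial_u^{\{i\}}(F) = a F^{x_i=0} + b F^{x_i=1} \notin \ptype$.

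To prove the claim I argue its contrapositive: assume every pinning $\partial_{[a,b]}^{\{i\}}(F)$ lies in $\ptype$ for every index $i$ and every unary $[a,b]$, and deduce $F \in \ptype$. Fix $i$ and consider the pencil $L = \{a F^{x_i=0} + b F^{x_i=1}\}$, which projectively is either an irreducible line or a single point (the latter when $F^{x_i=0} \propto F^{x_i=1}$, a case immediately giving $F \in \ptype$). The set $\ptype = \langle\mathcal{E}\rangle$ is a finite union of irreducible algebraic subvarieties: for each partition $\mathcal{P} = \{B_1,\ldots,B_m\}$ of $[n]\setminus\{i\}$ and each tuple of ``axis strings'' $u_{B_j} \in \{0,1\}^{B_j}$, one gets a Segre-type variety $V_{u_{B_1}} \otimes \cdots \otimes V_{u_{B_m}}$, where $V_u = \{f : \supp(f) \subseteq \{u,\bar u\}\}$ is a two-dimensional subspace. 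By irreducibility, $L$ is contained in a single such component, giving simultaneous decompositions $F^{x_i=0} = \bigotimes_B f_B^0$ and $F^{x_i=1} = \bigotimes_B f_B^1$ with matching axes $u_B^\ast$ and $f_B^0, f_B^1 \in V_{u_B^\ast}$.

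A standard rank-one argument along the line $L$ then shows that at most one block $B_k$ can have non-proportional factors $f_{B_k}^0, f_{B_k}^1$. If none does, $F^{x_i=0}$ and $F^{x_i=1}$ are proportional and $F = g_i \otimes F^{x_i=0} \in \ptype$ with $g_i$ unary on $\{i\}$. Otherwise, merging $\{i\}$ into $B_k$ produces a candidate block $B_k \cup \{i\}$ along whose partition $F$ should decompose. The combined factor $h$ on this block has support inside $\{0,1\} \times \{u_k^\ast, \bar u_k^\ast\}$, and I would verify $h \in \mathcal{E}$ by checking that the complementary-support discipline of $\mathcal{E}$ forces $\supp(h) \subseteq \{(0, u_k^\ast), (1, \bar u_k^\ast)\}$ or the swapped variant; failure of this alignment would produce an explicit $[a,b]$ whose pinning escapes $\ptype$, contradicting the hypothesis.

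The main obstacle is this support-alignment check together with the need to reconcile the local mergings obtained by varying $i$ into a single global partition of $[n]$ witnessing $F \in \ptype$. This bookkeeping is the combinatorial heart of the argument underlying Lemma 6.1 of \cite{CLX2011b}; it works because the extremely strong hypothesis---every pinning, at every index, with every unary, lies in $\ptype$---cascades into a coherent global product decomposition, exploiting that $\mathcal{E}$-supports have size at most two and are paired by complementation.
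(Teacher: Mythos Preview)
The paper gives no proof of its own here: the lemma is stated immediately after Lemma~6.1 of \cite{CLX2011b} and is obtained by specializing that result to $\mathcal{F}' = \ptype$ (so $r = 2$) and iterating until the arity reaches $2$. The intended ``proof'' is literally the citation.

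Your algebraic-geometry reformulation is a legitimate alternative, and the setup is correct: $\ptype$ in a fixed arity is a finite union of irreducible cones over Segre varieties $\prod_B \mathbb{P}(V_{u_B})$, irreducibility of the pencil $L_i$ forces it into a single component, and the ruling property of lines on $(\mathbb{P}^1)^k$ gives $F = G_i \otimes h_i$ with $G_i \in \ptype$ on the frozen blocks and $h_i$ on $B_k \cup \{i\}$.

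Where the write-up goes astray is the finish. You try to force $h_i \in \mathcal{E}$ by a support-alignment check \emph{inside} $B_k \cup \{i\}$ and anticipate having to reconcile the block structures across different $i$ into one global partition. Neither is needed, and the first aims at too strong a target (only $h_i \in \ptype$ is required). If $\mathcal{P}_i$ is non-trivial, so $G_i$ has positive arity, choose any variable $j$ of $G_i$ and a constant $c\in\{0,1\}$ with $G_i^{x_j=c} \neq 0$; by hypothesis $F^{x_j=c} = G_i^{x_j=c} \otimes h_i \in \ptype$, and since $\ptype$ is closed under pinning one can strip off the nonzero $\ptype$ factor to conclude $h_i \in \ptype$, hence $F \in \ptype$. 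The only residual case is that $\mathcal{P}_i$ is the trivial one-block partition for \emph{every} $i$; then $\supp(F)$, projected off any single coordinate, lands in a complementary pair $\{u_i,\bar u_i\}$, and a short pigeonhole (using $n\geq 3$) forces $|\supp(F)| \leq 2$ with complementary support, i.e.\ $F \in \mathcal{E}$ directly. So the ``main obstacle'' dissolves once you pin a variable \emph{outside} $B_k \cup \{i\}$ rather than inside it.
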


\section{Decomposition}
\label{sec:decomposition}
In Holant problems, sometimes we are able to realize a function $F=f\otimes g$, but do not know how to realize the function $f$ directly, which can be technically beneficial. Fortunately, under certain conditions, if $F$ is realizable, then we may assume that $f$ is freely available.

In this section, we prefer to prove the lemmas in the CSP language. If not specified, the functions we discussed are over a fixed finite domain and take complex values.

Let $m$ be a positive integer. We use $f^{\otimes m}$ to denote the $m$-th tensor power of $f$. $f^{\otimes m}$ can be seen as $m$ copies of $f$: $f^{\otimes m}(\mathbf{x}_1,...,\mathbf{x}_m)=f(\mathbf{x}_1)\cdots f(\mathbf{x}_m)$. Let $I$ be a $\CSP$ instance that contains $m$ constraints: $(f,\mathbf{x}_1),(f,\mathbf{x}_2),....,(f,\mathbf{x}_m)$. We replace these $m$ tuples by one tuple $(f^{\otimes m},\mathbf{x}_1,\mathbf{x}_2,...,\mathbf{x}_m)$ and then obtain a new instance $I'$. It is easy to see that $Z(I)=Z(I')$.

\begin{lemma}
\label{lemma:multiple-to-single}
    For any function set $\mathcal{F}$ and function $f$, $\Holant(\mathcal{F}\cup\{f\})\redT \Holant(\mathcal{F}\cup\{f^{\otimes d}\})$ for all $d\geq 1$.
\end{lemma}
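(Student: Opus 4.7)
The plan exploits the CSP bunching identity stated just before the lemma: if $I$ contains $m$ constraints $(f,\mathbf{x}_1),\ldots,(f,\mathbf{x}_m)$, replacing them by the single constraint $(f^{\otimes m},\mathbf{x}_1,\ldots,\mathbf{x}_m)$ preserves $Z(I)$. Given an input instance $I$ of $\Holant(\mathcal{F}\cup\{f\})$ with $m$ occurrences of $f$, I would first take $d$ vertex-disjoint copies of $I$ to form an instance $I^{(d)}$ with $dm$ occurrences of $f$ and Holant value $Z(I)^d$. Since $d\mid dm$, I partition these $dm$ constraints into $m$ groups of size $d$ (taking one copy of $f$ from each disjoint copy of $I$ at the same ``position'') and bundle each group into one $f^{\otimes d}$-constraint. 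This converts $I^{(d)}$ into a legitimate instance of $\Holant(\mathcal{F}\cup\{f^{\otimes d}\})$, and a single oracle call returns $Z(I)^d$.

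It remains to recover $Z(I)$ from $Z(I)^d$. If the oracle returns $0$, output $0$; otherwise $Z(I)$ is one of the $d$ $d$-th roots of $Z(I)^d$ in the algebraic closure. To pin down the correct root I would construct a small auxiliary instance $J$ of $\Holant(\{f\})$ whose value $Z(J)$ is computable by brute force, with $f$-count $m_J$ satisfying $d\mid(m+m_J)$ and $Z(J)\neq 0$. The disjoint union $I\sqcup J$ can then be handled by the same disjoint-copies trick, and the oracle returns $Z(I\sqcup J)=Z(I)\cdot Z(J)$; dividing by the known $Z(J)$ yields $Z(I)$.

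The main obstacle is to establish the existence of such a $J$ for arbitrary complex-valued $f$. The simplest candidate family is $J_t$, consisting of $t$ copies of $(f,\mathbf{y})$ on a single shared variable tuple, with $Z(J_t)$ equal to the power sum $m_t=\sum_\mathbf{y} f(\mathbf{y})^t$; more elaborate gadgets combine several $J_{t_i}$ on fresh tuples, contributing $\prod_i m_{t_i}$ and $f$-count $\sum_i t_i$. Existence then reduces to showing that the additive subsemigroup of $\mathbb{Z}/d\mathbb{Z}$ generated by $\{t\geq 1:m_t\neq 0\}$ contains $-m\bmod d$. This is where I expect the non-constructive ingredient flagged in the paper to enter: a Newton's-identities argument guarantees that $\{t:m_t\neq 0\}$ is nonempty whenever $f\not\equiv 0$, and the residues forced on $m$ by a valid Holant instance (together with any parity constraints $\mathcal{F}$ imposes) must be reachable from the available power-sum residues. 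Once existence is secured, an appropriate $J$ is found by enumerating constant-size candidates, so the overall reduction is polynomial time.
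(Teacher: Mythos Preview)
Your opening move---take $d$ disjoint copies of $I$, bundle the $dm$ occurrences of $f$ into $m$ copies of $f^{\otimes d}$, and read off $Z(I)^d$---is correct. The gap is in the disambiguation step. Your auxiliary instances $J_t$, consisting of $t$ constraints $(f,\mathbf{y})$ on one shared tuple, are \emph{not} legal Holant instances for $t\neq 2$: Holant is read-twice $\CSP$, so every variable must occur exactly twice, whereas each coordinate of $\mathbf{y}$ occurs $t$ times in $J_t$. The power-sum/Newton's-identities line is therefore built on an illegal gadget and does not go through. Restricting $J$ to $\Holant(\{f\})$ rather than $\Holant(\mathcal{F}\cup\{f\})$ is independently too weak: with $f=[1,i]$, $\mathcal{F}=\{=_3\}$, $d=2$, every $\Holant(\{f\})$ instance has even $f$-count (a unary $f$ can only be paired with another $f$), and the unique nonzero one is empty; yet the grid consisting of one $=_3$ joined to three copies of $f$ has $f$-count $3$ and value $1-i\neq 0$. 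So the residue $1\bmod 2$ is reachable only with help from $\mathcal{F}$, and no $J$ of your prescribed form exists.

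The paper avoids root extraction entirely by induction on $d$. The non-constructive case split is: either every $\Holant(\mathcal{F}\cup\{f\})$ instance with nonzero value uses $f$ a multiple of $d$ times---then one outputs $0$ whenever $d\nmid m$ and otherwise bundles directly---or some fixed instance $I_0$ has $Z(I_0)\neq 0$ and $f$-count $p=qd+r$ with $0<r<d$. In the second case, replacing the first $qd$ occurrences of $f$ in $I_0$ by one $f^{\otimes qd}$ and the remaining $r$ occurrences, together with $d-r$ fresh dangling copies, by one $f^{\otimes d}$ yields an $(\mathcal{F}\cup\{f^{\otimes d}\})$-gate realizing $Z(I_0)\cdot f^{\otimes(d-r)}$; hence $\Holant(\mathcal{F}\cup\{f^{\otimes(d-r)}\})\redT\Holant(\mathcal{F}\cup\{f^{\otimes d}\})$, and the inductive hypothesis for $d-r<d$ finishes. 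Your strategy can be salvaged by letting $J$ range over all of $\Holant(\mathcal{F}\cup\{f\})$ and noting that the residues $m\bmod d$ achieved by nonzero instances form a subgroup of $\mathbb{Z}/d\mathbb{Z}$ (closed under addition by disjoint union, hence under negation by finiteness), but at that point you have essentially rediscovered the paper's dichotomy without the simplifying induction.
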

\begin{proof}
    Impose induction on $d$. Let $n$ denote the arity of $f$.

    The base case, $d=1$, is trivial. Now suppose that the conclusion holds for all $d<k\ (k\geq 2)$. In the problem $\Holant(\mathcal{F}\cup\{f^{\otimes k}\})$, we may assume that the functions $f^{\otimes (mk)}$ are freely available for integers $m>0$. There are two cases to consider:
    \begin{enumerate}[label=(\arabic*),leftmargin=*]
      \item There exists an instance $I$ of $\Holant(\mathcal{F}\cup\{f\})$ such that $Z(I)\neq 0$ and $f$ appears $p$ times where $p=qk+r\ (q\geq 0, 0<r<k)$. Let $C_1,...,C_p$ be the $p$ constraints that have the form $(f,\mathbf{x}_i)$. We replace the first $qk$ constraints by one tuple $C_1'=(f^{\otimes (qk)},\mathbf{x}_1,...,\mathbf{x}_{qk})$, and the last $r$ constraints by one tuple $C_2'=(f^{\otimes k}, \mathbf{x}_{qk+1},...,\mathbf{x}_{p},\mathbf{y})$ where $\mathbf{y}$ denotes a list of new distinct variables, of length $(k-r)n$. After the substitution, we get a function $F(\mathbf{x},\mathbf{y})$ where $\mathbf{x}$ denotes the variables of the original instance $I$. Every variable in $\mathbf{x}$ occurs twice, so by summing on them we can realize the following function:
          \begin{equation*}
            \sum_{\mathbf{x}}F(\mathbf{x},\mathbf{y})=\sum_{\mathbf{x}}F_I(\mathbf{x})f^{\otimes (k-r)}(\mathbf{y})=Z(I)f^{\otimes (k-r)}(\mathbf{y}).
          \end{equation*}
          Because $Z(I)\neq 0$, we have $\Holant(\mathcal{F}\cup\{f^{\otimes (k-r)}\})\redT \Holant(\mathcal{F}\cup\{f^{\otimes k}\})$. And by the induction hypothesis, $\Holant(\mathcal{F}\cup\{f\})\redT \Holant(\mathcal{F}\cup\{f^{\otimes (k-r)}\})$. Therefore, the conclusion holds.
      \item For all $I$ with $Z(I)\neq 0$, $f$ appears a multiple of $k$ times. Given an instance $I$ of $\Holant(\mathcal{F}\cup\{f\})$, we show how to compute $Z(I)$ with the help of the oracle for $\Holant(\mathcal{F}\cup\{f^{\otimes k}\})$. First we check whether the number $p$ of constraints containing $f$ is a multiple of $k$. If not, we simply output 0. Otherwise we replace all such constraints by one tuple $(f^{\otimes p},\mathbf{x})$ as in case (1), and then obtain an instance $I'$ of $\Holant(\mathcal{F}\cup\{f^{\otimes k}\})$. Clearly $Z(I)=Z(I')$, and we can compute $Z(I')$ by accessing the oracle.
    \end{enumerate}
    In either case, there exists a polynomial-time Turing reduction. This completes the induction.
\end{proof}

Now we are ready to prove a more general lemma.
\begin{lemma}
\label{lemma:decomposition-lemma}
    Let $\mathcal{F}$ be a set of functions, and $f,g$ be two functions. Suppose that there exists an instance $I$ of $\Holant(\mathcal{F}\cup\{f,g\})$ such that $Z(I)\neq 0$, and the number of occurrences of $g$ in $I$ is greater than that of $f$. Then
    \begin{equation*}
        \Holant(\mathcal{F}\cup\{f,f\otimes g\})\redT \Holant(\mathcal{F}\cup\{f\otimes g\}).
    \end{equation*}
\end{lemma}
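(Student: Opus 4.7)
The plan is to construct from $I$ an $(\mathcal{F}\cup\{f\otimes g\})$-gate that realizes a nonzero scalar multiple of $f^{\otimes m}$, where $m$ is the excess of occurrences of $g$ over $f$ in $I$. Once this gadget is in hand, the reduction will follow by combining \autoref{lemma:realize-red} with \autoref{lemma:multiple-to-single}.

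Let $a,b$ denote the numbers of occurrences of $f$ and $g$ in $I$, and set $m:=b-a\geq 1$. I would perform two simultaneous local rewritings on $I$. First, pair each of the $a$ copies of $f$ with a distinct $g$-constraint and replace every such pair, say $(f,\mathbf{u})$ and $(g,\mathbf{v})$, by the single constraint $(f\otimes g,\mathbf{u},\mathbf{v})$; this is a purely syntactic replacement. Second, for each of the $m$ remaining unpaired $g$-constraints $(g,\mathbf{y}_i)$, replace it by $(f\otimes g,\mathbf{x}_i,\mathbf{y}_i)$, where $\mathbf{x}_i$ is a freshly introduced tuple of $\arity(f)$ new variables. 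Denote the resulting object by $\Gamma$. All of its constraints now come from $\mathcal{F}\cup\{f\otimes g\}$; every original variable of $I$ still appears exactly twice; and the $m\cdot\arity(f)$ fresh variables $\mathbf{x}_1,\dots,\mathbf{x}_m$ each appear exactly once and therefore play the role of dangling edges of the gate.

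Each fresh variable tuple $\mathbf{x}_i$ occurs only inside the replacement constraint $(f\otimes g,\mathbf{x}_i,\mathbf{y}_i)$ as its $f$-input, while the $g$-input of that same constraint contributes the same factor $g(\mathbf{y}_i)$ as the $g$-constraint it replaced. Therefore the signature of $\Gamma$ factors as
\begin{equation*}
    h(\mathbf{x}_1,\dots,\mathbf{x}_m)=\left(\prod_{i=1}^m f(\mathbf{x}_i)\right)\cdot\sum_{\text{original variables of }I}F_I=Z(I)\cdot f^{\otimes m}(\mathbf{x}_1,\dots,\mathbf{x}_m),
\end{equation*}
and since $Z(I)\neq 0$, $\Gamma$ realizes a nonzero scalar multiple of $f^{\otimes m}$ using only constraints from $\mathcal{F}\cup\{f\otimes g\}$.

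Finally, since $f^{\otimes m}$ (up to a nonzero scalar, which does not affect polynomial-time Turing equivalence) belongs to $\sig(\mathcal{F}\cup\{f\otimes g\})$, \autoref{lemma:realize-red} gives $\Holant(\mathcal{F}\cup\{f\otimes g,f^{\otimes m}\})\eqredT\Holant(\mathcal{F}\cup\{f\otimes g\})$, while \autoref{lemma:multiple-to-single} gives $\Holant(\mathcal{F}\cup\{f,f\otimes g\})\redT\Holant(\mathcal{F}\cup\{f^{\otimes m},f\otimes g\})$. Chaining these two reductions produces the claim. Because every modification is an in-place replacement rather than a deletion, no issue arises with floating or shared edges between removed constraints; I expect the only subtle point to be verifying the factorization identity above, which becomes immediate once the roles of the fresh variables $\mathbf{x}_i$ and the original variables of $I$ are clearly separated.
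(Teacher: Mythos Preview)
Your proposal is correct and follows essentially the same approach as the paper's proof: pair up the $f$- and $g$-constraints into $f\otimes g$-constraints, pad the leftover $g$'s with fresh dangling $f$-inputs to realize $Z(I)\cdot f^{\otimes m}$, and then invoke \autoref{lemma:realize-red} and \autoref{lemma:multiple-to-single}. The only differences are notational (your $m$ is the paper's $r$, and the roles of the variable names $\mathbf{x}_i$ and $\mathbf{y}_i$ are swapped).
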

\begin{proof}
    For each pair of constraints $(f,\mathbf{x}_i)$ and $(g,\mathbf{x}_j)$ in the instance $I$, we replace them by one tuple $(f\otimes g,\mathbf{x}_i,\mathbf{x}_j)$. Let $(g,\mathbf{z}_1),...,(g,\mathbf{z}_r)$ be the unpaired constraints where $r>0$. Replace each tuple $(g,\mathbf{z}_i)$ by a new tuple $(f\otimes g,\mathbf{y}_i,\mathbf{z}_i)$ where $\mathbf{y}_i$ is a set of distinct variables that do not appear in $I$. After the substitution, we get a function $F(\mathbf{x},\mathbf{y})$ where $\mathbf{x}$ denotes the variables of $I$ and $\mathbf{y}=(\mathbf{y}_1,...,\mathbf{y}_r)$. As in the proof of \autoref{lemma:multiple-to-single}, we can realize the function $Z(I)f^{\otimes r}$ by summing on $\mathbf{x}$. Then the conclusion follows from \autoref{lemma:multiple-to-single} and \autoref{lemma:realize-red}:
    \begin{equation*}
        \Holant(\mathcal{F}\cup\{f,f\otimes g\})\redT \Holant(\mathcal{F}\cup\{f^{\otimes r},f\otimes g\})\redT \Holant(\mathcal{F}\cup\{f\otimes g\}).
    \end{equation*}
\end{proof}

Note that our proofs for the two lemmas above only show the \emph{existence} of polynomial-time Turing reductions, but do not produce such reductions \emph{constructively} for given function sets.

The condition of \autoref{lemma:decomposition-lemma} seems somewhat complicated. We can make it more stringent and hence, easier to apply. A function $f$ is \emph{vanishing} \cite{CGW2013}, if $Z(I)=0$ for every instance $I$ of the problem $\Holant(f)$.
\begin{corollary}
\label{coro:decomposition-non-vanishing}
    Let $\mathcal{F}$ be a set of functions, and $f,g$ be two functions. If $g$ is not vanishing, then
    \begin{equation*}
        \Holant(\mathcal{F}\cup\{f,f\otimes g\})\redT \Holant(\mathcal{F}\cup\{f\otimes g\}).
    \end{equation*}
\end{corollary}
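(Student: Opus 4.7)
The plan is to reduce \autoref{coro:decomposition-non-vanishing} to a direct invocation of \autoref{lemma:decomposition-lemma}. The hypothesis of the lemma requires an instance $I$ of $\Holant(\mathcal{F}\cup\{f,g\})$ with $Z(I)\neq 0$ in which $g$ appears strictly more often than $f$. The corollary's hypothesis (that $g$ is not vanishing) is tailor-made to supply such an $I$.

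Concretely, first I would invoke the definition of \emph{non-vanishing}: since $g$ is not vanishing, there exists a signature grid $J$ whose vertices are all labeled by $g$ such that $Z(J)\neq 0$. I would note that such a $J$ necessarily contains at least one copy of $g$ (an empty grid contributes trivially and is not a witness to non-vanishing in any nondegenerate sense; in particular, if $g$ has positive arity, one simply takes a single $g$-vertex with its dangling edges closed up in any way that gives nonzero Holant, or takes a pair of $g$-vertices when the arity forces even edge counts). Next, I would view $J$ as an instance of $\Holant(\mathcal{F}\cup\{f,g\})$ that happens to use zero copies of every function in $\mathcal{F}\cup\{f\}$. In this view, the count of $g$ in $J$ is strictly greater than the count of $f$ (which is $0$), and $Z(J)\neq 0$.

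With this $I=J$ in hand, the hypotheses of \autoref{lemma:decomposition-lemma} are satisfied verbatim, and it delivers
\begin{equation*}
    \Holant(\mathcal{F}\cup\{f,f\otimes g\})\redT \Holant(\mathcal{F}\cup\{f\otimes g\}),
\end{equation*}
which is exactly what we want. No further work is needed.

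There is essentially no obstacle: the entire content is packaging the non-vanishing hypothesis into the form required by the previous lemma. The only point worth a sentence of care is to justify that the witness grid $J$ for non-vanishing can be taken to contain at least one copy of $g$, so that the strict inequality ``occurrences of $g$ $>$ occurrences of $f$'' holds. This is immediate from the standard reading of the definition of vanishing given just before the corollary.
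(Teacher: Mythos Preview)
Your proposal is correct and matches the paper's intent: the paper states \autoref{coro:decomposition-non-vanishing} immediately after defining vanishing and presents it (without proof) as the direct specialization of \autoref{lemma:decomposition-lemma} obtained by taking a nonzero instance of $\Holant(g)$ as the witness $I$. Your only extra remark---that the witness grid must contain at least one copy of $g$---is the one point worth noting, and you handle it appropriately.
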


One more lemma concludes this section.

\begin{lemma}
\label{lemma:Holant-multiple}
    For positive integers $d_1,d_2,...,d_n$,
    \begin{equation*}
        \Holant(\mathcal{F}\cup\{f_1,f_2,...,f_n\})\redT \Holant(\mathcal{F}\cup\{f_1^{\otimes d_1},f_2^{\otimes d_2},...,f_n^{\otimes d_n}\}).
    \end{equation*}
\end{lemma}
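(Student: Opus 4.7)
The plan is to reduce this to \autoref{lemma:multiple-to-single} by a straightforward induction on $n$. The base case $n=1$ is exactly the statement of \autoref{lemma:multiple-to-single} applied to $f_1$ with exponent $d_1$, so no additional work is needed there.

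For the inductive step, assuming the conclusion holds for $n-1$ functions, the idea is to peel off the functions one at a time by absorbing the already-handled ones into the base set. Concretely, applying \autoref{lemma:multiple-to-single} with the function set $\mathcal{F} \cup \{f_2^{\otimes d_2}, \ldots, f_n^{\otimes d_n}\}$ and the function $f_1$ (with exponent $d_1$) yields
\begin{equation*}
    \Holant(\mathcal{F}\cup\{f_1, f_2^{\otimes d_2}, \ldots, f_n^{\otimes d_n}\}) \redT \Holant(\mathcal{F}\cup\{f_1^{\otimes d_1}, f_2^{\otimes d_2}, \ldots, f_n^{\otimes d_n}\}).
\end{equation*}
Then, treating $\mathcal{F}\cup\{f_1\}$ as the new base set and invoking the inductive hypothesis on the remaining $n-1$ functions $f_2, \ldots, f_n$, we obtain
\begin{equation*}
    \Holant(\mathcal{F}\cup\{f_1, f_2, \ldots, f_n\}) \redT \Holant(\mathcal{F}\cup\{f_1, f_2^{\otimes d_2}, \ldots, f_n^{\otimes d_n}\}).
\end{equation*}
Composing the two polynomial-time Turing reductions completes the induction.

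There is no real obstacle; \autoref{lemma:multiple-to-single} already does the only nontrivial work (the case-analysis on whether some nonzero instance uses a non-multiple of $d$ copies of $f$). The only subtlety to state cleanly is that when we invoke \autoref{lemma:multiple-to-single} for a given $f_i$, the function set it reduces across is allowed to contain arbitrary other functions, including the already-tensored $f_j^{\otimes d_j}$ for $j > i$ and the single-copy $f_j$ for $j < i$; this is immediate from its statement. Hence the iterated application is legitimate and the overall reduction is polynomial-time.
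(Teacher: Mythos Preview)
Your proposal is correct and is essentially the same as the paper's proof: both iterate \autoref{lemma:multiple-to-single}, replacing one $f_i$ by $f_i^{\otimes d_i}$ at a time while absorbing the other functions into the base set. The paper simply writes this as a chain of reductions rather than phrasing it as a formal induction on $n$.
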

\begin{proof}
    From \autoref{lemma:multiple-to-single} we see the reduction chain:
    \begin{align*}
      \Holant(\mathcal{F}\cup\{f_1,f_2,...,f_n\}) &\redT  \Holant(\mathcal{F}\cup\{f_1^{\otimes d_1},f_2,...,f_n\}) \\
      &\redT  \Holant(\mathcal{F}\cup\{f_1^{\otimes d_1},f_2^{\otimes d_2},...,f_n\}) \\
      & \cdots \\
      &\redT \Holant(\mathcal{F}\cup\{f_1^{\otimes d_1},f_2^{\otimes d_2},...,f_n^{\otimes d_n}\}).
    \end{align*}
\end{proof}

\section{When A Non-trivial Equality Function Appears}
\label{sec:non-unitary}

Let $\Holant^c(\mathcal{F})$ denote the problem $\Holant(\mathcal{F}\cup\{[1,0],[0,1]\})$. In this section, we will prove the following theorem:

\begin{thm}
\label{thm:Holantc-nontrivial-eq}
    Let $\lambda$ be any nonzero complex number that is not a root of unity. For any set $\mathcal{F}$ of complex-valued functions, $\Holant^c(\mathcal{F}\cup\{[1,0,\lambda]\})$ is computable in polynomial time if $\mathcal{F}\subseteq \ttype$ or $\mathcal{F}\subseteq \ptype$. Otherwise the problem is $\sP$-hard.
\end{thm}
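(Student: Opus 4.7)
The plan is to deduce tractability directly from the $\Holant^*$ dichotomy (\autoref{thm:Holant*}) and to prove hardness by ruling out every tractable class of \autoref{thm:Holant*} that could contain $[1,0,\lambda]$ other than $\ttype$ and $\ptype$, then lifting $\sP$-hardness from $\Holant^*$ back to $\Holant^c$ by interpolating all unaries. The function $[1,0,\lambda]$ lies in $\ttype$ (it has arity $2$) and in $\mathcal{E}\subseteq\ptype$ (its support $\{00,11\}$ is antipodal), so tractability is immediate: if $\mathcal F\subseteq\ttype$ or $\mathcal F\subseteq\ptype$, the enlarged set $\mathcal F\cup\{[1,0,\lambda]\}$ lies in Case~$1$ or in Case~$2$ with $H=I$ of \autoref{thm:Holant*}, and $\Holant^c$ is a subproblem of $\Holant^*$.

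For hardness, assume $\mathcal F\nsubseteq\ttype$ and $\mathcal F\nsubseteq\ptype$. The first step is to show that $\mathcal F\cup\{[1,0,\lambda]\}$ fails each of the remaining three tractable classes. Since $[1,0,\lambda]$ is non-degenerate (determinant $\lambda\neq 0$), membership in $H\ptype$, $Z\ptype$, or $\langle Z\mathcal M\rangle$ forces $[1,0,\lambda]$ itself into the binary layer $H\mathcal E$, $Z\mathcal E$, or $Z\mathcal M$. Parametrizing any complex orthogonal $H$ in the two standard forms $\bigl[\begin{smallmatrix}a&b\\-b&a\end{smallmatrix}\bigr]$ and $\bigl[\begin{smallmatrix}a&b\\b&-a\end{smallmatrix}\bigr]$ with $a^2+b^2=1$, a direct computation yields
\[
H^{\mathsf T}\begin{bmatrix}1&0\\0&\lambda\end{bmatrix}H=\begin{bmatrix}a^2+b^2\lambda & ab(1-\lambda)\\ ab(1-\lambda) & b^2+a^2\lambda\end{bmatrix}.
\]
The diagonal $\mathcal E$-form requires $ab(1-\lambda)=0$, hence $ab=0$ (as $\lambda\neq 1$), so $H$ is a signed permutation and $H\ptype=\ptype$; the antidiagonal $\mathcal E$-form demands $a^2+b^2\lambda=0=b^2+a^2\lambda$, which together force $\lambda^2=1$, contradiction. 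For $Z=\bigl[\begin{smallmatrix}1&1\\i&-i\end{smallmatrix}\bigr]$ one computes
\[
Z^{-1}\begin{bmatrix}1&0\\0&\lambda\end{bmatrix}(Z^{-1})^{\mathsf T}=\tfrac14\begin{bmatrix}1-\lambda & 1+\lambda\\ 1+\lambda & 1-\lambda\end{bmatrix};
\]
the diagonal/antidiagonal $\mathcal E$-forms and the $\mathcal M$-support constraint all require $\lambda=\pm 1$, and the second choice of $Z$ gives the same matrix by a symmetric calculation. Hence $\mathcal F\cup\{[1,0,\lambda]\}$ lies in no tractable class, and \autoref{thm:Holant*} yields $\sP$-hardness of $\Holant^*(\mathcal F\cup\{[1,0,\lambda]\})$.

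To transfer this hardness to $\Holant^c$, I would apply \autoref{lemma:interpolate-all-unary} with the diagonal matrix $H=\mathrm{diag}(1,\lambda)$, whose eigenvalue ratio $1/\lambda$ is not a root of unity. This reduces the task to exhibiting a single realizable unary $g=[a,b]$ with $ab\neq 0$. By \autoref{lemma:interpolate-binary-eq}, $=_2$ and every $[x,0,y]$ are already available, which permits free identification of variables, but no diagonal binary combined with $[1,0]$ and $[0,1]$ alone can yield a mixed unary. The natural approach is to exploit $\mathcal F\nsubseteq\ptype$: take a smallest-arity $f\in\mathcal F\setminus\ptype$ (necessarily of arity $\geq 3$) and pin all but one of its inputs with $[1,0],[0,1]$, optionally first inserting $[1,0,\lambda]$-edges to redistribute the weights of the two halves $\{x_i=0\}$ and $\{x_i=1\}$ of $\supp(f)$.

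The principal obstacle is showing that some such construction must produce a unary $[a,b]$ with both entries nonzero. In the pathological scenario where every pinning-plus-gadget yields only $[a,0]$ or $[0,b]$, the supports of every function in $\mathcal F$ become antichains in the hypercube under every partial assignment, a very rigid bipartition condition which I expect to force $\mathcal F\subseteq\ptype$ and thus contradict the hypothesis. Ruling it out cleanly will likely combine \autoref{lemma:CLX-non-ptype} applied after bootstrapping one non-degenerate unary (perhaps drawn from a function in $\mathcal F\setminus\ttype$ whose higher arity cannot be a $\mathcal T$-tensor) with the decomposition tools of \autoref{sec:decomposition}, in particular \autoref{lemma:decomposition-lemma} and \autoref{coro:decomposition-non-vanishing}, to peel off any trivial tensor factor of $f$ and isolate its non-product core. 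Carrying out this combinatorial-structural analysis is where the bulk of the technical work will reside.
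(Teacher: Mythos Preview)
Your tractability argument and your elimination of the $H\ptype$, $Z\ptype$, and $\langle Z\mathcal M\rangle$ cases for $[1,0,\lambda]$ are essentially identical to the paper's \autoref{lemma:Holant*-nontrivial-eq}, and your plan for the case where a mixed unary $[a,b]$ with $ab\neq 0$ is realizable matches the paper's \autoref{thm:Holantc-nontrivial-eq-parity}. So far so good.

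The genuine gap is the ``pathological scenario'' you flag at the end, and it is not pathological at all: it is the entire Parity case, and your conjectured resolution is false. If every function in $\mathcal F$ satisfies the Parity condition (support contained in the even-weight or the odd-weight half of the cube), then so does every function realizable from $\mathcal F\cup\{[1,0],[0,1],[1,0,\lambda]\}$, because Parity is preserved under tensor product and under contracting two inputs via an edge or via any $[x,0,y]$. Consequently no gadget in $\Holant^c(\mathcal F\cup\{[1,0,\lambda]\})$ can ever realize a unary $[a,b]$ with $ab\neq 0$, and your reduction from $\Holant^*$ cannot be carried out. Yet Parity does \emph{not} force $\mathcal F\subseteq\ptype$: the ternary function $[1,0,1,0]$ (indicator of $x_1\oplus x_2\oplus x_3=0$) satisfies Parity and is neither in $\ptype$ nor in $\ttype$. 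Your antichain reformulation does not help; what you are really detecting is Parity, and Parity-satisfying non-$\ptype$ functions abound.

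The paper treats this case by an entirely different route that bypasses $\Holant^*$ altogether. Under Parity, \autoref{lemma:parity-T-P} guarantees that any $f\in\mathcal F\setminus\ptype$ has arity $\geq 3$, and then \autoref{lemma:Holantc-nontrivial-eq-single} proves hardness directly: either $f^2\notin\ptype$, in which case one builds a $4$-ary gadget $G$ with a specific $4\times4$ signature matrix and either extracts an irreducible ternary function (handled by the triangle construction of \autoref{lemma:Holant-nontrivial-eq-ternary}, which produces $[1,0,1,0]$ and invokes \autoref{lemma:{[1,0,1,0],[1,0,lambda]}}) or interpolates $=_4$ via \autoref{lemma:interpolate-=_4} and reduces to $\CSP$; or $f^2\in\ptype$, so $f$ has affine support, and \autoref{lemma:non-ptype} provides a small explicit non-$\ptype$ function to which the same machinery applies. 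None of this is accessible from the $\Holant^*$ dichotomy as a black box, so your proposed proof is incomplete as it stands.
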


The conclusion still holds if we remove the unary functions $[1,0]$ and $[0,1]$:
\begin{thm}
\label{thm:Holant-nontrivial-eq}
    Let $\lambda$ be any nonzero complex number that is not a root of unity. For any set $\mathcal{F}$ of complex-valued functions, $\Holant(\mathcal{F}\cup\{[1,0,\lambda]\})$ is computable in polynomial time if $\mathcal{F}\subseteq \ttype$ or $\mathcal{F}\subseteq \ptype$. Otherwise the problem is $\sP$-hard.
\end{thm}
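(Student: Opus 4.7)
The plan is to derive this theorem from Theorem \ref{thm:Holantc-nontrivial-eq} by showing that the unary signatures $[1,0]$ and $[0,1]$ can be dispensed with when $[1,0,\lambda]$ is present. For the tractable cases, observe that $[1,0,\lambda]$ already lies in both $\ttype$ (since its arity is $2$) and $\ptype$ (since it factors as $(=_2)(x,y)\cdot[1,\lambda](x)$, a product of a binary equality with a unary). Hence if $\mathcal{F}\subseteq\ttype$ or $\mathcal{F}\subseteq\ptype$, then $\mathcal{F}\cup\{[1,0,\lambda]\}$ lies in the same class, and polynomial-time solvability follows from Theorem \ref{thm:Holant*} in the first case, and from Theorem \ref{thm:csp-dichotomy} (together with the fact that Holant is a restriction of \#CSP) in the second.

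For the hardness direction, the key observation is that $[1,0,\lambda]$, viewed as a $2\times 2$ matrix, is precisely the diagonal matrix $\begin{bsmallmatrix}1 & 0 \\ 0 & \lambda\end{bsmallmatrix}$, which satisfies the hypothesis of Lemma \ref{lemma:interpolate-binary-eq} because $1/\lambda$ is not a root of unity. Applying that lemma with $(x,y)=(1,0)$ and then $(x,y)=(0,1)$ shows that $[1,0,0]=[1,0]^{\otimes 2}$ and $[0,0,1]=[0,1]^{\otimes 2}$ may be treated as freely available signatures. Lemma \ref{lemma:multiple-to-single}, applied with $d=2$ first to $f=[1,0]$ and then to $f=[0,1]$, then produces the reduction chain
\begin{align*}
\Holant^c(\mathcal{F}\cup\{[1,0,\lambda]\})
&\redT \Holant(\mathcal{F}\cup\{[1,0,\lambda],[1,0]^{\otimes 2},[0,1]^{\otimes 2}\}) \\
&\redT \Holant(\mathcal{F}\cup\{[1,0,\lambda]\}).
\end{align*}
Therefore, whenever $\mathcal{F}$ is neither contained in $\ttype$ nor in $\ptype$, Theorem \ref{thm:Holantc-nontrivial-eq} forces the leftmost problem to be $\sP$-hard, and hardness transfers to the rightmost one, which is our target.

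The main subtle point is that interpolation only delivers the tensor squares $[1,0]^{\otimes 2}$ and $[0,1]^{\otimes 2}$, not the genuine unaries $[1,0]$ and $[0,1]$ demanded by Theorem \ref{thm:Holantc-nontrivial-eq}. The bridge is the non-constructive Lemma \ref{lemma:multiple-to-single}, whose argument splits into two cases according to whether every nonzero-weight instance of $\Holant^c(\mathcal{F}\cup\{[1,0,\lambda]\})$ happens to use each unary an even number of times. I expect this to be the only conceptually nontrivial step; once it is in place, the remainder is a routine composition of interpolation, tensor-power, and polynomial-time Turing reductions, and no further structural analysis of $\mathcal{F}$ is required beyond what Theorem \ref{thm:Holantc-nontrivial-eq} already provides.
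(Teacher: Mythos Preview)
Your proposal is correct and follows essentially the same route as the paper: interpolate $[1,0]^{\otimes 2}$ and $[0,1]^{\otimes 2}$ from $[1,0,\lambda]$ via \autoref{lemma:interpolate-binary-eq}, then pass to the unaries $[1,0]$ and $[0,1]$ using the tensor-power reduction (\autoref{lemma:multiple-to-single}, packaged in the paper as \autoref{lemma:Holant-multiple}), and finally invoke \autoref{thm:Holantc-nontrivial-eq}. Your added remarks on the tractable side are fine but not needed, since $\Holant(\mathcal{F}\cup\{[1,0,\lambda]\})$ trivially reduces to $\Holant^c(\mathcal{F}\cup\{[1,0,\lambda]\})$ and inherits its tractability from \autoref{thm:Holantc-nontrivial-eq}.
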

\begin{proof}
    By \autoref{lemma:interpolate-binary-eq}, we can interpolate $[1,0]^{\otimes 2}$ and $[0,1]^{\otimes 2}$ using $[1,0,\lambda]$. Then by \autoref{lemma:Holant-multiple}, $\Holant^c(\mathcal{F}\cup\{[1,0,\lambda]\})\redT \Holant(\mathcal{F}\cup\{[1,0,\lambda]\})$.
\end{proof}

Throughout this section, $\lambda$ is a nonzero complex number that is not a root of unity, and all the functions we discussed are complex-valued. We use $\mathcal{F}$ to denote a function set.

\subsection{Parity Condition}

A function has \emph{even} (\emph{odd}) support if the support does not contain any elements of odd (even) Hamming weight.

\begin{mydef}[Parity Condition]
    A function satisfies the \emph{Parity} condition if it has even or odd support. A function set $\mathcal{F}$ satisfies the Parity condition if every function in $\mathcal{F}$ does.
\end{mydef}

\begin{lemma}
\label{lemma:odd-irreducible-ptype}
    Let $f$ be a function of odd arity $n$, with support $\{\mathbf{u},\overline{\mathbf{u}}\}$ for some $\mathbf{u}\in\{0,1\}^n$. Then $[x,y]\in\sig(f)$ for some $xy\neq 0$.
\end{lemma}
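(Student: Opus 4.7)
The plan is to realize $[x,y]$ from a single copy of $f$ by forming a gate with one dangling edge and $(n-1)/2$ self-loops attached to that copy, where each self-loop pairs two inputs of $f$ sharing the same coordinate in $\mathbf{u}$. Write $a=f(\mathbf{u})$ and $b=f(\overline{\mathbf{u}})$, both nonzero. A self-loop between inputs $j$ and $j'$ sums over the shared variable $y_j=y_{j'}$, so it kills any term in which the input to $f$ disagrees between coordinates $j$ and $j'$; since the only nonzero inputs to $f$ are $\mathbf{u}$ and $\overline{\mathbf{u}}$, the self-loop preserves both contributions exactly when $u_j=u_{j'}$ (the condition $\overline{u_j}=\overline{u_{j'}}$ is identical). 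Hence the key is to build a pairing of inputs by equal $\mathbf{u}$-value.

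Concretely, I would partition $[n]$ into $A_0=\{i : u_i=0\}$ and $A_1=\{i : u_i=1\}$. Because $n$ is odd, exactly one of $|A_0|,|A_1|$ is odd; call that class $A_\epsilon$. I would pick any $i_0\in A_\epsilon$ as the dangling input, then take a perfect matching of the remaining $|A_\epsilon|-1$ indices inside $A_\epsilon$ together with a perfect matching of the $|A_{1-\epsilon}|$ indices inside $A_{1-\epsilon}$. Both leftover classes have even size, so both matchings exist, and their union pairs the $n-1$ non-dangling inputs in a way consistent with the observation above.

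With this gate in place, the realized unary $g$ can be read off directly: only the two full assignments extending the dangling value to $\mathbf{u}$ or $\overline{\mathbf{u}}$ survive, yielding $g(u_{i_0})=a$ and $g(\overline{u_{i_0}})=b$. Hence $g\in\{[a,b],[b,a]\}$, a unary with both entries nonzero, which gives the claim with $xy=ab\neq 0$. The only real obstacle is guaranteeing the existence of the matching, and this is resolved precisely by the odd-arity hypothesis: oddness of $n$ forces exactly one of $|A_0|,|A_1|$ to be odd, so removing a single index from that class leaves two classes of even size. Were $n$ even, both $|A_0|$ and $|A_1|$ could be odd, and no such one-copy pairing would exist; this is where the hypothesis is essential.
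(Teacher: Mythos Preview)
Your proof is correct and is essentially the same argument as the paper's. The paper first permutes the inputs so that $\mathbf{u}=0^s1^t$, notes that $s$ and $t$ have opposite parity since $n$ is odd, leaves the first coordinate dangling (assuming $s$ is odd), and pairs the remaining $n-1$ inputs consecutively; you do the equivalent thing without first normalizing $\mathbf{u}$, by partitioning into $A_0,A_1$, picking the dangling index from the odd-sized class, and matching within each class.
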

\begin{proof}
    To simplify the notation, we assume that $\mathbf{u}=0^s1^t$ for some $s,t\geq 0$. Since $s+t=n$ is odd, $s$ and $t$ have opposite parity. Suppose that $s$ is odd (the other case is similar). Let $m=(n-1)/2$, then the function
    \begin{equation*}
        h(x)=\sum_{x_1,x_2...,x_m\in\{0,1\}} f(x,x_1,x_1,x_2,x_2,...,x_m,x_m)
    \end{equation*}
    has the signature $[f(\mathbf{u}),f(\overline{\mathbf{u}})]$.
\end{proof}

\begin{lemma}
\label{lemma:parity-unsat}
    If $\mathcal{F}$ does not satisfy the Parity condition, then $[x,y]\in\sig(\mathcal{F}\cup\{[1,0],[0,1]\})$ for some $xy\neq 0$.
\end{lemma}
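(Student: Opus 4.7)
The plan is to take a function $f\in \sig(\mathcal{F}\cup\{[1,0],[0,1]\})$ of minimum arity that violates the Parity condition (one exists because $\mathcal{F}$ already contains such a function), and argue that $f$ must in fact be unary, or else reducible via \autoref{lemma:odd-irreducible-ptype}. If $\arity(f)=1$, then violating parity means both entries of $f=[x,y]$ are nonzero, and we are done.

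So suppose $n:=\arity(f)\geq 2$. Pick $\mathbf{a},\mathbf{b}\in\supp(f)$ with $|\mathbf{a}|$ even and $|\mathbf{b}|$ odd. The first case to consider is that there is some coordinate $i$ where $\mathbf{a}$ and $\mathbf{b}$ agree. In that case I would pin $x_i$ to the common value $c\in\{0,1\}$ by connecting $[1,0]$ (if $c=0$) or $[0,1]$ (if $c=1$) to input $i$, producing an arity-$(n-1)$ function $f'=f^{x_i=c}\in \sig(\mathcal{F}\cup\{[1,0],[0,1]\})$. Both $\mathbf{a}$ and $\mathbf{b}$ (restricted away from coordinate $i$) lie in $\supp(f')$, and their Hamming weights are either unchanged (if $c=0$) or each decreased by $1$ (if $c=1$); in both subcases the two weights still have opposite parities, so $f'$ violates the Parity condition, contradicting the minimality of $\arity(f)$.

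The remaining case is that no even-weight element of $\supp(f)$ shares any coordinate value with any odd-weight element of $\supp(f)$, i.e.\ every odd-weight support element is the bitwise complement of every even-weight one. This forces $\supp(f)$ to consist of a single even element $\mathbf{a}$ and a single odd element $\overline{\mathbf{a}}$ (two distinct even elements would share the same complement). Since $|\mathbf{a}|$ is even and $n-|\mathbf{a}|$ is odd, $n$ is odd, and \autoref{lemma:odd-irreducible-ptype} applies directly: it realizes the unary signature $[f(\mathbf{a}),f(\overline{\mathbf{a}})]$, both of whose entries are nonzero. This yields the desired $[x,y]\in\sig(\mathcal{F}\cup\{[1,0],[0,1]\})$ with $xy\neq 0$.

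The only delicate point is verifying that the pinning step preserves the parity violation, which comes down to the observation that pinning a $0$-coordinate preserves all Hamming weights while pinning a $1$-coordinate shifts every surviving weight by the same amount. Once that is in hand, the minimum-arity device drives the induction to its base, and \autoref{lemma:odd-irreducible-ptype} mops up the exceptional two-point-support case.
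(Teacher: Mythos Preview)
Your proof is correct and follows essentially the same approach as the paper: both reduce, via pinning common coordinates of an even-weight and an odd-weight support element, to a function of odd arity with two-point support $\{\mathbf{u},\overline{\mathbf{u}}\}$, and then invoke \autoref{lemma:odd-irreducible-ptype}. The only organizational difference is that the paper minimizes $w(\mathbf{a}\oplus\mathbf{b})$ within a single $f\in\mathcal{F}$ and pins all agreeing coordinates at once, whereas you minimize arity over $\sig(\mathcal{F}\cup\{[1,0],[0,1]\})$ and pin one coordinate at a time; both devices accomplish the same reduction.
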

\begin{proof}
    Let $f\in\mathcal{F}$ be a function of arity $n$ that does not satisfy the Parity condition. Given $\mathbf{u}\in\{0,1\}^n$, we use $w(\mathbf{u})$ to denote its Hamming weight. Then there are some $\mathbf{a}=a_1\cdots a_n,\mathbf{b}=b_1\cdots b_n\in \supp(f)$ whose Hamming weights are of opposite parity, and
    \begin{equation*}
        m=w(\mathbf{a} \oplus \mathbf{b})=\min
        \{w(\mathbf{c}\oplus \mathbf{d})\myvert \mathbf{c},\mathbf{d}\in \supp(f), w(\mathbf{c})\text{ and } w(\mathbf{d}) \text{ have opposite parity}\}.
    \end{equation*}
    We define two sets: $S_c=\{k\in [n] \myvert a_k=b_k=c\}$ for $c\in\{0,1\}$. Then the function
    \begin{equation*}
        g=\partial_{[1,0]}^{S_0}(\partial_{[0,1]}^{S_1}(f))
    \end{equation*}
    has \emph{odd} arity $m$ and its support is $\{\mathbf{u},\overline{\mathbf{u}}\}$ for some $\mathbf{u}\in\{0,1\}^m$. By \autoref{lemma:odd-irreducible-ptype}, we can realize an unary function $[x,y]$ with $xy\neq 0$.
\end{proof}

Let $f$ be a signature of arity $n$ and let $T$ be a $2\times 2$ matrix. Recall that we use $Tf$ to denote the signature $T^{\otimes n}f$ where $f$ is viewed as a column vector. The follow lemma is a corollary of \autoref{thm:Holant*}.
\begin{lemma}
\label{lemma:Holant*-nontrivial-eq}
    $\Holant^*(\mathcal{F}\cup\{[1,0,\lambda]\})$ is $\sP$-hard or $\mathcal{F}\subseteq \ttype$ or $\mathcal{F}\subseteq \ptype$.
\end{lemma}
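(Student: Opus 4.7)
The plan is to apply the $\Holant^*$ dichotomy (\autoref{thm:Holant*}) to $\mathcal{F}':=\mathcal{F}\cup\{[1,0,\lambda]\}$: if $\Holant^*(\mathcal{F}')$ is not $\sP$-hard, then $\mathcal{F}'$ must lie in one of the four tractable classes, and I would dispose of each case by exploiting the fact that $[1,0,\lambda]$ itself must belong to that class together with the hypothesis that $\lambda\neq 0$ and $\lambda$ is not a root of unity (in particular $\lambda\notin\{1,-1\}$).

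Case 1 ($\mathcal{F}'\subseteq \ttype$) is trivial. For Case 2, assume $\mathcal{F}'\subseteq H\ptype$ with $H$ orthogonal. The key step is that $(H^{-1})^{\otimes 2}[1,0,\lambda]$ must be a binary product-type function, so its matrix $H^{\mathsf T}MH$, with $M:=\mathrm{diag}(1,\lambda)$, has to be rank-$1$, diagonal, or anti-diagonal. I would note that $\det(H^{\mathsf T}MH)=\lambda\neq 0$, killing rank-$1$; writing $M=I+(\lambda-1)e_2e_2^{\mathsf T}$, the anti-diagonal case forces both diagonal entries of $I+(\lambda-1)(H^{\mathsf T}e_2)(H^{\mathsf T}e_2)^{\mathsf T}$ to vanish, which combined with $\|H^{\mathsf T}e_2\|^2=1$ yields $\lambda=-1$, excluded; and the diagonal case forces the second column of $H^{\mathsf T}$ to have a zero entry, which together with orthogonality makes $H$ a signed permutation matrix. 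For each such $H$, $H^{\otimes n}$ merely permutes and sign-flips coordinates, so $\mathcal{E}$ (and hence $\ptype$) is preserved setwise, giving $H\ptype=\ptype$ and $\mathcal{F}\subseteq\ptype$.

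For Cases 3 and 4 I would compute $(Z^{-1})^{\otimes 2}[1,0,\lambda]$ directly; a brief calculation yields the matrix $\tfrac14\begin{bsmallmatrix} 1-\lambda & 1+\lambda \\ 1+\lambda & 1-\lambda\end{bsmallmatrix}$ for either choice of $Z\in\{\begin{bsmallmatrix}1&1\\i&-i\end{bsmallmatrix},\begin{bsmallmatrix}1&1\\-i&i\end{bsmallmatrix}\}$. In Case 3 this matrix has nonzero determinant $-\lambda/4$ and cannot match any of the three product-type shapes unless $\lambda\in\{1,-1\}$, contradiction. In Case 4, since $[1,0,\lambda]$ has a rank-$2$ matrix it cannot decompose as a tensor of two unaries (all unaries lie in $Z\mathcal{M}$), so it must itself equal $Z^{\otimes 2}g$ for some binary $g\in\mathcal{M}$; but the Hamming-weight-$2$ entry $(1-\lambda)/4$ of $g$ must vanish by definition of $\mathcal{M}$, forcing $\lambda=1$, contradiction.

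The main technical obstacle is Case 2: one has to carefully enumerate the three product-type shapes of $H^{\mathsf T}MH$ against the orthogonality equations, use $\lambda\neq 0,-1$ to eliminate two of them, and then verify that the surviving signed-permutation $H$'s truly preserve $\ptype$ under $H^{\otimes n}$. Cases 3--4 reduce to essentially a single matrix computation whose failure is immediate from $\lambda\notin\{1,-1\}$.
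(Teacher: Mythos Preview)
Your proposal is correct and follows essentially the same route as the paper: apply the $\Holant^*$ dichotomy to $\mathcal{F}\cup\{[1,0,\lambda]\}$ and eliminate each tractable class by checking that $[1,0,\lambda]$ (with $\lambda\neq 0$ and $\lambda$ not a root of unity, hence $\lambda\neq\pm1$) cannot belong to it except in the cases forcing $\mathcal{F}\subseteq\ttype$ or $\mathcal{F}\subseteq\ptype$. The only cosmetic difference is in Case~2: the paper first reduces to $H=\begin{bsmallmatrix}x&y\\y&-x\end{bsmallmatrix}$ via $\begin{bsmallmatrix}1&0\\0&-1\end{bsmallmatrix}\ptype\subseteq\ptype$ and then reads off the three $\ptype$-shapes from the explicit entries $[x^2+\lambda y^2,(1-\lambda)xy,y^2+\lambda x^2]$, whereas you work coordinate-free with $M=I+(\lambda-1)e_2e_2^{\mathsf T}$ and arrive at the same conclusion that $H$ is a signed permutation; both arguments are equally short and yield $H\ptype=\ptype$.
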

\begin{proof}
    Suppose that $\Holant^*(\mathcal{F}\cup\{[1,0,\lambda]\})$ is not $\sP$-hard. Then by \autoref{thm:Holant*}, at least one of the following conditions holds:
    \begin{enumerate}[label=(\arabic*),labelindent=*,leftmargin=*]
      \item $\mathcal{F}\subseteq \ttype$.
      \item $\mathcal{F}\cup\{[1,0,\lambda]\}\subseteq H\ptype$ for some orthogonal matrix $H$. Since $\begin{bsmallmatrix} 1 & 0 \\ 0 & -1\end{bsmallmatrix}\ptype\subseteq\ptype$, we may suppose that $H=\begin{bsmallmatrix} x & y \\ y & -x \end{bsmallmatrix}$ where $x^2+y^2=1$. Thus $H^{-1}[1,0,\lambda]=[x^2+\lambda y^2, (1-\lambda)xy, y^2+\lambda x^2]\in\ptype$. So we have $x^2+\lambda y^2=y^2+\lambda x^2=0$ or $(1-\lambda)xy=0$. If $x^2+\lambda y^2=y^2+\lambda x^2=0$, then $x^2+\lambda y^2 + y^2+\lambda x^2=1+\lambda=0$ hence $\lambda=-1$, contradicting the fact that $\lambda$ is not a root of unity. So $(1-\lambda)xy=0$, which implies that $x=0$ or $y=0$. Thus $H\ptype\subseteq\ptype$. And $\mathcal{F}\subseteq \ptype$.
      \item $[1,0,\lambda]\in Z\ptype$. But $Z^{-1}[1,0,\lambda]=\frac{1}{4}[1-\lambda,1+\lambda,1-\lambda]\notin \ptype$.
      \item $[1,0,\lambda]\in Z\mathcal{M}$. But $Z^{-1}[1,0,\lambda]=\frac{1}{4}[1-\lambda,1+\lambda,1-\lambda]\notin \mathcal{M}$.
    \end{enumerate}
\end{proof}

\begin{thm}
\label{thm:Holantc-nontrivial-eq-parity}
    Suppose that $\mathcal{F}$ does not satisfy the Parity condition. Then $\Holant^c(\mathcal{F}\cup\{[1,0,\lambda]\})$ is $\sP$-hard or $\mathcal{F}\subseteq\ttype$ or $\mathcal{F}\subseteq\ptype$.
\end{thm}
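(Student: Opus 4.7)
The plan is to realize enough unary functions inside $\Holant^c(\mathcal{F}\cup\{[1,0,\lambda]\})$ to collapse it to $\Holant^*(\mathcal{F}\cup\{[1,0,\lambda]\})$, and then invoke \autoref{lemma:Holant*-nontrivial-eq}. Since $\Holant^c$ already provides $[1,0]$ and $[0,1]$, and $\mathcal{F}$ violates the Parity condition, \autoref{lemma:parity-unsat} produces an unary signature $[x,y]$ with $xy\neq 0$ realizable from $\mathcal{F}\cup\{[1,0],[0,1]\}$. This is the seed I need.

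Next I would view the binary signature $[1,0,\lambda]$ as the diagonal matrix $H=\begin{bsmallmatrix}1&0\\0&\lambda\end{bsmallmatrix}$: contracting one port of $[1,0,\lambda]$ with a unary $[a,b]$ yields $[a,\lambda b]=H[a,b]$. Hence iterating gives a family $\{H^n[x,y]\myvert n\in\mathbb{N}\}$. The hypotheses of \autoref{lemma:interpolate-all-unary} are met because the diagonal entries of $H$ satisfy $1/\lambda$ is not a root of unity (since $\lambda$ is not), and $xy\neq 0$. Therefore every unary signature $[u,v]$ is interpolable, which gives
\begin{equation*}
    \Holant^*(\mathcal{F}\cup\{[1,0,\lambda]\})\redT \Holant^c(\mathcal{F}\cup\{[1,0,\lambda]\}).
\end{equation*}

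Finally I apply \autoref{lemma:Holant*-nontrivial-eq} to $\Holant^*(\mathcal{F}\cup\{[1,0,\lambda]\})$: either this problem is $\sP$-hard, in which case the displayed reduction above forces $\Holant^c(\mathcal{F}\cup\{[1,0,\lambda]\})$ to be $\sP$-hard as well, or $\mathcal{F}\subseteq\ttype$, or $\mathcal{F}\subseteq\ptype$. These are exactly the three alternatives in the theorem statement.

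The only conceptual obstacle is to notice that violating Parity is precisely the extra ingredient needed to move from the $\Holant^c$ regime to the $\Holant^*$ regime; once that observation is made, every step is an invocation of a previously established lemma, and no additional case analysis on $\lambda$ or on the structure of $\mathcal{F}$ is required beyond what is already embedded in \autoref{lemma:Holant*-nontrivial-eq}.
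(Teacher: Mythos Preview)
Your proof is correct and follows essentially the same approach as the paper: use \autoref{lemma:parity-unsat} to obtain a unary $[x,y]$ with $xy\neq 0$, then apply \autoref{lemma:interpolate-all-unary} with $H=\begin{bsmallmatrix}1&0\\0&\lambda\end{bsmallmatrix}$ to interpolate all unaries, and conclude via \autoref{lemma:Holant*-nontrivial-eq}. Your write-up is in fact slightly more detailed than the paper's, which simply cites these three lemmas in sequence.
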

\begin{proof}
    By \autoref{lemma:parity-unsat}, $[x,y]\in\sig(\mathcal{F}\cup\{[1,0],[0,1]\})$ for some $xy\neq 0$. Thus we can interpolate all unary functions by \autoref{lemma:interpolate-all-unary}. Then the conclusion follows from \autoref{lemma:Holant*-nontrivial-eq}.
\end{proof}

\subsection{Non-product-type Functions}

Let $f$ be a function of arity $n>0$. $f$ can be seen as a gate with $n$ inputs. First we define two binary relations (depending on $f$), $E_f$ and $N_f$, on the set $[n]$: for $i,j\in [n]$,
\begin{enumerate}[label=\textbullet]
  \item $(i,j)\in E_f$ if for all $x_1,...,x_n\in\{0,1\}$, $f(x_1,...,x_n)=0$ when $x_i\neq x_j$;
  \item $(i,j)\in N_f$ if for all $x_1,...,x_n\in\{0,1\}$, $f(x_1,...,x_n)=0$ when $x_i=x_j$.
\end{enumerate}
And we denote the relation $E_f\cup N_f$ by $\sim_f$. It is easy to verify that $\sim_f$ is an equivalence relation, so it determines a partition of $[n]$. We denote the partition by $\inp{f}$, called the \emph{input partition} of the function $f$.

The following lemma tells us that we can reduce the arity of a non-product-type function by pinning. The idea of the proof is similar to that of Lemma 5.8 in \cite{CLX2014}. But here the unary function $[1,1]$ is not freely available, so the realizable functions are slightly different.
\begin{lemma}
\label{lemma:non-ptype}
    Let $F\notin \ptype$ be a function with affine support. Then there is a function $g\in\sig(\{F,[1,0],[0,1]\})$ which has one of the following forms:
    \begin{enumerate}[label=(\arabic*),labelindent=*,leftmargin=*]
      \item $g(x_1,...,x_m)=h(x_1,x_2,x_3)\prod_{i=1}^{m-3}h_i(x_i,x_{i+3})$ where $m\in\{3,6\}$, $h_i\in\{=_2,\neq_2\}$ and the support of $h$ is determined by an equation over $\mathbb{Z}_2$: $x_1\oplus x_2\oplus x_3=c$ for some $c\in\{0,1\}$. Therefore, $g,g^2\notin\ptype$. Note that if $m=3$, then the part of $h_i$ disappears and $g$ is simply the function $h$.
      \item $g=\begin{bmatrix}a & b \\ c & d\end{bmatrix}$ or $\begin{bsmallmatrix} a & 0 & 0 & b \\ 0 & 0 & 0 & 0 \\ 0 & 0 & 0 & 0 \\ c & 0 & 0 & d \end{bsmallmatrix}$ where $abcd\neq 0$ and $(a,b,c,d)\notin\ptype$.
    \end{enumerate}
\end{lemma}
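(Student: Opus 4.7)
The plan is to pin most of $F$'s inputs with $[1,0]$ and $[0,1]$, leaving a small gadget that witnesses $F\notin\ptype$. First I would normalize: if $F$ forces any input to a constant on $\supp(F)$, pin that input out; so we may assume $\supp(F)$ has no forced-constant coordinates. The input partition $\inp{F}$ then carves $[n]$ into classes $C_1,\ldots,C_k$ on which variables are pairwise $=_2$- or $\neq_2$-tied. Passing to representatives $y_j$ of each class, the reduced function $\tilde F(y_1,\ldots,y_k)$ has affine support cut out only by ``higher-order'' equations (of weight $\geq 3$ after quotienting by the pair relations captured by $\inp{F}$).

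In Case A, some higher-order equation exists. Gaussian elimination over $\mathbb{Z}_2$ on the annihilator of $\supp(\tilde F)$ produces three representatives $y_{j_1}, y_{j_2}, y_{j_3}$ and a constant $c\in\{0,1\}$ with $y_{j_1}\oplus y_{j_2}\oplus y_{j_3}=c$ enforced on $\supp(\tilde F)$. Pin all remaining representatives to a jointly feasible assignment on which $F$ is not identically zero, obtaining a function whose representative-support is exactly this parity equation. Its arity is $s_1+s_2+s_3$ with $s_i=|C_{j_i}|$. To reach the stipulated $m\in\{3,6\}$ I would split further: (i) when all three chosen classes are singletons, take $m=3$ directly; (ii) when some $s_i\geq 2$, use that class to realize a binary diagonal splitter (obtained by pinning all inputs outside $C_{j_i}$ and exploiting the internal $=_2$/$\neq_2$ tie among its members) and then apply the splitter both to trim oversized classes by contraction and to inflate singleton classes to size $2$, producing $m=6$. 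The resulting $g$ has the advertised form $h(x_1,x_2,x_3)\prod_{i=1}^{m-3}h_i(x_i,x_{i+3})$ with $h_i\in\{=_2,\neq_2\}$; irreducibility of the $3$-variable parity support forces $g\notin\ptype$, and since $h_i^2=h_i$ and $h^2$ has the same parity support as $h$, also $g^2\notin\ptype$.

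In Case B, no higher-order equation exists, so $\supp(F)$ factors as $\prod_j R_j(C_j)$ with each $R_j$ a product of $=_2$/$\neq_2$'s inside $C_j$, and $\tilde F$ is everywhere nonzero on $\{0,1\}^k$. Because $F\notin\ptype$, $\tilde F$ is not a tensor product of unaries, so after pinning all but two representatives some $2\times 2$ restriction of $\tilde F$ has rank $2$; a further local pin (if necessary) ensures all four entries are nonzero. Pinning back in the original $F$ yields $g$ of arity $|C_{j_1}|+|C_{j_2}|$: either the $2\times 2$ matrix form directly when both classes are singletons, or, when both classes have size $\geq 2$ (trimmed to size exactly $2$ as in Case A), the advertised block form $\bigl[\begin{smallmatrix} a & 0 & 0 & b \\ 0 & 0 & 0 & 0 \\ 0 & 0 & 0 & 0 \\ c & 0 & 0 & d \end{smallmatrix}\bigr]$.

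The main obstacle I anticipate is the class-size normalization in Case A: one must realize $=_2$-like splitter and contractor gadgets \emph{from $F$ alone} (using only $[1,0]$ and $[0,1]$ for pinning) in order to bring the three chosen classes to a common size in $\{1,2\}$ without collapsing the parity constraint or killing the nonzero values. Once such auxiliary gadgets are extracted from any class of size $\geq 2$, the rest reduces to linear algebra over $\mathbb{Z}_2$ on $\supp(F)$ combined with routine non-vanishing arguments via local pinning.
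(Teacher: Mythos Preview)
Your overall plan—remove forced constants, pass to class representatives, split on whether a higher-order affine equation survives, and pin to a small witness—is the same skeleton as the paper's proof. The genuine gap is your class-size normalization in Case~A. After you pin down to a function $g$ whose three remaining classes are bound by the single relation $y_{j_1}\oplus y_{j_2}\oplus y_{j_3}=c$, your proposed ``binary diagonal splitter'' cannot be extracted by ``pinning all inputs outside $C_{j_i}$'': that relation forces the representative of $C_{j_i}$ once the other two classes are pinned, so what you obtain is degenerate (a tensor of $[1,0]$'s and $[0,1]$'s), not a usable $=_2$-like gadget. Returning to the unpinned $F$ does not help in general, because the only size-$\geq 2$ class among your three may be precisely the one whose representative is \emph{dependent}; for instance with $k=3$, $y_3=y_1\oplus y_2$, $|C_1|=|C_2|=1$ and $|C_3|=2$, there is no class with a free representative and size $\geq 2$ from which to carve a splitter.

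The paper's fix is different in kind. It first reduces every class to size $\leq 2$ by connecting two $E_g$-related inputs via the ambient edge $=_2$ (no auxiliary gadget needed). In the mixed-size case it then observes that such a $g$ \emph{violates the Parity condition}, so \autoref{lemma:parity-unsat} yields a unary $[x,y]$ with $xy\neq 0$; applying this unary to the surplus inputs projects down to $m=3$ rather than inflating to $m=6$. The same parity-violation trick handles the mixed-size subcase of your Case~B (which you omit), and in the all-size-$2$ subcase of Case~B one must additionally realize a generalized disequality $(0,x,y,0)$ by pinning and use it to flip any $\neq_2$ ties into the stated $4\times 4$ block form. A minor further slip: Gaussian elimination on the annihilator of $\supp(\tilde F)$ need not produce a weight-$3$ equation (it may contain only, say, a single weight-$5$ generator); the weight-$3$ relation appears only after first pinning some of the free representatives.
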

\begin{proof}
    We may suppose that $F\neq F'\otimes \Delta$ for any functions $F'$ and $\Delta\in\{[1,0],[0,1]\}$. Otherwise we can obtain $F'$ by pinning. $F'\notin\ptype$ and it has affine support, thus we can consider the function $F'$ instead. Since $F\notin\ptype$, $\inp{F}=\{I_1,...,I_k\}$ for some $k>1$.

    Let $s$ denote the dimension of the support of $F$. Then $s>1$ otherwise $F\in\ptype$. We use $\{y_1,...,y_n\}$ to denote the input variables of $F$ where $n=\arity(F)$. $y_1,...,y_n$ satisfy a system of linear equations over $\mathbb{Z}_2$ (the solutions constitute the support of $F$). Since the inputs of $F$ can be permuted arbitrarily, we may assume that $\{y_1,...,y_s\}$ is a set of free variables and $i\in I_i$ for all $i\in [s]$. There are two cases:
    \begin{enumerate}[label=(\arabic*),align=left,leftmargin=0pt,listparindent=\parindent,labelwidth=0pt,itemindent=!]
      \item $s<k$. Let $r$ be an index in $I_k$. Then on the support of $F$, it holds that $y_r=\sum_{i=1}^{s} a_i y_i+b\ (\text{mod }2)$ where $a_i,b\in\{0,1\}$. Moreover, at least two of ${a_i}$'s are nonzero because $k>s$. Suppose that $a_1a_2\neq 0$. We define two sets:
          \begin{align*}
            S_0 &= \{i\in [n]\myvert y_i=0\text{ if } y_3=y_4=\cdots =y_s=0\}, \\
            S_1 &= \{i\in [n]\myvert y_i=1\text{ if } y_3=y_4=\cdots =y_s=0\}.
          \end{align*}
          Then $g=\partial_{[1,0]}^{S_0}(\partial_{[0,1]}^{S_1}(F))$ has affine support of dimension 2 and $\inp{g}=\{J_1,J_2,J_3\}$. If  $|J_i|\geq 3$ for some $i$, then there must be two inputs of $g$, say the $p$th and the $q$th, such that $p,q\in J_i$ and $(p,q)\in E_g$. We connect them via an edge and then obtain a new function $g'$ of lower arity. By the definition of $\inp{g}$, $g'$ also has affine support of dimension $2$ and $|\inp{g'}|=3$. Therefore, we may suppose that $1\leq |J_i|\leq 2$ for all $i\in\{1,2,3\}$. Again, since the inputs of $g$ can be permuted, we further suppose that $i\in J_i$ for all $i\in\{1,2,3\}$. Let $\{x_1,...,x_m\}$ denote the input variables of $g$ where $m=\arity(g)$, then on the support of $g$, $x_1,x_2,x_3$ satisfy an equation $x_1\oplus x_2\oplus x_3=c$ for some $c\in\{0,1\}$.

          If $|J_i|=1$ for all $i$ or $|J_i|=2$ for all $i$, then we are done. Otherwise $g$ does not satisfy the Parity condition. By \autoref{lemma:parity-unsat}, we can realize an unary function $[x,y]$ with $xy\neq 0$. Let $S$ denote the set $\{i\in [m]\myvert i>3\}$. Then $\partial_{[x,y]}^S(g)$ is a ternary function whose support is determined by the equation $x_1\oplus x_2\oplus x_3=c$.
      \item $s=k$. Since $\{y_1,...,y_k\}$ is a set of free variables, there is a function $f$ such that for all $y_1,...,y_n\in\{0,1\}$,
          \begin{equation*}
            F(y_1,...,y_n)=\chi_F(y_1,...,y_n)\cdot f(y_1,...,y_k),
          \end{equation*}
          where $\supp(f)=\{0,1\}^k$ and $\chi_F$ denotes the Boolean function defined by the support of $F$. $f$ can not be degenerate, otherwise $F\in\ptype$. Therefore, there is some $i\in [k]$ and $\mathbf{u},\mathbf{v}\in\{0,1\}^{k-1}$, such that
          \begin{equation*}
                \frac{f^{y_i=1}(\mathbf{u})}{f^{y_i=0}(\mathbf{u})}\neq \frac{f^{y_i=1}(\mathbf{v})}{f^{y_i=0}(\mathbf{v})}.
          \end{equation*}
          Because $\supp(f)=\{0,1\}^k$, we may assume that $\mathbf{u}$ and $\mathbf{v}$ are adjacent. That is, the bitwise XOR $\mathbf{u}\oplus\mathbf{v}$ is of Hamming weight 1. Without loss of generality, we further suppose that $i=1$, $\mathbf{u}=0u_3\cdots u_k$ and $\mathbf{v}=1u_3\cdots u_k$. Then the function $h=f^{y_3=u_3,...,y_k=u_k}$ has the signature $\begin{bmatrix}a & b \\ c & d\end{bmatrix}$ where $abcd\neq 0$ and $ad\neq bc$. Thus $h\notin \ptype$.

          As in case (1), we can connect two inputs of $F$ that must take the same value. So we assume that for all $j\in [k]$, $1\leq |I_j|\leq 2$. Further, there are three subcases to consider:
          \begin{enumerate}[label=(\roman*),align=left,leftmargin=0pt,listparindent=\parindent,labelwidth=0pt,itemindent=!]
            \item For all $j\in [k]$, $|I_j|=1$. Then $F=f$. And we have shown that the function $h=(a,b,c,d)$ is realizable, with $abcd\neq 0$ and $h\notin \ptype$.
            \item There are two indices $p,q\in [k]$ such that $|I_p|=1$ and $|I_q|=2$. In this case, $F$ does not satisfy the Parity condition. Applying \autoref{lemma:parity-unsat}, we can realize an unary function $[x,y]$ with $xy\neq 0$. Let $S$ denote the set $\{i\in [n]\myvert i>k\}$. Then we only need to consider the function $F'=\partial^S_{[x,y]}(F)$, back to the case (i).
            \item For all $j\in [k]$, $|I_j|=2$. For each $3\leq j\leq k$ and each $i\in I_j$, if $(i,j)\in E_F$, we pin the $i$th input of $F$ to $u_j$, otherwise we pin the input to $\overline{u_j}$. This produces a function $H(x_1,x_2,x_3,x_4)=h(x_1,x_3)f_1(x_1,x_2)f_2(x_3,x_4)$, where $f_1,f_2\in\{=_2,\neq_2\}$. If both $f_1$ and $f_2$ are the equality function, then we are done. Otherwise, by pinning we can realize a general disequality function $(0,x,y,0)$ with $xy\neq 0$. With this function, we are able to flip $x_2$ or $x_4$, and realize a function $H'(x_1,x_2,x_3,x_4)=h(x_1,x_3)f_1'(x_1,x_2)f_2'(x_3,x_4)$ where $f_1',f_2'\in\{=_2,[x,0,y]\}$.
          \end{enumerate}
    \end{enumerate}
\end{proof}

\subsection{Hardness Proof}

This subsection is devoted to the hardness part of \autoref{thm:Holantc-nontrivial-eq}. Before this, we need to make some preparations.

The complete dichotomy for sets of symmetric signatures \cite{CGW2013} implies the following lemma. For completeness, we give a proof.
\begin{lemma}
\label{lemma:{[1,0,1,0],[1,0,lambda]}}
    $\Holant(\{[1,0,1,0],[1,0,\lambda]\})$ is $\sP$-hard.
\end{lemma}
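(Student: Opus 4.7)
The plan is to perform an orthogonal holographic transformation by the Hadamard matrix $H=\tfrac{1}{\sqrt{2}}\begin{bsmallmatrix} 1 & 1 \\ 1 & -1\end{bsmallmatrix}$. Since $H$ is orthogonal, the orthogonal-transformation theorem in \autoref{sec:preliminaries} gives $\Holant(\mathcal{F})\eqredT\Holant(H\mathcal{F})$, so one may replace each signature by its image. A short $3$-bit Walsh--Hadamard computation yields $H[1,0,1,0]=\sqrt{2}\cdot{=_3}$ and $H[1,0,\lambda]=\tfrac{1}{2}[1+\lambda,\,1-\lambda,\,1+\lambda]$, so, absorbing nonzero scalars, the problem becomes $\Holant(\{=_3,\,f\})$ with $f=[1+\lambda,1-\lambda,1+\lambda]$.

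Next I would realize every $=_n$ for $n\geq 1$ from $=_3$ by simple gadgets: a self-loop on $=_3$ (connecting two of its three inputs) gives the all-ones unary $[1,1]$; attaching $[1,1]$ to one input of $=_3$ collapses it to $=_2$; and connecting $=_3$ through a single edge to an existing $=_k$ produces $=_{k+1}$, so all larger equalities follow by induction. With $\{=_n\}_{n\geq 1}$ realizable, \autoref{lemma:realize-red} turns $\Holant(\{=_3,f\})$ into $\CSP(\{f\})$.

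The last step is to apply the $\CSP$ dichotomy (\autoref{thm:csp-dichotomy}), checking that $f=[1+\lambda,1-\lambda,1+\lambda]$ lies in neither $\atype$ nor $\ptype$. Since $\lambda$ is not a root of unity, in particular $\lambda\notin\{0,\pm 1\}$, so all four matrix entries of $f$ are nonzero; this excludes every product-type binary signature except the degenerate $f=u\otimes v$, which would force $\det f=(1+\lambda)^2-(1-\lambda)^2=4\lambda=0$, a contradiction. For $\atype$, the full support of $f$ forces the ratio $(1-\lambda)/(1+\lambda)$ to be a fourth root of unity; solving $(1-\lambda)/(1+\lambda)\in\{1,i,-1,-i\}$ yields $\lambda\in\{0,-i,\infty,i\}$, again contradicting the hypothesis on $\lambda$. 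Hence $f\notin\atype\cup\ptype$ and $\CSP(\{f\})$ is $\sP$-hard, completing the reduction.

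The only substantive obstacle is certifying the non-membership of $f$ in the two tractable classes for complex $\lambda$; the holographic transformation and the chain of equality realizations are routine gadget work.
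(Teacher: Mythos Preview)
Your proof is correct and follows essentially the same approach as the paper: apply the orthogonal Hadamard transformation to turn $[1,0,1,0]$ into $=_3$ and $[1,0,\lambda]$ into $[1+\lambda,1-\lambda,1+\lambda]$, realize all equalities from $=_3$, and invoke the $\CSP$ dichotomy (\autoref{thm:csp-dichotomy}). The paper's proof is terser---it simply asserts that $[1+\lambda,1-\lambda,1+\lambda]\notin\atype\cup\ptype$ and only mentions realizing $=_k$ for $k>3$---whereas you spell out the self-loop producing $[1,1]$, the collapse to $=_2$, and the explicit case analysis excluding $\atype$ and $\ptype$; these additional details are all sound.
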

\begin{proof}
    Note that $[1,0,1,0]=\frac{1}{2}([1,1]^{\otimes 3}+[1,-1]^{\otimes 3})$, thus by performing the orthogonal transformation $H=\frac{1}{\sqrt{2}}\begin{bsmallmatrix} 1 & 1 \\ 1 & -1 \end{bsmallmatrix}$, we have
    \begin{equation*}
        \Holant(\{=_3,[1+\lambda,1-\lambda,1+\lambda]\})\redT \Holant(\{[1,0,1,0],[1,0,\lambda]\}).
    \end{equation*}
    For any integer $k>3$, the equality function $=_k$ can be realized using $k-2$ ternary equality functions. So $\CSP([1+\lambda,1-\lambda,1+\lambda])\redT \Holant(\{=_3,[1+\lambda,1-\lambda,1+\lambda]\})$. Since $\lambda$ is nonzero and not a root of unity, $[1+\lambda,1-\lambda,1+\lambda]\notin \atype\cup\ptype$. This implies that $\CSP([1+\lambda,1-\lambda,1+\lambda])$ is $\sP$-hard by \autoref{thm:csp-dichotomy}. Therefore, so is the problem $\Holant(\{[1,0,1,0],[1,0,\lambda]\})$.
\end{proof}

\begin{mydef}
    Let $f,g$ be two functions of arity $n$. $f$ and $g$ are \emph{equivalent}, denoted by $f\sim g$, if
    \begin{enumerate}[leftmargin=*,labelindent=\parindent]
      \item $g=f_\pi$ for some $\pi\in S_n$, or
      \item $g(x_1,x_2,...,x_n)=f(\overline{x_1},\overline{x_2},...,\overline{x_n})$ for all $x_1,x_2,...,x_n\in\{0,1\}$.
    \end{enumerate}
\end{mydef}

\begin{lemma}
    If $f\sim g$, then $\Holant(f)\eqredT \Holant(g)$.
\end{lemma}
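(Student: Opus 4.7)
The statement falls naturally into two cases, one for each clause in the definition of $\sim$, and neither case is genuinely hard; the plan is mostly to cite the machinery already set up in the preliminaries.

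For the first case, suppose $g = f_\pi$ for some $\pi \in S_n$. By \autoref{lemma:function-permutation}, $f_\pi \in \sig(f)$ (take the trivial $\{f\}$-gate consisting of a single vertex assigned $f$ with its dangling edges relabelled according to $\pi$), and symmetrically $f = (f_\pi)_{\pi^{-1}} \in \sig(g)$. Applying \autoref{lemma:realize-red} in both directions gives $\Holant(f) \eqredT \Holant(g)$.

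For the second case, suppose $g(x_1,\ldots,x_n) = f(\overline{x_1},\ldots,\overline{x_n})$ for all $\mathbf{x} \in \{0,1\}^n$. Viewing $f$ as a column vector indexed lexicographically, the Boolean-flip action on the index set is exactly the action of $X^{\otimes n}$, where
\begin{equation*}
    X = \begin{bmatrix} 0 & 1 \\ 1 & 0 \end{bmatrix},
\end{equation*}
so $g = X^{\otimes n} f$. Since $X^{\mathsf T} X = I$, the matrix $X$ is orthogonal, and the theorem that $\Holant(H\mathcal{F}) \eqredT \Holant(\mathcal{F})$ for any orthogonal $H$ (applied to $\mathcal{F} = \{f\}$ with $H = X$) yields $\Holant(g) = \Holant(Xf) \eqredT \Holant(f)$.

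There is really no obstacle here: case (1) is pure bookkeeping about how a signature grid identifies incident edges with input variables, and case (2) just observes that bit-flipping on Boolean inputs is an orthogonal holographic transformation, so it is already handled by the general orthogonal-invariance theorem. The only point worth stating carefully is the identification of the coordinate flip with $X^{\otimes n}$, which is immediate from the lexicographic indexing convention introduced in the preliminaries.
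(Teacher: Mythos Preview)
The paper states this lemma without proof, treating it as an immediate observation. Your argument is correct and is exactly the expected one: case~(1) is handled by \autoref{lemma:function-permutation} together with \autoref{lemma:realize-red}, and case~(2) by identifying the global bit-flip with the orthogonal transformation $X=\begin{bsmallmatrix}0&1\\1&0\end{bsmallmatrix}$ and invoking the orthogonal-invariance theorem.
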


The lemma above implies that, if $f\sim g$, then $\Holant(f)$ is $\sP$-hard if and only if $\Holant(g)$ is $\sP$-hard.

Proceed to prove the hardness. First we consider irreducible ternary functions.
\begin{lemma}
\label{lemma:Holant-nontrivial-eq-ternary}
    Let $f\notin\ptype$ be any irreducible function of arity $3$. Then $\Holant^c(\{f,[1,0,\lambda]\})$ is $\sP$-hard.
\end{lemma}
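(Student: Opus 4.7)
The plan is to split on whether $f$ satisfies the Parity condition. If $f$ does \emph{not} satisfy Parity, then Theorem~\ref{thm:Holantc-nontrivial-eq-parity} applied to $\mathcal{F}=\{f\}$ decides the matter immediately: $f$ is ternary, so $f\notin\ttype$, and $f\notin\ptype$ by hypothesis, leaving only the $\sP$-hard alternative.

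Suppose instead that $f$ satisfies Parity. I first show $\supp(f)$ is exactly the $4$-element set $\{(x_1,x_2,x_3):x_1\oplus x_2\oplus x_3=c\}$ for some $c\in\{0,1\}$: dimension $0$ makes $f$ degenerate; dimension $1$, under Parity, means the two support points agree on one coordinate and differ on the other two, so $f$ factors as a pinning tensored with a binary function and is reducible; dimension $3$ would mix parities. Thus $f$ has affine support and Lemma~\ref{lemma:non-ptype} applies to $F=f$. To see the lemma outputs Case~$(1)$ with $m=3$, I track the input partition $\inp{f}$. The possibility $|\inp{f}|=1$ would express $f$ as a product of unary, $=_2$, and $\neq_2$ constraints, putting $f\in\ptype$; the possibility $|\inp{f}|=2$, whose size-$2$ part is glued by either $=_2$ or $\neq_2$, confines $\supp(f)$ to $\{000,011,100,111\}$ or $\{001,010,101,110\}$, each of which mixes parities. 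Hence $|\inp{f}|=3$; in the notation of the proof of Lemma~\ref{lemma:non-ptype} we have $s=2<3=k$, and Case~$(1)$ with $m=3$ yields a function $h$ (which up to scalars is $f$ itself) whose support is the entire set $\{x_1\oplus x_2\oplus x_3=c\}$ and whose four values are all nonzero.

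I then normalize these four values to a common constant. Since $\lambda$ is not a root of unity, Lemma~\ref{lemma:interpolate-binary-eq} realizes every $[1,0,\mu]$ with $\mu\in\mathbb{C}$ from $[1,0,\lambda]$. Attaching a copy of $[1,0,\mu_i]$ to the $i$-th input of $h$ rescales its value at $(u_1,u_2,u_3)$ by $\mu_1^{u_1}\mu_2^{u_2}\mu_3^{u_3}$; equating the four rescaled values reduces to one quadratic constraint on a single parameter (the other two $\mu_i$ then being uniquely determined), which is solvable over $\mathbb{C}$ precisely because all four values of $h$ are nonzero, and the resulting $\mu_i$ are nonzero as well. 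Up to a nonzero scalar, the realized signature is $[1,0,1,0]$ when $c=0$ and $[0,1,0,1]$ when $c=1$.

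For $c=0$, Lemma~\ref{lemma:{[1,0,1,0],[1,0,lambda]}} directly concludes $\sP$-hardness. For $c=1$, the bit-flip equivalence sends $[0,1,0,1]$ to $[1,0,1,0]$ and $[1,0,\lambda]$ to $[\lambda,0,1]=\lambda\,[1,0,1/\lambda]$; since $1/\lambda$ is also not a root of unity, the same lemma applies with $\lambda$ replaced by $1/\lambda$. Either way, $\Holant^c(\{f,[1,0,\lambda]\})$ is $\sP$-hard. The delicate point is the $\inp{f}$ bookkeeping that forces Case~$(1)$ with $m=3$ in Lemma~\ref{lemma:non-ptype}; the normalization and the final appeal to the symmetric hardness result are then routine.
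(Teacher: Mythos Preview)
Your argument has a genuine gap in the Parity case: the claim that $\supp(f)$ must be the full four-element set $\{x_1\oplus x_2\oplus x_3=c\}$ is false. Your case analysis by ``dimension'' only covers support sizes $1,2,4,8$, but an irreducible ternary $f\notin\ptype$ satisfying Parity can have exactly three support points. For instance, take $f=(1,0,0,0,0,1,1,0)$, i.e.\ $f_{000}=f_{101}=f_{110}=1$ and all other values $0$. This $f$ has even support, is irreducible (the two slices $f^{x_i=0}$ and $f^{x_i=1}$ have disjoint supports for each $i$, so no tensor factor can be split off), and is not in $\ptype$ because its support is not affine (affine sets over $\mathbb{Z}_2$ have size a power of $2$). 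Once such an $f$ is on the table, both of your subsequent steps break: Lemma~\ref{lemma:non-ptype} requires $F$ to have affine support, so you cannot invoke it; and your normalization by attaching $[1,0,\mu_i]$ cannot turn a zero value into a nonzero one, so you can never reach $[1,0,1,0]$ or $[0,1,0,1]$.

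The paper's proof avoids this trap by not relying on full support at all. After reducing to even parity and permuting so that $f_{000},f_{101},f_{110}\neq 0$ (with $f_{011}$ possibly zero), it first applies a diagonal transformation $T=\begin{bsmallmatrix}1&0\\0&\alpha\end{bsmallmatrix}$ (interpolated from $[1,0,\lambda]$) and then forms a \emph{triangle} gadget from three copies of $T^{\otimes 3}f$. This produces a symmetric signature $[a,0,b,0]$, and a suitable choice of $\alpha$ makes $ab\neq 0$ regardless of whether $f_{011}$ vanishes; a second diagonal transformation then normalizes to $[1,0,1,0]$, and Lemma~\ref{lemma:{[1,0,1,0],[1,0,lambda]}} finishes. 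The case $f_{000}=0$ (forcing $f_{011},f_{101},f_{110}\neq 0$ by irreducibility) is handled by the same triangle gadget directly. So the missing idea in your attempt is precisely this triangle construction, which is what makes the three-point-support case go through.
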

\begin{proof}
    Suppose that $f$ satisfies the Parity condition; otherwise we are done by \autoref{thm:Holantc-nontrivial-eq-parity}. Up to the relation $\sim$, there are two cases:
    \begin{enumerate}[label=(\arabic*),leftmargin=*]
      \item $f=(x,0,0,y,0,z,w,0)$ with $xzw\neq 0$. Let $T=\begin{bsmallmatrix} 1 & 0 \\ 0 & \alpha\end{bsmallmatrix}$ where $\alpha$ will be determined later. Then $f_\alpha=T^{\otimes 3}f=(x,0,0,\alpha^2y,0,\alpha^2z,\alpha^2w,0)$. Consider the triangle in \autoref{fig:triangle}. It has the symmetric signature $g=[x^3+\alpha^6y^3,0,(x+\alpha^2y)\alpha^4zw,0]=[a,0,b,0]$. We can choose a suitable $\alpha\neq 0$ such that $ab\neq 0$. Now taking $M=\begin{bsmallmatrix} 1 & 0 \\ 0 & \beta\end{bsmallmatrix}$ where $\beta^2=a/b$, we obtain $h=M^{\otimes 3}g=[a,0,b\beta^2,0]=a[1,0,1,0]$. With $[1,0,\lambda]$ at hand, we can interpolate both $T$ and $M$. Thus $\Holant(\{[1,0,1,0],[1,0,\lambda]\})\redT \Holant^c(\{f,[1,0,\lambda]\})$. Since the former problem is \#P-hard by \autoref{lemma:{[1,0,1,0],[1,0,lambda]}}, so is the latter.
          \begin{figure}[h]
            \centering
            \includegraphics[scale=0.6]{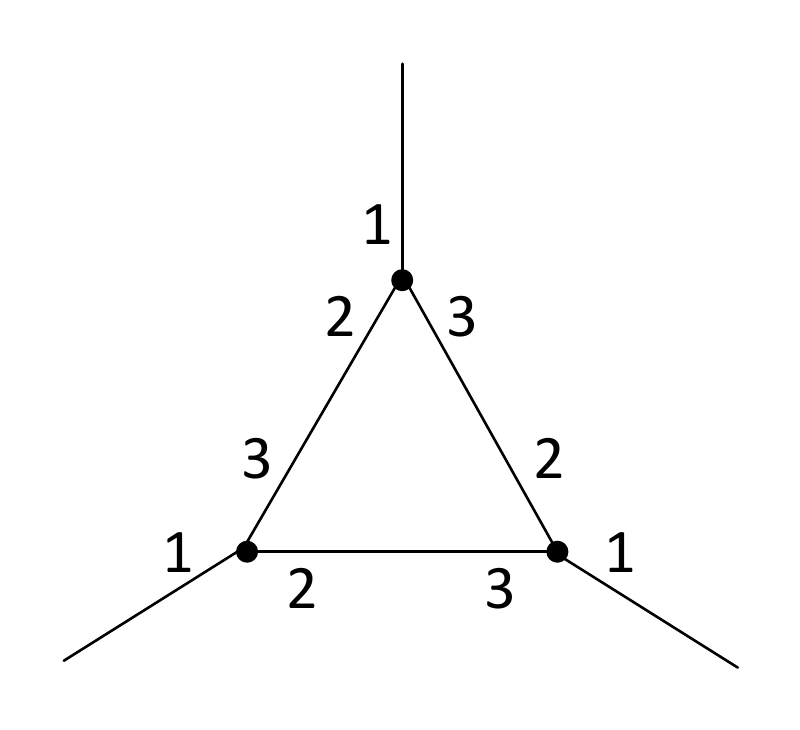}
            \caption{Three vertices are assigned the function $f_\alpha$.}\label{fig:triangle}
          \end{figure}
      \item $f=(0,0,0,x,0,y,z,0)$ with $xyz\neq 0$. Again, using the triangle structure, we obtain $g=[x^3,0,xyz,0]=[a,0,b,0]$. Then the proof is similar to that of case (1).
    \end{enumerate}
\end{proof}

Now we come to the main part of the hardness.

\begin{lemma}
\label{lemma:Holantc-nontrivial-eq-single}
    Let $f\notin \ptype$ be a function of arity $n\geq 3$. If $f$ satisfies the Parity condition, then $\Holant^c(\{f,[1,0,\lambda]\})$ is $\sP$-hard.
\end{lemma}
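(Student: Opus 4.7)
The plan is to proceed by induction on the arity $n$. For the base case $n=3$, if $f$ is irreducible then the conclusion is exactly \autoref{lemma:Holant-nontrivial-eq-ternary}, so the only thing to do is rule out the reducible case. A reducible ternary $f$ factors as $u\otimes h$ with $u$ unary and $h$ binary; the Parity condition on $f$ forces $u$ to be proportional to $[1,0]$ or $[0,1]$ (else $\supp(f)$ would contain elements of both parities), which in turn forces $\supp(h)\subseteq\{00,11\}$ or $\{01,10\}$. In either case $h\in\mathcal{E}$, and hence $f\in\ptype$, contradicting $f\notin\ptype$; so a reducible base case cannot occur.

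For the inductive step with $n\ge 4$, first suppose $f$ is reducible. Write $f$ as a tensor product of irreducible factors, choose one factor $g\notin\ptype$, and let $h$ be the product of the remaining factors, so that $f=g\otimes h$. A short case analysis on binary signatures shows that every binary non-$\ptype$ function has mixed-parity support, so the Parity condition on $f$ forbids $\arity(g)=2$; hence $\arity(g)\ge 3$, and the Parity condition transfers to $g$. When $h$ is not vanishing, \autoref{coro:decomposition-non-vanishing} realizes $g$, and the induction hypothesis applied to $g$ yields the desired hardness; the vanishing-$h$ subcase is handled by repartitioning the irreducible factors of $f$ so that the ``discarded'' part is non-vanishing while the ``extracted'' part is still non-$\ptype$ and still satisfies Parity.

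Now suppose $f$ is irreducible of arity $n\ge 4$. I examine the $2n$ pinnings $f^{x_i=0}$ and $f^{x_i=1}$ for $i\in[n]$: each has arity $n-1\ge 3$, and since pinning by $[1,0]$ or $[0,1]$ uniformly shifts the parity of every surviving support element, each pinning inherits the Parity condition. If some pinning lies outside $\ptype$, it is realizable from $\{f,[1,0],[0,1]\}$ via \autoref{lemma:realize-red}, and the induction hypothesis finishes the proof. The residual scenario is $f$ irreducible, $f\notin\ptype$, and every codimension-one slice of $f$ of product type.

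The hard part will be precisely this residual scenario. My plan is to exploit $[1,0,\lambda]$ --- together with the interpolated family $\{[x,0,y]:x,y\in\mathbb{C}\}$ obtained via \autoref{lemma:interpolate-binary-eq} --- as a ``weighted bridge'' between two coordinates of $f$. Self-gluing $f$ along a pair $(i,j)$ via $[x,0,y]$ realizes the function $x\,f^{x_i=0,x_j=0}+y\,f^{x_i=1,x_j=1}$ with freely choosable $x,y\in\mathbb{C}$. If for some pair $(i,j)$ the two codimension-two slices admit genuinely incompatible product-type decompositions, a suitable choice of $(x,y)$ then produces a non-$\ptype$ function, and by combining the bridging with a single-coordinate pinning when needed to keep the resulting arity at least $3$, we realize either a ternary irreducible non-$\ptype$ signature (hardness via \autoref{lemma:Holant-nontrivial-eq-ternary}) or a smaller-arity non-$\ptype$ signature still satisfying Parity (hardness via the induction hypothesis). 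The main structural step --- showing that an irreducible $f\notin\ptype$ with every codimension-one slice in $\ptype$ must exhibit such incompatible codimension-two slices somewhere --- is where the interplay between irreducibility, non-$\ptype$, and Parity is really exploited, and is the most delicate part of the argument.
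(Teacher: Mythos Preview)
Your inductive strategy has a genuine gap in the ``residual scenario.'' Consider the arity-$4$ function $f$ with
\[
M_{[2]}(f)=\begin{bmatrix} a & 0 & 0 & b \\ 0 & 0 & 0 & 0 \\ 0 & 0 & 0 & 0 \\ c & 0 & 0 & d \end{bmatrix},\qquad abcd\neq 0,\ ad\neq bc.
\]
This $f$ is irreducible, satisfies the Parity condition (even support $\{0000,0011,1100,1111\}$), and lies outside $\ptype$; yet every codimension-one pinning $f^{x_i=c}$ is a tensor product of $[1,0]$ or $[0,1]$ with a binary function supported on $\{00,11\}$, hence in $\ptype$. So this $f$ lands squarely in your residual case. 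But your bridging move fails here: for every pair $(i,j)$ the two slices $f^{x_i=0,x_j=0}$ and $f^{x_i=1,x_j=1}$ are binary signatures of the form $(\ast,0,0,\ast)$, so every combination $x\,f^{x_i=0,x_j=0}+y\,f^{x_i=1,x_j=1}$ again has support inside $\{00,11\}$ and is in $\ptype$ for all $x,y$. Thus the ``incompatible codimension-two slices'' you need simply do not exist for this $f$, and since $n=4$ the bridged function has arity $2$ --- too small for either \autoref{lemma:Holant-nontrivial-eq-ternary} or the induction hypothesis, and pinning only shrinks it further. The structural claim you flag as ``most delicate'' is in fact false as stated, so the plan cannot be completed along these lines.

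The paper avoids induction on arity altogether. It splits on whether $f^2\in\ptype$. When $f^2\notin\ptype$, \autoref{lemma:CLX-non-ptype} produces unaries $u_1,\dots,u_{n-2}$ with $\partial_{u_1}^{\{1\}}\cdots\partial_{u_{n-2}}^{\{n-2\}}(f^2)\notin\ptype$; replacing each $u_i=[x_i,y_i]$ by the interpolated $[x_i,0,y_i]$ and using it to connect two copies of $f$ yields an arity-$4$ signature $G$ (with $G(x,x,y,y)=g(x,y)$) whose $4\times 4$ matrix is analyzed directly. When $f^2\in\ptype$ (so $\supp(f)$ is affine), the paper invokes \autoref{lemma:non-ptype} to realize a small non-$\ptype$ gadget of one of two explicit shapes --- and the second shape is precisely the $4\times 4$ matrix above --- from which hardness follows via \autoref{lemma:interpolate-=_4} and a reduction from $\CSP$. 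Your counterexample is exactly what Case~2 of the paper's argument is built to catch.
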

\begin{proof}
    First consider the case $f^2\notin \ptype$. By \autoref{lemma:CLX-non-ptype}, and to simplify the notation, we assume that there are $n-2$ unary functions $u_i=[x_i,y_i]\ (i\in[n-2])$, such that $g=\partial_{u_1}^{\{1\}}\partial_{u_2}^{\{2\}}\cdots \partial_{u_{n-2}}^{\{n-2\}}(f^2)\notin \ptype$. With $[1,0,\lambda]$ at hand, we interpolate $v_i=[x_i,0,y_i]$ for each $i\in[n-2]$. Now take two copies of $f$. For each $i\in [n-2]$, we connect $i$th inputs of the two copies via $v_i$. This realizes a function $G$ of arity $4$, such that
    \begin{equation*}
        G(x,x,y,y)=g(x,y) \text{ for all }x,y\in\{0,1\}.
    \end{equation*}
    Because $f$ and $v_i\ (i\in [n-2])$ satisfy the Parity condition, so does $G$. Note that $g$ is binary and $g\notin\ptype$, so $g$ is non-degenerate, and at least three of $\{0000,0011,1100,1111\}$ belong to $\supp(G)$. Therefore, $G$ has even support, whose $4\times 4$ signature matrix is
    \begin{equation*}
        M_G=\begin{bmatrix}
                       a & 0 & 0 & b \\
                       0 & x & y & 0 \\
                       0 & z & w & 0 \\
                       c & 0 & 0 & d
                     \end{bmatrix}
    \end{equation*}
    where $g=\begin{bmatrix}a & b \\ c & d\end{bmatrix}$.  If one of $x,y,z$ or $w$ is nonzero, then at least one of the ternary functions in $\{G^{x_i=j}\myvert i=1,2,3,4\text{ and }j=0,1\}$, say $h$, is irreducible and not of product type. By \autoref{lemma:Holant-nontrivial-eq-ternary}, $\Holant^c(\{h,[1,0,\lambda]\})$ is \#P-hard. Now suppose that $x=y=z=w=0$, then we can interpolate $=_4$ by \autoref{lemma:interpolate-=_4}. Hence we have the following reductions:
    \begin{align*}
        \CSP^2(\{G,[1,0,\lambda]\}) &\redT \Holant(\{G,[1,0,\lambda],=_4\})\\
         &\redT \Holant^c(\{f,[1,0,\lambda]\}).
    \end{align*}
    The first reduction follows from \autoref{lemma:csp2-Holant}. And by \autoref{lemma:csp-to-csp2}, we have
    \begin{equation*}
        \CSP(\{g,[1,\lambda]\})\redT \CSP^2(\{G,[1,0,\lambda]\}).
    \end{equation*}
    Since $g\notin\ptype$ and $[1,\lambda]\notin\atype$, $\CSP(\{g,[1,\lambda]\})$ is $\sP$-hard. Therefore, so is $\Holant^c(\{f,[1,0,\lambda]\})$.

    Now we suppose that $f^2\in\ptype$ and hence the support of $f$ is affine. Then by \autoref{lemma:non-ptype}, we can realize a non-product-type function $g$. Since the set $\{f,[1,0],[0,1]\}$ satisfies the Parity condition, so $g$ can not be of the form $\begin{bmatrix}a & b \\ c & d\end{bmatrix}$ in \autoref{lemma:non-ptype}. There are two cases:
    \begin{enumerate}[label=(\arabic*),leftmargin=*]
      \item $g^2\notin\ptype$. We have shown that $\Holant^c(\{g,[1,0,\lambda]\})$ is $\sP$-hard.
      \item $g=\begin{bsmallmatrix} a & 0 & 0 & b \\ 0 & 0 & 0 & 0 \\ 0 & 0 & 0 & 0 \\ c & 0 & 0 & d \end{bsmallmatrix}$ where $abcd\neq 0$ and $(a,b,c,d)\notin\ptype$. Then we can use $g$ to interpolate $=_4$. Again, we have the reduction
          \begin{equation*}
            \CSP(\{(a,b,c,d),[1,\lambda]\}) \redT \Holant^c(\{f,[1,0,\lambda]\}).
          \end{equation*}
          Therefore, the problem $\Holant^c(\{f,[1,0,\lambda]\})$ is $\sP$-hard.
    \end{enumerate}
\end{proof}

\autoref{lemma:Holantc-nontrivial-eq-single} requires that $f$ satisfy the Parity condition. Only the conditions $f\notin\ptype$ and $\arity(f)\geq 3$ are not sufficient for hardness; it is possible that $f\in\ttype$. The following lemma explains why the Parity condition (assuming $f\notin\ptype$) excludes the case $f\in\ttype$.
\begin{lemma}
\label{lemma:parity-T-P}
    Let $f$ be a function satisfying the Parity condition. If $f\in \ttype$, then $f\in \ptype$.
\end{lemma}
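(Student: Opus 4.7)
The plan is to decompose $f$ as a tensor product of unary and binary factors and read the Parity condition off each factor. Since $f\in\ttype=\langle\mathcal{T}\rangle$, after a suitable permutation of inputs we may write $f=f_1\otimes f_2\otimes\cdots\otimes f_k$ with $\arity(f_j)\le 2$ for every $j$. The support of such a tensor product is the Cartesian product $\supp(f_1)\times\cdots\times\supp(f_k)$, so fixing a reference element $(\mathbf{a}_1,\ldots,\mathbf{a}_k)$ in that product and varying only the $j$th block between arbitrary $\mathbf{x}_j,\mathbf{y}_j\in\supp(f_j)$ produces two elements of $\supp(f)$ whose total Hamming weights differ by $|\mathbf{x}_j|-|\mathbf{y}_j|\pmod 2$. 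The Parity condition on $f$ forces this difference to be even, so each $f_j$ individually has uniform Hamming-weight parity in its support, i.e.\ satisfies the Parity condition. (If some $\supp(f_j)$ is empty then $f\equiv 0$, which lies in $\ptype$ trivially.)

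It then suffices to check that every unary or binary function satisfying the Parity condition already lies in $\ptype$. A unary $[a,b]$ has uniform parity support only if $ab=0$, and unaries are generators of $\ptype=\langle\mathcal{E}\rangle$, so this case is immediate. A binary $f_j$ with \emph{even} support has signature matrix $\begin{bmatrix} a & 0 \\ 0 & d\end{bmatrix}$, which factors as ${=_2}(x,y)\cdot u(x)\cdot v(y)$ for suitable unaries $u,v$ (for instance $u=[a,d]$ and $v=[1,1]$). A binary $f_j$ with \emph{odd} support has signature matrix $\begin{bmatrix} 0 & b \\ c & 0\end{bmatrix}$, which factors analogously as ${\neq_2}(x,y)\cdot u(x)\cdot v(y)$. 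In either case $f_j\in\ptype$.

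Finally, $\ptype=\langle\mathcal{E}\rangle$ is closed under tensor product and under permutation of inputs (it is defined as a tensor closure), so $f=f_1\otimes\cdots\otimes f_k\in\ptype$, and undoing the initial input permutation still lands us in $\ptype$. There is no substantive obstacle: the argument is a direct structural inspection once the Parity condition is pushed through the tensor decomposition. The only mildly fussy point is the zero-entry bookkeeping just noted, which is handled in a single line.
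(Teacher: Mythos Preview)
Your proof is correct and follows essentially the same approach as the paper: decompose $f$ into arity-$\le 2$ factors, show each factor inherits the Parity condition, verify that low-arity functions with Parity lie in $\ptype$, and use closure of $\ptype$ under tensor products. The only cosmetic difference is that the paper obtains each factor $f_i$ by pinning (so $f_i\in\sig(\{f,[1,0],[0,1]\})$ and inherits Parity through realizability), whereas you argue directly from the Cartesian-product structure of $\supp(f)$.
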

\begin{proof}
    Suppose that $f\in\ttype$ and it is not identically zero. Then $f=f_1\otimes\cdots\otimes f_k$ where $f_i$ is of arity $\leq 2$ for all $i\in [k]$. Every $f_i$ is not identically zero, thus $f_i\in\sig(\{f,[1,0],[0,1]\})$ for all $i\in [k]$ and hence they all satisfy the Parity condition. This means that $f_i\in \ptype$ all $i\in [k]$. Note that $\ptype$ is closed under tensor product, so $f\in\ptype$.
\end{proof}

To conclude this section, we prove \autoref{thm:Holantc-nontrivial-eq}. In fact, we have done most of the work in \autoref{lemma:Holantc-nontrivial-eq-single}.

\begin{proof}[Proof of \autoref{thm:Holantc-nontrivial-eq}]
    Given a function set $\mathcal{F}$, we suppose that $\mathcal{F}$ satisfies the Parity condition, otherwise we are done by \autoref{thm:Holantc-nontrivial-eq-parity}. Now suppose that $\mathcal{F}\not\subseteq\ptype$. Let $f$ be any function in $\mathcal{F}\backslash\ptype$. By \autoref{lemma:parity-T-P}, $f\notin \ttype$ since $f$ satisfies the Parity condition. So $f$ is of arity $\geq 3$. \autoref{lemma:Holantc-nontrivial-eq-single} shows that $\Holant^c(\{f,[1,0,\lambda]\})$ is $\sP$-hard. Since $\Holant^c(\{f,[1,0,\lambda]\})\redT \Holant^c(\mathcal{F}\cup\{[1,0,\lambda]\})$, the problem $\Holant^c(\mathcal{F}\cup\{[1,0,\lambda]\})$ is also \#P-hard.
\end{proof}

\section{$\ptype$-transformability and Adjacency Condition}
\label{sec:p-transformability}
We start with some simple facts from linear algebra.
Let $M=\begin{bmatrix}
         a_1 & a_2 & \cdots & a_n \\
         b_1 & b_2 & \cdots & b_n
       \end{bmatrix}\ (n\geq 2)$ be a non-negative matrix of rank 2. Then $A=MM^\mathsf{T}=\begin{bmatrix} a & b \\ b & c \end{bmatrix}$ satisfying $a,c>0$. Moreover, by Cauchy-Schwarz inequality, $\det A=ac-b^2>0$.

\begin{lemma}
\label{lemma:unequal-eigenvalues}
    If $a\neq c$ or $b\neq 0$, then $A$ has two distinct positive eigenvalues $\alpha$ and $\beta$.
\end{lemma}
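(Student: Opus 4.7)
The plan is to read off the eigenvalues directly from the characteristic polynomial and then check the three required properties (reality/distinctness, positivity, and nonzero separation) from the two assumptions $a,c>0$ and $\det A = ac - b^2 > 0$, together with the hypothesis $a\neq c$ or $b\neq 0$.

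First, I would write the characteristic polynomial of $A$ as
\begin{equation*}
    p(\lambda) = \lambda^2 - (a+c)\lambda + (ac - b^2),
\end{equation*}
so that the two roots $\alpha,\beta$ satisfy $\alpha + \beta = a + c$ and $\alpha\beta = ac - b^2$ by Vieta. The discriminant of $p$ can be rearranged as
\begin{equation*}
    (a+c)^2 - 4(ac - b^2) = (a - c)^2 + 4b^2.
\end{equation*}
This is the key algebraic step, and it is the only place the extra hypothesis $a\neq c$ or $b\neq 0$ is used: under that hypothesis the discriminant is strictly positive, so $\alpha$ and $\beta$ are real and distinct.

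To finish, I would observe that $\alpha\beta = ac - b^2 > 0$ by the Cauchy--Schwarz consequence recorded just before the lemma, so $\alpha$ and $\beta$ have the same sign; and $\alpha + \beta = a + c > 0$ since $a,c > 0$. Together these force both eigenvalues to be strictly positive. There is no serious obstacle here: the whole argument is a one-line discriminant computation followed by a sign check via trace and determinant; the only thing to be careful about is that the strict positivity of $\det A$ is supplied by the setup preceding the lemma (rank-two non-negative $M$ plus Cauchy--Schwarz), not by the hypothesis of the lemma itself.
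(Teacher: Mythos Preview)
Your proof is correct and essentially identical to the paper's own argument: the paper also computes the characteristic polynomial, rewrites the discriminant as $(a-c)^2+4b^2>0$ to get two distinct real roots, and then uses $\alpha+\beta=a+c>0$ and $\alpha\beta=ac-b^2>0$ to conclude positivity.
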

\begin{proof}
    Suppose that $a\neq c$ or $b\neq 0$. The characteristic polynomial of $A$ is $p(t)=t^2-(a+c)t+ac-b^2$. Note that $\Delta=(a+c)^2-4(ac-b^2)=(a-c)^2+4b^2> 0$, so the quadratic equation $p(t)=0$ has two distinct real roots $\alpha$ and $\beta$. Since $\alpha+\beta=a+c>0$ and $\alpha\beta=ac-b^2>0$, both $\alpha$ and $\beta$ are positive.
\end{proof}

The following lemma is a simple case of the Spectral Theorem for real symmetric matrices.
\begin{lemma}
\label{lemma:realsym-diag}
    There is an orthogonal matrix $H$ such that $HAH^\mathsf{T}=\begin{bmatrix}\alpha & 0 \\ 0 & \beta \end{bmatrix}$, where $\alpha$ and $\beta$ are the eigenvalues of $A$.
\end{lemma}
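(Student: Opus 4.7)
The plan is to split into two cases according to whether $A$ has a repeated eigenvalue or not, and in each case exhibit the orthogonal matrix $H$ explicitly by assembling unit eigenvectors.

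First I would dispose of the degenerate case not covered by \autoref{lemma:unequal-eigenvalues}: if $a=c$ and $b=0$, then $A=aI$ is already diagonal with $\alpha=\beta=a$, so $H=I$ works trivially. Otherwise \autoref{lemma:unequal-eigenvalues} applies, yielding two distinct positive eigenvalues $\alpha\neq\beta$. Pick any nonzero eigenvectors $\mathbf{v}_1,\mathbf{v}_2$ of $A$ corresponding to $\alpha$ and $\beta$ respectively (they exist because $A-\alpha I$ and $A-\beta I$ are singular by choice of $\alpha,\beta$). The key algebraic observation is that eigenvectors of a real symmetric matrix for distinct eigenvalues are automatically orthogonal: from $A\mathbf{v}_1=\alpha\mathbf{v}_1$ and $A\mathbf{v}_2=\beta\mathbf{v}_2$, together with $A^\mathsf{T}=A$, we get
\begin{equation*}
    \alpha\,\mathbf{v}_1^\mathsf{T}\mathbf{v}_2 = (A\mathbf{v}_1)^\mathsf{T}\mathbf{v}_2 = \mathbf{v}_1^\mathsf{T} A\mathbf{v}_2 = \beta\,\mathbf{v}_1^\mathsf{T}\mathbf{v}_2,
\end{equation*}
which forces $\mathbf{v}_1^\mathsf{T}\mathbf{v}_2=0$ since $\alpha\neq\beta$.

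After rescaling so that $\|\mathbf{v}_1\|=\|\mathbf{v}_2\|=1$, I would define $H$ to be the $2\times 2$ matrix whose rows are $\mathbf{v}_1^\mathsf{T}$ and $\mathbf{v}_2^\mathsf{T}$. Orthonormality of the rows gives $HH^\mathsf{T}=I$, so $H$ is orthogonal. A direct block computation then yields the required diagonalization: the $(i,j)$-entry of $HAH^\mathsf{T}$ equals $\mathbf{v}_i^\mathsf{T} A\mathbf{v}_j$, which is $\alpha$ or $\beta$ on the diagonal and $0$ off-diagonal by the orthogonality just established.

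There is no real obstacle here; this is the standard spectral-theorem argument specialized to the $2\times 2$ case. The only mild subtlety is remembering to peel off the trivial case $A=aI$ separately, since \autoref{lemma:unequal-eigenvalues} explicitly excludes it and the recipe above needs two distinct eigenvalues to produce orthogonal eigenvectors automatically.
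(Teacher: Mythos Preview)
Your argument is correct and is precisely the standard elementary proof of the $2\times 2$ spectral theorem. The paper itself does not supply a proof of this lemma at all: it simply states the result, remarking that it is ``a simple case of the Spectral Theorem for real symmetric matrices.'' So there is nothing to compare against; you have filled in exactly the details the paper chose to omit.
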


Let $f$ be a non-negative binary function. If $f$ is non-degenerate and affine, then $f=a[1,0,1]$ or $f=a[0,1,0]$ for some $a>0$.

\begin{lemma}
\label{lemma:binary-non-affine}
    Let $f=(a,b,c,d)$ be a non-negative function. Suppose that $f$ is non-degenerate and $f\notin \atype$. Then for any function set $\mathcal{F}$ with $f\in\sig(\mathcal{F})$, $\Holant(\mathcal{F})$ is $\sP$-hard or $\mathcal{F}\subseteq\ttype$ or $\mathcal{F}\subseteq H\ptype$ for some orthogonal matrix $H$.
\end{lemma}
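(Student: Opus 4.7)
The plan is to use $f$ to produce, after a suitable orthogonal holographic transformation, a binary signature of the form $[1,0,\lambda]$ with $\lambda$ a positive real different from $1$ (hence not a root of unity), and then invoke \autoref{thm:Holant-nontrivial-eq}.

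Let $M$ denote the $2\times 2$ matrix form of $f$. Since $f\in\sig(\mathcal{F})$ and realizability is closed under transposition and under forming $MM^{\mathsf T}$, the symmetric matrix $A=MM^{\mathsf T}$ is realizable from $\mathcal{F}$. Non-degeneracy of $f$ forces $\mathrm{rank}(M)=2$, so $A$ is positive-definite with strictly positive diagonal. The first key step is to argue that the two eigenvalues of $A$ are distinct: otherwise $A=\alpha I$, forcing $M/\sqrt{\alpha}$ to be a non-negative $2\times 2$ orthogonal matrix, hence a permutation matrix, so $f$ would be a positive multiple of $=_2$ or $\neq_2$, both of which lie in $\atype$ — contradicting $f\notin\atype$. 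Thus $A$ is not scalar and \autoref{lemma:unequal-eigenvalues} yields two distinct positive eigenvalues $\alpha,\beta$.

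By \autoref{lemma:realsym-diag} there is an orthogonal $H$ with $HAH^{\mathsf T}=\mathrm{diag}(\alpha,\beta)$. Perform the holographic reduction by $H$: then $\Holant(H\mathcal{F})\eqredT\Holant(\mathcal{F})$, and since $H$ transforms the realizable signature with matrix $A$ into the one with matrix $HAH^{\mathsf T}=\mathrm{diag}(\alpha,\beta)$, the symmetric signature $\alpha[1,0,\lambda]$ with $\lambda=\beta/\alpha$ is realizable from $H\mathcal{F}$. Because $\alpha\neq\beta$ are positive reals, $\lambda$ is a positive real with $\lambda\neq 1$, and so not a root of unity; by \autoref{lemma:realize-red} we may freely add $[1,0,\lambda]$ to $H\mathcal{F}$.

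Now \autoref{thm:Holant-nontrivial-eq} applied to $H\mathcal{F}$ and this $\lambda$ yields three alternatives: either $\Holant(H\mathcal{F}\cup\{[1,0,\lambda]\})$ is $\sP$-hard (whence so is $\Holant(\mathcal{F})$), or $H\mathcal{F}\subseteq\ttype$, or $H\mathcal{F}\subseteq\ptype$. Because each $T^{\otimes n}$ preserves arity and tensor structure, $\ttype=\langle\mathcal{T}\rangle$ is invariant under every invertible holographic transformation, so $H\mathcal{F}\subseteq\ttype$ is equivalent to $\mathcal{F}\subseteq\ttype$. And $H\mathcal{F}\subseteq\ptype$ rewrites as $\mathcal{F}\subseteq H^{\mathsf T}\ptype$ with $H^{\mathsf T}$ orthogonal, giving the required form. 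The main obstacle is the eigenvalue-distinctness step: it is the only place where the hypothesis $f\notin\atype$ is consumed, and without it the diagonalization would collapse to the trivial $[1,0,1]$ and provide no usable input to \autoref{thm:Holant-nontrivial-eq}.
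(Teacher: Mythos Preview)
Your proof is correct and follows essentially the same route as the paper: form $A=MM^{\mathsf T}$, show its eigenvalues are distinct, diagonalize by an orthogonal $H$, and invoke \autoref{thm:Holant-nontrivial-eq}. The only cosmetic difference is in the eigenvalue-distinctness step: the paper computes $A$ explicitly and checks that either the off-diagonal entry $ac+bd$ is nonzero or the diagonal entries differ (using non-negativity to force $ac=bd=0$ in the first subcase), whereas you argue more abstractly that $A=\alpha I$ would make $M/\sqrt{\alpha}$ a non-negative orthogonal matrix, hence a permutation matrix, forcing $f\in\atype$. Both arguments are equivalent and equally short.
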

\begin{proof}
    Since $f\in\sig(\mathcal{F})$, the symmetric matrix
    \begin{equation*}
        A=\begin{bmatrix}a & b \\ c & d\end{bmatrix}\begin{bmatrix}a & c \\ b & d\end{bmatrix}=\begin{bmatrix}a^2+b^2 & ac+bd \\ ac+bd & c^2+d^2\end{bmatrix}
    \end{equation*}
    is also realizable. Because $f$ is non-degenerate, $a^2+b^2,c^2+d^2>0$ and $ac+bd\geq 0$. We claim that $ac+bd\neq 0$ or $a^2+b^2\neq c^2+d^2$. Suppose $ac+bd=0$, then $ac=bd=0$ since $f$ is non-negative. So $f=\begin{bmatrix}a & 0 \\ 0 & d\end{bmatrix}$ or $f=\begin{bmatrix}0 & b \\ c & 0\end{bmatrix}$. In both cases, as $f\notin\atype$, $a^2+b^2\neq c^2+d^2$.

    By \autoref{lemma:unequal-eigenvalues} and \autoref{lemma:realsym-diag}, there is some orthogonal matrix $H$ such that $HAH^\mathsf{T}=\begin{bmatrix}\alpha & 0 \\ 0 & \beta \end{bmatrix}$, where $\alpha$ and $\beta$ are the two distinct positive eigenvalues of $A$. Now we perform the transformation $H$ and obtain the following equivalence:
    \begin{equation*}
        \Holant(\{[\alpha,0,\beta]\}\cup H\mathcal{F})\eqredT \Holant(\{A\}\cup\mathcal{F})\eqredT \Holant(\mathcal{F}).
    \end{equation*}
    The latter equivalence follows from the fact $A\in\sig(f)\subseteq\sig(\mathcal{F})$. $\beta/\alpha$ is nonzero and not a root of unity, so if $H\mathcal{F}\not\subseteq \ttype$ and $H\mathcal{F}\not\subseteq \ptype$, the problem is $\sP$-hard by \autoref{thm:Holant-nontrivial-eq}.
\end{proof}

Let $f$ be an irreducible non-negative function of arity $n\geq 2$. For $i\in [n]$, we use $M_i(f)$ to denote the $2\times 2^{n-1}$ matrix whose row is indexed by $x_i\in\{0,1\}$ and columns by $x_1\cdots x_{i-1}x_{i+1}\cdots x_n\in\{0,1\}^{n-1}$, and
\begin{equation*}
    M_i(f)(x_i,x_1\cdots x_{i-1}x_{i+1}\cdots x_n)=f(x_1,...,x_n).
\end{equation*}
We can realize $n$ symmetric matrices with non-negative entries:
\begin{equation*}
    A_i(f)=M_i(f)(M_i(f))^\mathsf{T}=\begin{bmatrix} a_i & b_i \\ b_i & c_i \end{bmatrix},\text{ for }i\in[n].
\end{equation*}
Since $f$ is irreducible, $M_i(f)\ (i\in[n])$ are all of rank 2. Thus for all $i$, $a_i,c_i>0$ and $a_ic_i>b_i^2$.

We say two strings $\mathbf{u},\mathbf{v}\in\{0,1\}^n$ are \emph{adjacent} if $\mathbf{u}\oplus \mathbf{v}$ has Hamming weight 1.

\begin{mydef}[Adjacency Condition]
    A function satisfies the \emph{Adjacency} condition if there are two adjacent elements in its support.
\end{mydef}

\begin{corollary}
\label{coro:adjacency}
    Let $f$ be an irreducible non-negative function of arity $n\geq 2$. If $f$ satisfies the Adjacency condition, then for any function set $\mathcal{F}$ containing $f$, $\Holant(\mathcal{F})$ is $\sP$-hard or $\mathcal{F}\subseteq\ttype$ or $\mathcal{F}\subseteq H\ptype$ for some orthogonal matrix $H$.
\end{corollary}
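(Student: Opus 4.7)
My plan is to reduce the statement to \autoref{lemma:binary-non-affine} by realizing, from $f$, a binary non-negative function $g$ that is non-degenerate and not affine. The discussion just before the statement already provides the candidate: for each $i\in[n]$, the symmetric matrix $A_i(f)=M_i(f)(M_i(f))^{\mathsf T}$ is realizable (contract two copies of $f$ along every input except the $i$-th via $=_2$, which is available inside any Holant problem), and is of the form $\begin{bmatrix}a_i & b_i\\ b_i & c_i\end{bmatrix}$ with $a_i,c_i>0$ and $a_ic_i>b_i^2$ since irreducibility of $f$ forces $M_i(f)$ to have rank $2$. In particular $A_i(f)$ is already non-degenerate as a binary function for every $i$.

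The role of the Adjacency condition is exactly to select a coordinate $i$ for which $A_i(f)$ additionally fails to be affine. Concretely, let $\mathbf{u},\mathbf{v}\in\supp(f)$ be adjacent and let $i$ be the unique coordinate where $\mathbf{u}$ and $\mathbf{v}$ differ; write $\mathbf{y}\in\{0,1\}^{n-1}$ for the common restriction of $\mathbf{u}$ and $\mathbf{v}$ to $[n]\setminus\{i\}$. Then $f(0,\mathbf{y})$ and $f(1,\mathbf{y})$ (with the $i$-th coordinate displayed first) are both nonzero, so
\begin{equation*}
  b_i \;=\; \sum_{\mathbf{y}'\in\{0,1\}^{n-1}} f(0,\mathbf{y}')\,f(1,\mathbf{y}') \;>\; 0.
\end{equation*}
Combined with $a_i,c_i>0$, this makes every entry of $A_i(f)$ strictly positive, so its support is the whole of $\{0,1\}^2$. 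A non-negative affine function with full support must be a positive scalar multiple of $[1,1,1,1]=[1,1]^{\otimes 2}$, which is degenerate; since $A_i(f)$ is non-degenerate, it is not affine.

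Having verified the hypotheses of \autoref{lemma:binary-non-affine} for $g:=A_i(f)\in\sig(\mathcal{F})$, the trichotomy for $\Holant(\mathcal{F})$ follows immediately. The only non-routine step is checking non-affineness, and the key observation is precisely that the Adjacency condition is exactly the combinatorial certificate that forces the off-diagonal entry $b_i$ to be positive; everything else is linear algebra and the rank-$2$ consequence of irreducibility.
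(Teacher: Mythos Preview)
Your proposal is correct and follows essentially the same route as the paper: the paper also observes that the Adjacency condition yields some $A_i(f)$ with $b_i\neq 0$ and then invokes \autoref{lemma:binary-non-affine}. Your write-up simply spells out in more detail why $A_i(f)$ is non-degenerate (rank $2$ from irreducibility) and non-affine (full support forces the pure-affine form $\lambda[1,1,1,1]$, which is degenerate), whereas the paper leaves these to the preceding paragraph and the remark that a non-degenerate non-negative affine binary function must be $a[1,0,1]$ or $a[0,1,0]$.
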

\begin{proof}
    If $f$ satisfies the Adjacency condition, then there exists some $A_i(f)$ with $b_i\neq 0$. And the conclusion follows from \autoref{lemma:binary-non-affine}.
\end{proof}

\autoref{lemma:binary-non-affine} and \autoref{coro:adjacency} are important for later proofs. It often serves as the first step, to filter some signature sets that are $\ptype$-transformable or lead to \#P-hardness. Those passing this filter must satisfy certain structural properties.

\section{On Special Functions of Arity 4}
\label{sec:special-4}

In this section, we consider some special functions of arity 4.

\begin{lemma}
\label{lemma:special-4}
    Let $f$ be a function of arity $4$, whose signature matrix has the form
    \begin{equation}
    \label{eq:special-4}
        M_f= \begin{bmatrix}
            f_{0000} & f_{0001} & f_{0010} & f_{0011} \\
            f_{0100} & f_{0101} & f_{0110} & f_{0111} \\
            f_{1000} & f_{1001} & f_{1010} & f_{1011} \\
            f_{1100} & f_{1101} & f_{1110} & f_{1111} \\
            \end{bmatrix}
        = \begin{bmatrix}
          1 & 0 & 0 & a \\
          0 & b & c & 0 \\
          0 & c & b & 0 \\
          a & 0 & 0 & 1
        \end{bmatrix}
    \end{equation}
where $a,b,c\geq 0$ and at least two of them are positive. Then $\Holant(f)$ is $\sP$-hard if $f\neq [1,0,1,0,1]$.
\end{lemma}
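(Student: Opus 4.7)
My plan is to apply the orthogonal Hadamard transformation $H = \tfrac{1}{\sqrt 2}\begin{bsmallmatrix}1 & 1 \\ 1 & -1\end{bsmallmatrix}$ and reduce to a hard $\CSP$ instance via Lemma~\ref{lemma:interpolate-=_4} and Theorem~\ref{thm:csp-dichotomy}. A direct computation shows $f' = H^{\otimes 4}f$ has the same block form as $f$ with new parameters
\[
(a', b', c') \;=\; \tfrac{1}{1+a+b+c}\bigl(1+a-b-c,\ 1-a+b-c,\ 1-a-b+c\bigr),
\]
so by orthogonality $\Holant(f) \eqredT \Holant(f')$. The exceptional case $f = [1,0,1,0,1]$, i.e.\ $(a,b,c) = (1,1,1)$, maps to $f' \propto {=_4}$, which is tractable and explains why it must be excluded.

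The main (easier) case is $a = 1$ and $b = c$. Here the middle block of $f'$ vanishes, and its corner $\begin{bsmallmatrix}1+b & 1-b \\ 1-b & 1+b\end{bsmallmatrix}$ has full rank since the hypothesis ``at least two of $a,b,c$ positive'' together with $(a,b,c) \neq (1,1,1)$ forces $b \in (0,\infty) \setminus \{1\}$. Thus Lemma~\ref{lemma:interpolate-=_4} interpolates $=_4$ from $f'$, and Lemma~\ref{lemma:csp2-Holant} gives $\CSP^2(\{f'\}) \redT \Holant(\{f', {=_4}\}) \redT \Holant(f)$. After a permutation of inputs realized by Lemma~\ref{lemma:function-permutation}, the function $f'$ factorizes in the form required by Lemma~\ref{lemma:csp-to-csp2} with $f_1 = [1+b, 1-b, 1+b]$; hence $\CSP(f_1) \redT \CSP^2(\{f'\})$. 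A direct check shows that $f_1 \notin \atype \cup \ptype$ whenever $b \notin \{0,1\}$ (the ratio $(1-b)/(1+b)$ is never a 4th root of unity for such $b$, and the rank-1 tensor decomposition forces $b \in \{0,1\}$), so Theorem~\ref{thm:csp-dichotomy} delivers the desired $\sP$-hardness.

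For the remaining regimes ($a \neq 1$ or $b \neq c$), I would exploit the permuted copies $f_\pi \in \sig(\{f\})$ guaranteed by Lemma~\ref{lemma:function-permutation}: different permutations produce matrices of the same block structure but with $(a,b,c)$ rearranged (e.g.\ $f_{(23)}$ corresponds to the triple $(b,a,c)$). Combining $f$ with such $f_\pi$ in gadgets yields a richer family of 4-ary functions whose parameters can be steered — through a case analysis on the original $(a,b,c)$ — toward the main configuration $a = 1, b = c$. The fully symmetric subcase $a = b = c$ (where $f = [1,0,a,0,1]$) can alternatively be handled via the known symmetric Holant dichotomy.

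The main obstacle is the orbit analysis connecting an arbitrary admissible $(a,b,c)$ to the favorable configuration. A first attempt using only Hadamard transformation and self-composition $M_f \mapsto M_f^2$ is doomed: Hadamard gives $a' = 1$ only when $b = c = 0$ (excluded), while self-composition preserves both the invariant $\{a = 1\}$ and the invariant $\{b = c\}$. So the bridge to the favorable configuration must genuinely use the permutation realizations and combined multi-copy gadgets, which makes the case analysis the real technical content of the argument.
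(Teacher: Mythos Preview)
Your treatment of the special case $a=1$, $b=c$ is correct and mirrors the paper's endgame: Hadamard kills the middle block, Lemma~\ref{lemma:interpolate-=_4} interpolates $=_4$, and a $\CSP$ reduction via Lemma~\ref{lemma:csp-to-csp2} finishes. Your computation of $(a',b',c')$ is also right.

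The genuine gap is everything else. You call $a=1$, $b=c$ the ``main'' case, but it is a measure-zero slice of the parameter space; the actual work is reducing an arbitrary admissible $(a,b,c)$ to it. You propose to do this by combining permuted copies $f_\pi$ in unspecified gadgets and ``steering'' the parameters, then correctly observe that Hadamard and self-composition alone cannot bridge the gap. That observation is the end of your argument, not the beginning of one: you never exhibit a gadget that moves a generic triple toward the favorable configuration, and there is no indication such a direct steering is possible.

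The paper does not attempt this steering at all. Its key idea is a different target: a four-vertex \emph{tetrahedron} gadget built from copies of $f$ (or a permutation $f_c$) that produces a \emph{redundant} signature matrix
\[
M_g=\begin{bmatrix} x & 0 & 0 & y \\ 0 & z & z & 0 \\ 0 & z & z & 0 \\ y & 0 & 0 & x \end{bmatrix},
\]
to which Lemma~\ref{lemma:redundant-matrix} (the Cai--Guo--Williams criterion) applies whenever the compressed $3\times 3$ matrix is nonsingular. A direct calculation gives $\det\widetilde{M_g}=0$ iff $(1-c^2)^2=(a^2-b^2)^2$, and the analogous gadget with $f_c$ gives $(1-a^2)^2=(b^2-c^2)^2$. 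If either compressed matrix is nonsingular, hardness follows immediately; if \emph{both} vanish, a short algebraic argument (using the normalizations $0\le a<1$, $0\le c\le b$, obtained via $M_fM_f^{\mathsf T}$ if necessary) forces $b=1$ and $a=c$. Only then does the paper land in your ``main'' case, which it dispatches exactly as you do. The tetrahedron gadget and the redundant-matrix criterion are the missing bridge.
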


Before proving the hardness, we introduce the redundant matrices defined by Cai, Guo and Williams \cite{CGW2013}, and a related complexity result. A $4\times 4$ matrix is called \emph{redundant}, if it has identical middle two rows and identical middle two columns. Given a $4\times 4$ redundant matrix $M$, its \emph{compressed matrix}, denoted by $\widetilde{M}$, is the $3\times 3$ matrix $AMB$ where
\begin{equation*}
    A=\begin{bmatrix}
        1 & 0 & 0 & 0 \\
        0 & \frac{1}{2} & \frac{1}{2} & 0 \\
        0 & 0 & 0 & 1
      \end{bmatrix},\
    B=\begin{bmatrix}
        1 & 0 & 0 \\
        0 & 1 & 0 \\
        0 & 1 & 0 \\
        0 & 0 & 1
      \end{bmatrix}.
\end{equation*}

\begin{lemma}[\cite{CGW2013}]
\label{lemma:redundant-matrix}
    Let $f$ be an arity $4$ signature with complex weights. If $M_f$ is redundant and $\widetilde{M_f}$ is nonsingular, then $\Holant(f)$ is $\sP$-hard.
\end{lemma}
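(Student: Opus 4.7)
The plan is to case-split on the values of $(a,b,c)$, using the redundant-matrix \autoref{lemma:redundant-matrix} whenever possible and handling the exceptional subcases via the Hadamard transform together with a $\CSP$ reduction.

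Case 1 ($b=c$): Then $M_f$ is redundant, with compressed matrix $\widetilde{M_f}=\begin{bsmallmatrix} 1 & 0 & a \\ 0 & 2b & 0 \\ a & 0 & 1 \end{bsmallmatrix}$ of determinant $2b(1-a^2)$. The hypothesis forces $b>0$ (else only $a$ could be positive). If $a\neq 1$, then $\widetilde{M_f}$ is nonsingular and \autoref{lemma:redundant-matrix} gives $\sP$-hardness directly. Otherwise $a=1$ and $b\neq 1$ (since $b=1$ is the excluded case $[1,0,1,0,1]$). For this subcase, apply the orthogonal Hadamard $H=\frac{1}{\sqrt{2}}\begin{bsmallmatrix} 1 & 1 \\ 1 & -1 \end{bsmallmatrix}$; a direct computation of $H^{\otimes 4}f$ yields
\[
M_{Hf}=\begin{bmatrix} 1+b & 0 & 0 & 1-b \\ 0 & 0 & 0 & 0 \\ 0 & 0 & 0 & 0 \\ 1-b & 0 & 0 & 1+b \end{bmatrix},
\]
whose outer $2\times 2$ block has determinant $4b\neq 0$, so \autoref{lemma:interpolate-=_4} interpolates $=_4$. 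Writing $Hf(x_1,x_2,x_3,x_4)={=_2}(x_1,x_2)\cdot{=_2}(x_3,x_4)\cdot k(x_1,x_3)$ with $k=\begin{bsmallmatrix} 1+b & 1-b \\ 1-b & 1+b \end{bsmallmatrix}$, a suitable input permutation fits the hypothesis of \autoref{lemma:csp-to-csp2}, giving $\CSP(\{k\})\redT\CSP^2(\{Hf\})\redT\Holant(\{Hf,=_4\})\eqredT\Holant(f)$. Since $k$ is non-degenerate and positive with $k\notin\atype\cup\ptype$ for $b\notin\{0,1\}$, \autoref{thm:csp-dichotomy} completes the subcase.

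Case 2 ($b\neq c$ but at least two of $a,b,c$ coincide): By \autoref{lemma:function-permutation}, input permutations of $f$ act on the parameter triple $(a,b,c)$ via the natural quotient $S_4\to S_3$ arising from the action on the three perfect matchings of $K_4$. Choose a permutation placing the coincident pair into the middle two slots of the signature matrix, reducing to Case 1.

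Case 3 ($a,b,c$ pairwise distinct): This is the main obstacle. Neither input permutations nor the Hadamard transform $H$ (which, on the spectrum $\{1{+}a,\,1{-}a,\,b{+}c,\,b{-}c\}$ of $M_f$, swaps the pair $1-a\leftrightarrow b+c$ while fixing the other two, and whose Hadamard-image parameters $(a'',b'',c'')$ coincide pairwise exactly when the originals do) can bring us into the redundant regime while keeping the parameters distinct. The plan is to build a more elaborate gadget mixing $f$, its input-permuted variants (e.g.\ $f_\pi$ with $\pi=(3\,4)$, which has parameters $(a,c,b)$), and the Hadamard $H$, realizing an arity-$4$ signature whose matrix either becomes redundant (reducing to Case 1) or matches the form of \autoref{lemma:interpolate-=_4} so that the Case 1 $\CSP$ argument applies. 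For instance, the two-copy gadget $g(y_1,y_2,y_3,y_4)=\sum_{a_1,a_2} f(y_1,a_1,y_3,a_2)\,f(y_2,a_1,y_4,a_2)$ has parameter triple $\bigl((a^2+c^2)/(1+b^2),\,2b/(1+b^2),\,2ac/(1+b^2)\bigr)$, with equal middle entries precisely when $b=ac$; varying the wiring over several such gadgets should yield enough independent algebraic conditions to cover all pairwise-distinct triples. I expect this step to be the main technical difficulty, since pure matrix-product gadgets preserve both the symmetric $2\times 2$ form of the inner block of $M_f$ and its rank, so genuine redundancy must be engineered by a non-matrix-product construction.
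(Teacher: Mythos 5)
Your proposal does not prove the statement in question. \autoref{lemma:redundant-matrix} is the general criterion from \cite{CGW2013}: for an \emph{arbitrary} complex-weighted arity-$4$ signature $f$, redundancy of $M_f$ together with nonsingularity of $\widetilde{M_f}$ implies $\sP$-hardness of $\Holant(f)$. What you have written is instead an attempted proof of \autoref{lemma:special-4}, the hardness claim for the specific non-negative family $M_f=\begin{bsmallmatrix} 1 & 0 & 0 & a \\ 0 & b & c & 0 \\ 0 & c & b & 0 \\ a & 0 & 0 & 1\end{bsmallmatrix}$. Worse, your Case 1 explicitly \emph{invokes} \autoref{lemma:redundant-matrix} as a black box, so as a proof of that lemma the argument is circular. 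The lemma is a cited external result; this paper does not reprove it, and its proof in \cite{CGW2013} rests on machinery (gadget constructions and polynomial interpolation starting from an arbitrary redundant signature with nonsingular compression, followed by reduction from a known hard problem) that appears nowhere in your write-up. Nothing in your three cases addresses, for instance, a redundant $M_f$ with nonzero entries outside the even-support pattern, or with genuinely complex entries --- both of which the statement covers.

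Even read as a proof of \autoref{lemma:special-4}, the argument is incomplete: Case 3 (pairwise distinct $a,b,c$) is left as a plan, and it is exactly the case that needs the real idea. The paper resolves it with the tetrahedron gadget of \autoref{fig:tetrahedron}: because that gadget is rotationally symmetric, its signature matrix is \emph{automatically} redundant whatever $a,b,c$ are, so one never needs to force $b=c$ by permuting inputs; one only computes when the compressed matrix of the gadget (and of the analogous gadget built from the permuted signature $f_c$) is singular, and the two singularity conditions together force $b=1$ and $a=c$, which reduces to the Hadamard/$\CSP$ case you do handle. The ``non-matrix-product construction'' you say must be engineered is precisely this gadget; without it your Case 3 is a gap, not a proof.
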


Let $f$ be a function whose signature matrix has the form (\ref{eq:special-4}).
For hardness, it is natural to use $f$ to construct a $4\times 4$ redundant signature matrix whose compressed matrix is nonsingular. Consider the gadget in \autoref{fig:tetrahedron}. The following observations can save our labor in calculating the signature $g$ of this gadget:
\begin{enumerate}[label=\textbullet,labelindent=\parindent,leftmargin=*]
  \item $f$ has even support: For all $\mathbf{x}\in\{0,1\}^4$ of odd Hamming weights, $f(\mathbf{x})=0$. Therefore, so does $g$.
  \item For all $\mathbf{x}\in\{0,1\}^4$, $f(\mathbf{x})=f(\overline{\mathbf{x}})$. Thus $g$ also has this property.
  \item The gadget is rotationally symmetric, hence $g_{0101}=g_{0110}$.
\end{enumerate}
\begin{figure}[h]
  \centering
  \includegraphics[scale=0.6]{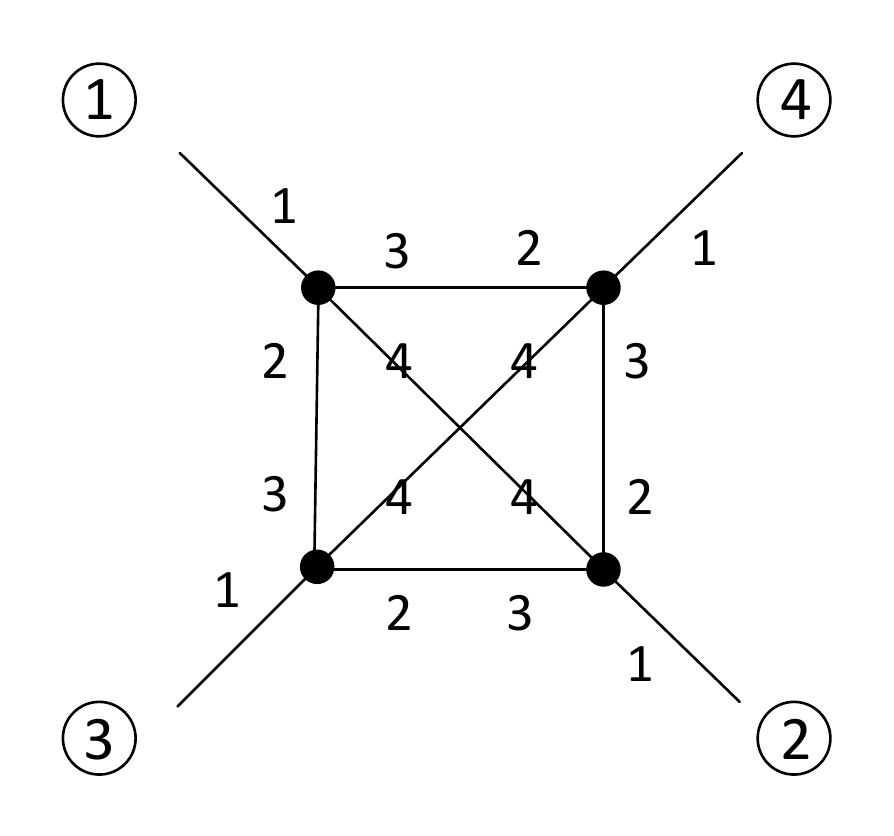}
  \caption{The four black circles are all assigned the function $f$. The inputs of this gadget are numbered 1,3,2,4 counterclockwise.}\label{fig:tetrahedron}
\end{figure}

To summarize, the matrix form of $g$ is
\begin{equation*}
    M_g=\begin{bmatrix}
      x & 0 & 0 & y \\
      0 & z & z & 0 \\
      0 & z & z & 0 \\
      y & 0 & 0 & x
    \end{bmatrix}.
\end{equation*}
Calculation shows that $x=1+4abc+2a^2b^2+c^4$, $y=2c^2+4abc+a^4+b^4$ and $z=2(a^2+b^2)c+2ab(1+c^2)$. Since $a,b,c\geq 0$ and at least two of them are positive, $z\neq 0$. Therefore, $\det \widetilde{M_g}=0$ if and only if $x=y$ if and only if $(1-c^2)^2=(a^2-b^2)^2$.

\begin{proof}[Proof of \autoref{lemma:special-4}]
    Suppose that $f\neq [1,0,1,0,1]$. Then at least one of $a,b,c$ is not $1$. Note that the following two signatures are realizable, by permuting the inputs of $f$:
    \begin{equation*}
        f_b=\begin{bmatrix}
          1 & 0 & 0 & b \\
          0 & a & c & 0 \\
          0 & c & a & 0 \\
          b & 0 & 0 & 1
        \end{bmatrix},
        f_c=\begin{bmatrix}
          1 & 0 & 0 & c \\
          0 & b & a & 0 \\
          0 & a & b & 0 \\
          c & 0 & 0 & 1
        \end{bmatrix}.
    \end{equation*}
    Thus, without loss of generality, we assume that $a\neq 1$. Further, we suppose that $0\leq a<1$ and $0\leq c\leq b$. If it is not the case, we can realize the following signature
    \begin{equation*}
        (M_f)(M_f)^{\mathsf{T}}=(1+a^2)\begin{bmatrix}
                  1 & 0 & 0 & \frac{2a}{1+a^2} \\
                  0 & \frac{b^2+c^2}{1+a^2} & \frac{2bc}{1+a^2} & 0 \\
                  0 & \frac{2bc}{1+a^2} & \frac{b^2+c^2}{1+a^2} & 0 \\
                  \frac{2a}{1+a^2} & 0 & 0 & 1
                \end{bmatrix}.
    \end{equation*}
    Then $0\leq 2a<1+a^2$ since $a\neq 1$. And $0\leq 2bc \leq b^2+c^2$.

    As shown above, we can realize a redundant matrix $M_g$. If $\widetilde{M_g}$ is nonsingular, then $\Holant(g)$ is \#P-hard by \autoref{lemma:redundant-matrix}. Since $\Holant(g)\redT\Holant(f)$, the latter problem is also \#P-hard and we are done. Now suppose that $\det \widetilde{M_g}=0$, which implies that $(1-c^2)^2=(a^2-b^2)^2$. Also, by symmetry, we can use $f_c$ to realize another redundant matrix $M_{h}$ such that $\det\widetilde{M_h}=0$ if and only if $(1-a^2)^2=(b^2-c^2)^2$. Again, we suppose that $\widetilde{M_h}$ is singular otherwise $\Holant(f)$ is \#P-hard. Since $0\leq a<1$ and $0\leq c\leq b$, $(1-a^2)^2=(b^2-c^2)^2$ implies that $1-a^2=b^2-c^2$. Hence we have
    \begin{align*}
      0=(1-c^2)^2-(a^2-b^2)^2 &= (1-c^2-a^2+b^2)(1-c^2+a^2-b^2)  \\
      &=2(1-a^2)(1-c^2+a^2-b^2),
    \end{align*}
    which means that $1-c^2+a^2-b^2=0$. Together with $1-a^2=b^2-c^2$, we obtain $b=1$ and $a=c$. Since at least two of $a,b,c$ are positive, it holds that $0<a<1$. Now it suffices to show that $\Holant(F)$ is \#P-hard, where
    \begin{equation*}
        M_F=\begin{bmatrix}
          1 & 0 & 0 & 1 \\
          0 & a & a & 0 \\
          0 & a & a & 0 \\
          1 & 0 & 0 & 1
        \end{bmatrix}.
    \end{equation*}
    Let $H=\frac{1}{\sqrt{2}}\begin{bsmallmatrix}1 & 1 \\ 1 & -1\end{bsmallmatrix}$, then $\Holant(H^{\otimes 4}F)\redT\Holant(F)$ where
    \begin{equation*}
        M_{H^{\otimes 4}F}=\begin{bmatrix}
                             1+a & 0 & 0 & 1-a \\
                             0 & 0 & 0 & 0 \\
                             0 & 0 & 0 & 0 \\
                             1-a & 0 & 0 & 1+a
                           \end{bmatrix}.
    \end{equation*}
    Since the submatrix $\begin{bsmallmatrix} 1+a & 1-a \\ 1-a & 1+a \end{bsmallmatrix}$ is of full rank, we may apply \autoref{lemma:interpolate-=_4} to interpolate the equality function $=_4$. Thus $\CSP^2(H^{\otimes 4}F)\redT \Holant(F)$. Note that $\CSP([1+a,1-a,1+a])\redT \CSP^2(H^{\otimes 4}F)$, and $\CSP([1+a,1-a,1+a])$ is $\sP$-hard by \autoref{thm:csp-dichotomy}. So $\Holant(F)$ is $\sP$-hard.
\end{proof}

We prove a dichotomy for function sets that contain certain functions of arity $4$.

\begin{lemma}
\label{lemma:4-block-rank-one}
    Let $f$ be a non-negative function of arity $4$. And $\begin{bmatrix}f_{0000} & f_{0011} \\ f_{1100} & f_{1111}\end{bmatrix}=\begin{bmatrix}a & b \\ b & c\end{bmatrix}$ where $b\neq 0$ and $ac>b^2$. Then for any function set $\mathcal{F}$ containing $f$, $\Holant(\mathcal{F})$ is $\sP$-hard or $\mathcal{F}\subseteq \ttype$ or $\mathcal{F}\subseteq H\ptype$ for some orthogonal matrix $H$.
\end{lemma}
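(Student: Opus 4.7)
The plan is to extract from $f$ either a binary function to which \autoref{lemma:binary-non-affine} applies, or a function fitting \autoref{lemma:special-4} or \autoref{lemma:interpolate-=_4}. Since $b>0$ and $ac>b^2$ force $a,c>0$, the four ``corner'' strings $\{0000,0011,1100,1111\}$ lie in $\supp(f)$; and every odd-weight element of $\{0,1\}^4$ is Hamming-adjacent to at least one of them. The analysis therefore splits on whether $\supp(f)$ is even.

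\emph{Case A:} $\supp(f)$ contains an odd-weight element, so $f$ satisfies the Adjacency condition. If $f$ is irreducible, \autoref{coro:adjacency} finishes. Otherwise, pass (after permuting inputs) to the finest tensor factorization $f=f_1\otimes\cdots\otimes f_k$. A short corner-arithmetic check rules out both the all-unary split and the $(2,2)$ split along $\{1,2\}|\{3,4\}$, since each of these forces $ac=b^2$. Every remaining factorization therefore contains a binary factor $h$; the same corner arithmetic forces $\det h>0$, and a quick enumeration of non-degenerate non-negative affine binary functions (each a scalar multiple of $=_2$ or $\neq_2$) shows that any such $h$ would make some corner value vanish, contradicting $abc\neq 0$. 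Hence $h$ is non-degenerate with $h\notin\atype$, and \autoref{coro:decomposition-non-vanishing} (noting the remaining factors are non-vanishing, each having a positive corner-induced value) places $h\in\sig(\mathcal{F})$; \autoref{lemma:binary-non-affine} then finishes.

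\emph{Case B:} $\supp(f)$ is even, giving
\[
M_f=\begin{bmatrix} a & 0 & 0 & b \\ 0 & p & q & 0 \\ 0 & r & s & 0 \\ b & 0 & 0 & c \end{bmatrix},\qquad p,q,r,s\geq 0.
\]
Connecting any pair of inputs of $f$ by an internal $=_2$ edge yields a binary function whose matrix is forced diagonal by the even-support parity. The six pairings produce diagonals $(a+b,\,b+c)$, $(a+s,\,p+c)$, $(a+p,\,s+c)$, $(a+r,\,q+c)$, and $(a+q,\,r+c)$. If any one of these has unequal entries, the resulting binary function is non-degenerate and not in $\atype$, so \autoref{lemma:binary-non-affine} finishes. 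Otherwise $a=c$, $p=s$, and $q=r$. If additionally $p=q=0$, then $M_f$ matches the hypothesis of \autoref{lemma:interpolate-=_4} (its $2\times 2$ corner block $\begin{bsmallmatrix}a & b\\ b & c\end{bsmallmatrix}$ has full rank by $ac>b^2$), so $=_4$ is realizable; combining \autoref{lemma:csp2-Holant} and \autoref{lemma:csp-to-csp2} (applied to the binary function $\tilde f$ of matrix $\begin{bsmallmatrix}a & b\\ b & c\end{bsmallmatrix}$, which lies outside $\atype\cup\ptype$ by $b>0$ and $ac>b^2$) yields $\CSP(\tilde f)\redT \Holant(\{f\})$, and \autoref{thm:csp-dichotomy} delivers $\sP$-hardness. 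If instead the middle block is nonzero, dividing $M_f$ by $a$ places it in the form of \autoref{lemma:special-4} with parameters $(b/a,\,p/a,\,q/a)$, at least two of which are positive; since $a^2=ac>b^2$ forces $b<a$, we have $b/a<1$ so $f\neq[1,0,1,0,1]$, and \autoref{lemma:special-4} delivers $\sP$-hardness.

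The main obstacle is the fully symmetric sub-case of Case~B where $a=c$, $p=s$, $q=r$, and the middle block is nonzero: every natural single-loop gadget collapses to a multiple of $=_2$, so affinity cannot be broken by any one-loop construction, and one must detour through \autoref{lemma:special-4} together with the numerical exclusion of $[1,0,1,0,1]$ via the strict inequality $ac>b^2$. A secondary subtlety is the reducible branch of Case~A, where the identity $ac=b^2$ on ``bad'' decompositions plays the double role of ruling those decompositions out and forcing the extracted binary factor to lie outside $\atype$.
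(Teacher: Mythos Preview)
Your overall strategy matches the paper's: exploit the four positive corner entries to either trigger the Adjacency condition (\autoref{coro:adjacency}/\autoref{lemma:binary-non-affine}) or force the even-support form and then invoke \autoref{lemma:special-4} or \autoref{lemma:interpolate-=_4}. Your Case~B is correct and in fact slightly cleaner than the paper's treatment, since you do not need irreducibility there (the paper splits first on reducible/irreducible, you on odd/even support).

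However, Case~A has a genuine gap in the reducible sub-case: the claim that ``every remaining factorization contains a binary factor'' is false. The finest factorization of an arity-$4$ function can be of type $(3,1)$, i.e.\ $f_\pi=g\otimes u$ with $g$ irreducible of arity $3$ and $u$ unary, and then there is no binary factor to extract. (Concretely: take $u=[1,1]$ on $x_4$ and any irreducible ternary $g$ with $g_{000},g_{001},g_{110},g_{111}>0$, $g_{001}=g_{110}$, $g_{000}g_{111}>g_{001}^2$; this lands squarely in your Case~A since $f_{0001}=g_{000}>0$.) The paper handles this case directly: from the corners one reads off $g_{000},g_{111}>0$ (via $f_{0000},f_{1111}>0$), and $f_{0011}>0$ forces $g\neq[g_{000},0,0,g_{111}]$, so the ternary $g$ itself satisfies the Adjacency condition; then \autoref{coro:decomposition-non-vanishing} makes $g$ available and \autoref{coro:adjacency} finishes.

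Two minor points on the part you did write. First, ``corner arithmetic forces $\det h>0$'' is not what you need and is not always true; what actually follows is that the binary factor $h$ is non-degenerate (being an irreducible piece of the finest decomposition) and, once $h$ straddles $\{1,2\}$ and $\{3,4\}$, all four of its values are positive, which rules out $\lambda[1,0,1]$ and $\lambda[0,1,0]$ and hence $h\notin\atype$. Second, the $(2,1,1)$ splits with the binary factor on $\{1,2\}$ or on $\{3,4\}$ must also be ruled out; they fall to the same $ac=b^2$ argument you used for the $(2,2)$ split along $\{1,2\}\mid\{3,4\}$, but you did not list them.
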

\begin{proof}
    Note that $a,b,c>0$.

    First we consider the case that $f$ is reducible. $f$ can not be degenerate, since the matrix $\begin{bmatrix}f_{0000} & f_{0011} \\ f_{1100} & f_{1111}\end{bmatrix}$ is of full rank. So there is a permutation $\pi\in S_4$, such that $f_\pi = g\otimes h$ where $g$ is irreducible and $\arity(g)\geq \arity(h)>0$. Both $g$ and $h$ are non-negative, and neither of them is identically zero. By \autoref{coro:decomposition-non-vanishing}, $\Holant(\mathcal{F}\cup \{g\})\redT \Holant(\mathcal{F})$. There are two cases:
    \begin{enumerate}[label=\textbullet]
      \item $\arity(g)=3$. In this case, $g_{000}\neq 0$ because $f_{0000}=g_{000}h_{0}>0$. And $g_{111}\neq 0$ because $f_{1111}=g_{111}h_{1}>0$. Furthermore, since $f_{0011}\neq 0$, $g\neq [g_{000},0,0,g_{111}]$. So $g$ satisfies the Adjacency condition, and the conclusion follows from \autoref{coro:adjacency}.
      \item $\arity(g)=2$. We may suppose $g=[\lambda,0,\lambda]$ or $g=[0,\lambda,0]$ for some $\lambda>0$, otherwise we are done by \autoref{lemma:binary-non-affine}. Moreover, $g\neq [0,\lambda,0]$ since $f_{0000}\neq 0$. This implies that
          \begin{equation*}
            a=\lambda h_{00},\ b=\lambda h_{00}=\lambda h_{11},\ c=\lambda h_{11},
          \end{equation*}
          which is impossible since $ac>b^2$.
    \end{enumerate}

    Suppose that $f$ is irreducible. And we can suppose that $f$ does not satisfy the Adjacency condition, otherwise we are done. Then the $4\times 4$ signature matrix of $f$ is
    \begin{equation*}
        M_f=\begin{bmatrix}
            a & 0 & 0 & b \\
            0 & x & y & 0 \\
            0 & z & w & 0 \\
            b & 0 & 0 & c
        \end{bmatrix},
    \end{equation*}
    By connecting two inputs of $f$ via an edge, we can realize the following three binary signatures:
    \begin{equation*}
        [a+b,0,b+c],\ [a+x,0,w+c],\ [a+y,0,z+c].
    \end{equation*}
    If $a\neq c$, then the conclusion holds by \autoref{thm:Holant-nontrivial-eq}. Now suppose that $a=c$. For the same reason, we may assume that $x=w$ and $y=z$. Rewrite the matrix of $f$:
    \begin{equation*}
        M_f=\begin{bmatrix}
            a & 0 & 0 & b \\
            0 & x & y & 0 \\
            0 & y & x & 0 \\
            b & 0 & 0 & a
        \end{bmatrix}.
    \end{equation*}
    Let $h$ denote the matrix $\begin{bmatrix}a & b \\ b & a\end{bmatrix}$. Then $h\notin \atype\cup\ptype$ since $a^2>b^2>0$. If at least one of $x,y$ is positive, then $\Holant(f)$ is $\sP$-hard by \autoref{lemma:special-4}. Otherwise we can interpolate $=_4$ using $f$ by \autoref{lemma:interpolate-=_4}. Thus $\CSP(h)\redT \CSP^2(f)\redT \Holant(f)$, which is $\sP$-hard. Since $f\in\mathcal{F}$,  $\Holant(\mathcal{F})$ is also $\sP$-hard.
\end{proof}

\section{The Dichotomy}
\label{sec:dichotomy}

\subsection{The Block-rank-one Condition Captures the Dichotomy}
\label{subsec:block-rank-one}

Given a function $f$ of arity $n$, we use $f^{[t]}$, for each $t\in [n]$, to denote the function
\begin{equation*}
    f^{[t]}(x_1,...,x_t)=\sum_{x_{t+1},...,x_n\in\{0,1\}} f(x_1,...,x_t,x_{t+1},...,x_n).
\end{equation*}

Recall that Holant problems are read-twice $\CSP$s and every \#CSP instance \emph{defines} a function (\autoref{subsec:CSP}). We adopt the notation in \cite{CC2012}, defining the following set of functions for a given $\mathcal{F}$:
\begin{equation*}
    \mathcal{W}_\mathcal{F}=\{F^{[t]}\myvert F \text{ is a function defined by an instance of }\Holant(\mathcal{F})\text{ and }1\leq t \leq \text{arity of }F\}.
\end{equation*}

Note that the functions in $\mathcal{W}_\mathcal{F}$ are not necessarily realizable from $\mathcal{F}$. The following two lemmas show how $\mathcal{W}_\mathcal{F}$ and $\sig(\mathcal{F})$ are related:
\begin{lemma}
\label{lemma:break-edges}
    Let $f\in\mathcal{W}_\mathcal{F}$ be a function of arity $n$. Then there is a function $g\in \sig(\mathcal{F})$ of arity $2n$, such that for all $x_1,x_2,...,x_n\in\{0,1\}$,
    \begin{equation*}
        f(x_1,x_2,...,x_n)=g(x_1,x_1,x_2,x_2,...,x_n,x_n).
    \end{equation*}
\end{lemma}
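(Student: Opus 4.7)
The plan is to unpack the definition of $\mathcal{W}_\mathcal{F}$ and realize $g$ as the function of a natural $\mathcal{F}$-gate built from the instance that produced $f$. By hypothesis, $f = F^{[n]}$ for some function $F$ of arity $m \geq n$ defined by an instance $\Omega = (G,\pi)$ of $\Holant(\mathcal{F})$. Viewing $\Omega$ as a \#CSP instance (see \autoref{subsec:CSP}), it has $m$ variables, one per edge of $G$, and each variable occurs in exactly two constraints; writing $F(x_1,\ldots,x_m) = \prod_j F_j(\mathbf{x}_j)$, we have $f(x_1,\ldots,x_n) = \sum_{x_{n+1},\ldots,x_m} F(x_1,\ldots,x_m)$.

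The key step is to construct an $\mathcal{F}$-gate $\Gamma$ with $2n$ dangling edges as follows. For each $i \in [n]$, I would locate the two occurrences of $x_i$ across the constraints, fix an arbitrary ordering of them, and rename the two occurrences to fresh variables $y_i$ and $y_i'$. In graph terms, this turns each of the $n$ edges of $G$ carrying $x_1,\ldots,x_n$ into a pair of dangling edges attached to the two endpoints that were originally joined by $x_i$. The remaining variables $x_{n+1},\ldots,x_m$ still occur twice and remain as internal edges. The same functions $F_j \in \mathcal{F}$ are re-used at the vertices, so by definition of $\sig(\mathcal{F})$ the function $g$ realized by $\Gamma$ lies in $\sig(\mathcal{F})$ and has arity $2n$:
\begin{equation*}
    g(y_1,y_1',\ldots,y_n,y_n') = \sum_{x_{n+1},\ldots,x_m \in \{0,1\}} \prod_j \widetilde{F_j}(\widetilde{\mathbf{x}}_j),
\end{equation*}
where tildes denote the tuples after the renaming.

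Then I would verify the claimed identity by substitution. Setting $y_i = y_i' = x_i$ for every $i \in [n]$ undoes the renaming, so $\widetilde{F_j}(\widetilde{\mathbf{x}}_j) = F_j(\mathbf{x}_j)$ for every $j$ and the sum collapses to $F^{[n]}(x_1,\ldots,x_n) = f(x_1,\ldots,x_n)$, as required. There is no deep obstacle here; the only care needed is the bookkeeping that distinguishes the two occurrences of each $x_i$ consistently, and the ordering of the $2n$ dangling edges. Because realizability is invariant under input permutation (\autoref{lemma:function-permutation}), the specific interleaved pairing $(x_1,x_1,x_2,x_2,\ldots,x_n,x_n)$ demanded by the lemma statement can always be arranged.
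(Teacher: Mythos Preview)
Your proof is correct and follows essentially the same approach as the paper: both arguments take the instance defining $F$, treat the first $n$ edges as pairs of dangling ends and the remaining edges as internal, and observe that re-identifying each pair recovers $f=F^{[n]}$. The paper phrases this slightly more algebraically (forming $g$ as a permuted tensor product of the constraint functions and then contracting the internal edges), but the construction and verification are the same.
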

\begin{proof}
    Suppose that $F$ is an $n$-ary function defined by an instance of $\Holant(\mathcal{F})$. Let $f_1,...,f_k$ denote the constraint functions (not necessarily distinct) that appear in the instance. And let $g$ denote their tensor product $g=f_1\otimes \cdots \otimes f_k$. Then there is a permutation $\pi$ on $[2n]$, such that for all $x_1,x_2,...,x_n\in\{0,1\}$,
    \begin{equation*}
        F(x_1,x_2,...,x_n)=g_\pi(x_1,x_1,x_2,x_2,...,x_n,x_n).
    \end{equation*}
    Moreover, for all $t\in [n]$,
    \begin{equation*}
        F^{[t]}(x_1,...,x_t)=\sum_{x_{t+1},...,x_n\in\{0,1\}} g_\pi(x_1,x_1,...,x_t,x_t,x_{t+1},x_{t+1},...,x_n,x_n).
    \end{equation*}
    Because $f_1,...,f_k\in\mathcal{F}$, both $g$ and $g_\pi$ are realizable from $\mathcal{F}$. And hence, for all $t\in [n]$, the function $\sum_{x_{t+1},...,x_n\in\{0,1\}} g_\pi(x_1,x_2,...,x_{2t},x_{t+1},x_{t+1},...,x_n,x_n)$ is also realizable.
\end{proof}

Intuitively, the function $g$ in \autoref{lemma:break-edges} is obtained by breaking edges of the signature grid that defines $f$. The following lemma goes in the opposite direction - merging dangling edges.

\begin{lemma}
\label{lemma:function-square-definable}
    For $f\in\sig(\mathcal{F})$, $f^2\in\mathcal{W}_\mathcal{F}$.
\end{lemma}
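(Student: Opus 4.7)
The plan is to take two disjoint copies of an $\mathcal{F}$-gate realizing $f$ and glue them together along their dangling edges, producing a genuine signature grid of $\Holant(\mathcal{F})$ whose defined function has $f^2$ as the partial sum obtained by summing out all internal edges.

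Concretely, suppose $f\in\sig(\mathcal{F})$ is realized by an $\mathcal{F}$-gate $\Gamma=(G,\pi)$ with $G=(V,E,D)$, where $E$ is the set of internal edges and $D=\{d_1,\dots,d_n\}$ is the list of dangling edges ($n=\arity(f)$). I would first make two vertex-disjoint copies $\Gamma_1,\Gamma_2$ of $\Gamma$, with internal edge sets $E_1,E_2$ and corresponding dangling edges $d_i^{(1)},d_i^{(2)}$ ($i\in[n]$). Since each dangling edge has exactly one endpoint in the gate, for each $i$ I can merge $d_i^{(1)}$ and $d_i^{(2)}$ into a single (ordinary) edge $e_i$ whose two endpoints are the one endpoint of $d_i^{(1)}$ in $V_1$ and the one endpoint of $d_i^{(2)}$ in $V_2$. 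The resulting graph $G'=(V_1\cup V_2,\,E_1\cup E_2\cup\{e_1,\dots,e_n\})$ has no dangling edges, and, inheriting $\pi$ on each copy, is a legitimate signature grid $\Omega$ of $\Holant(\mathcal{F})$.

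Now number the edges of $\Omega$ with $e_1,\dots,e_n$ first and the edges of $E_1,E_2$ afterward, and let $F$ be the function of arity $2|E|+n$ defined by $\Omega$ under this numbering. For any fixed assignment $(y_1,\dots,y_n)$ to $(e_1,\dots,e_n)$, summing $F$ over the internal variables indexing $E_1\cup E_2$ factorizes into two independent sums (one for each copy, since after the gluing the two copies share only the boundary edges $e_i$ whose values are fixed). Each factor is by definition the value of the function realized by $\Gamma_j$ at $(y_1,\dots,y_n)$, namely $f(y_1,\dots,y_n)$. Therefore
\begin{equation*}
F^{[n]}(y_1,\dots,y_n)=f(y_1,\dots,y_n)\cdot f(y_1,\dots,y_n)=f^2(y_1,\dots,y_n),
\end{equation*}
which exhibits $f^2$ as an element of $\mathcal{W}_\mathcal{F}$.

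I do not anticipate a substantial obstacle: the construction is purely combinatorial, and the identity $F^{[n]}=f^2$ is immediate from the definition of the function realized by a gate together with the factorization of the Holant sum over two disjoint subgrids that share only the fixed boundary variables. The only point requiring a bit of care is ensuring that the glued object $\Omega$ is a valid signature grid (every edge having two endpoints and every vertex's incident edges matching the arity of its assigned function), which follows directly from the fact that each dangling edge of a gate has exactly one endpoint in $V$, so pairing them up turns them into ordinary edges without disturbing the incidences inside either copy.
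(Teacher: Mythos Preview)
Your proof is correct and follows exactly the same construction as the paper: take two copies of an $\mathcal{F}$-gate realizing $f$, connect their corresponding dangling edges to obtain a signature grid, and observe that the resulting defined function $F$ satisfies $F^{[n]}=f^2$. The paper's proof is terser but identical in substance.
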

\begin{proof}
    Since $f\in\sig(\mathcal{F})$, there is some $\mathcal{F}$-gate $\Gamma$ that realizes $f$. Suppose that $f$ has arity $n>0$ and $\Gamma$ has $m$ internal edges. Now take two copies of $\Gamma$. For each $k\in [n]$, we connect the $k$-th inputs of the two $\Gamma$'s. This yields an instance of $\Holant(\mathcal{F})$, which defines a function $F$ of arity $n+2m$. And $f^2=F^{[n]}\in\mathcal{W}_\mathcal{F}$.
\end{proof}

Let $M$ be a non-negative matrix. We say $M$ is \emph{block-rank-one} if every two rows of it are linearly dependent or orthogonal. Given a non-negative function $f$ of arity $n$, we say $f$ is \emph{block-rank-one} if either $n=1$ or the matrix $M_{[n-1]}(f)$ is block-rank-one.

Now we impose a condition on $\mathcal{W}_\mathcal{F}$:
\begin{quote}
    \textbf{Block-rank-one}: All functions in $\mathcal{W}_\mathcal{F}$ are block-rank-one.
\end{quote}

We can classify those function sets that do not satisfy this condition:
\begin{lemma}
\label{lemma:dichotomy-block}
    Let $\mathcal{F}$ be a set of non-negative functions. If $\mathcal{F}$ does not satisfy the Block-rank-one condition, then $\Holant(\mathcal{F})$ is $\sP$-hard or $\mathcal{F}\subseteq \ttype$ or $\mathcal{F}\subseteq H\ptype$ for some orthogonal matrix $H$.
\end{lemma}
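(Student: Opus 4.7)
The plan is to extract from the non-block-rank-one hypothesis a 4-ary function in $\sig(\mathcal{F})$ that fits the hypothesis of \autoref{lemma:4-block-rank-one}; the dichotomy will then transfer back to $\mathcal{F}$ itself. Because $\mathcal{F}$ violates Block-rank-one, there is some $f\in\mathcal{W}_\mathcal{F}$ of arity $n\geq 2$ whose matrix $M_{[n-1]}(f)$ has two rows $(f(\mathbf{x}_1,0),f(\mathbf{x}_1,1))$ and $(f(\mathbf{x}_2,0),f(\mathbf{x}_2,1))$ that are neither proportional nor orthogonal. A short case analysis on non-negative length-$2$ vectors shows that two such rows must include one whose support is all of $\{0,1\}$; WLOG assume $f(\mathbf{x}_1,0)>0$ and $f(\mathbf{x}_1,1)>0$. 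Moreover, the non-proportionality of two rows forces the two columns of $M_{[n-1]}(f)$ to be linearly independent, since otherwise every row would be a scalar multiple of a single fixed vector.

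Next I invoke \autoref{lemma:break-edges} to obtain $g\in\sig(\mathcal{F})$ of arity $2n$ with $f(x_1,\dots,x_n)=g(x_1,x_1,\dots,x_n,x_n)$, and realize the 4-ary function
\begin{equation*}
    G(a,b,c,d)=\sum_{\mathbf{z}\in\{0,1\}^{2n-2}}g(\mathbf{z},a,b)\,g(\mathbf{z},c,d)
\end{equation*}
by placing two copies of $g$ side by side and joining the $i$-th input of one copy to the $i$-th input of the other copy, for $i=1,2,\dots,2n-2$. Then $G\in\sig(\mathcal{F})$, and its signature matrix contains the symmetric submatrix
\begin{equation*}
    \begin{bmatrix}G_{0000} & G_{0011}\\ G_{1100} & G_{1111}\end{bmatrix}=\begin{bmatrix}A & B\\ B & C\end{bmatrix},
\end{equation*}
where $A=\sum_\mathbf{z} g(\mathbf{z},0,0)^2$, $B=\sum_\mathbf{z} g(\mathbf{z},0,0)g(\mathbf{z},1,1)$, and $C=\sum_\mathbf{z} g(\mathbf{z},1,1)^2$.

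Restricting $\mathbf{z}$ to the ``doubled'' indices $(y_1,y_1,\dots,y_{n-1},y_{n-1})$ recovers $f$, so the term at $\mathbf{z}_1$ (the doubled image of $\mathbf{x}_1$) contributes $f(\mathbf{x}_1,0)\,f(\mathbf{x}_1,1)>0$ to $B$, yielding $B>0$; likewise $A,C>0$. By Cauchy--Schwarz $B^2\leq AC$, with equality only when $g(\mathbf{z},0,0)\propto g(\mathbf{z},1,1)$ as functions of $\mathbf{z}$; restricting once more to doubled $\mathbf{z}$ would force the two columns of $M_{[n-1]}(f)$ to be proportional, contradicting the previous step. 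Hence $B^2<AC$, and $G$ meets the hypothesis of \autoref{lemma:4-block-rank-one}. Because $G\in\sig(\mathcal{F})$, \autoref{lemma:realize-red} gives $\Holant(\mathcal{F}\cup\{G\})\eqredT\Holant(\mathcal{F})$, and applying \autoref{lemma:4-block-rank-one} to $\mathcal{F}\cup\{G\}$ then delivers the desired three-way alternative for $\mathcal{F}$.

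The main obstacle is the gap between a function $f\in\mathcal{W}_\mathcal{F}$, which need not itself be realizable from $\mathcal{F}$, and the realizable auxiliary function that is actually needed to invoke \autoref{lemma:4-block-rank-one}. \autoref{lemma:break-edges} is precisely the bridge: it lifts $f$ to a realizable $g$ that agrees with $f$ on doubled inputs. The crucial observation is that $A$, $B$, and $C$ are defined by sums over all $\mathbf{z}$, but both the positivity $B>0$ and the strict Cauchy--Schwarz inequality $B^2<AC$ can be certified using only the doubled-slice contributions, which are exactly the values of $f$. This is what allows the non-block-rank-one property of $f$ to pass cleanly into a usable structural property of the realizable $G$.
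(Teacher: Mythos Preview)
Your proof is correct and follows essentially the same route as the paper: lift $f\in\mathcal{W}_\mathcal{F}$ to a realizable $g$ via \autoref{lemma:break-edges}, form the $4$-ary function $M^{\mathsf T}M$ with $M=M_{[2n-2]}(g)$, and verify the hypotheses of \autoref{lemma:4-block-rank-one} using Cauchy--Schwarz together with the doubled-slice restriction back to $f$. The paper's proof is terser---it passes directly from ``two rows of $M_{[n-1]}(f)$ are independent and non-orthogonal'' to ``the two columns of $M_{[n-1]}(f)$ are independent and non-orthogonal'' without your intermediate case analysis on supports---but the underlying logic is the same, and your extra detail on why one row must have full support (hence the columns share a support point) is a legitimate way to justify that step.
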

\begin{proof}
    Let $f\in\mathcal{W}_\mathcal{F}$ be a function of arity $n$. Then by \autoref{lemma:break-edges}, there is a function $g\in \sig(\mathcal{F})$ of arity $2n$, such that for all $x_1,x_2,...,x_n\in\{0,1\}$,
    \begin{equation*}
        f(x_1,x_2,...,x_n)=g(x_1,x_1,x_2,x_2,...,x_n,x_n).
    \end{equation*}

    Now suppose that $f$ is not block-rank-one. By definition, $n\geq 2$ and the two columns of $M_{[n-1]}(f)$ are linearly independent but not orthogonal. Then the first and the last columns of the matrix $M=M_{[2n-2]}(g)$, $g^{x_{2n-1}=x_{2n}=0}$ and $g^{x_{2n-1}=x_{2n}=1}$, are also linearly independent but not orthogonal. Let $h$ denote the $4\times 4$ matrix $M^{\mathsf T}M$. Then $h_{0011}=h_{1100}>0$ and  $h_{0000}h_{1111}>h_{0011}^2$.
    Since $g\in\sig(\mathcal{F})$, $h$ is also realizable. Thus $\Holant(\mathcal{F}\cup\{h\})\redT \Holant(\mathcal{F})$. By \autoref{lemma:4-block-rank-one}, $\Holant(\mathcal{F})$ is \#P-hard or $\mathcal{F}\subseteq \ttype$ or $\mathcal{F}\subseteq H\ptype$ for some orthogonal matrix $H$.
\end{proof}

Surprisingly, the Block-rank-one condition has \emph{captured} the dichotomy. Later we will prove the crucial lemma below:
\begin{lemma}
\label{lemma:block-implies-affine}
    Let $\mathcal{F}$ be a set of non-negative functions. If $\mathcal{F}$ satisfies the Block-rank-one condition, then $\mathcal{F}\subseteq \atype$ or $\mathcal{F}\subseteq \ptype$.
\end{lemma}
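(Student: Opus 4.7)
My plan is to apply \autoref{thm:csp-dichotomy} contrapositively. Assume for contradiction that $\mathcal{F}\not\subseteq\atype$ and $\mathcal{F}\not\subseteq\ptype$. I will produce a function in $\mathcal{W}_\mathcal{F}$ that is not block-rank-one. The argument naturally splits into two phases: (a) force each individual $f\in\mathcal{F}$ into $\atype\cup\ptype$, and (b) rule out $\mathcal{F}$ mixing the two classes.

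For phase (a), I exploit the rich closure properties of $\mathcal{W}_\mathcal{F}$. Every $f\in\mathcal{F}$ is itself a legal one-vertex Holant instance, so every marginal $f^{[t]}$ lies in $\mathcal{W}_\mathcal{F}$, and by \autoref{lemma:function-square-definable}, for any $g\in\sig(\mathcal{F})$ the function $g^2$ is in $\mathcal{W}_\mathcal{F}$ together with all its marginals. Applying the block-rank-one condition to all these $M_{[m-1]}$ matrices gives a very strong structural constraint: for non-negative entries, block-rank-one of a $2^{m-1}\times 2$ matrix means its rows partition into proportional classes whose supports (viewed in the last coordinate) are pairwise disjoint. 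Propagating this through every choice of how one splits the variables and every iterated marginalization, the support of $f$ conditioned on any partial assignment must be either an affine coset (pushing $f$ into $\atype$) or a disjoint union of product-like blocks coordinated by $=_2$ and $\neq_2$ patterns (pushing $f$ into $\ptype$). The base cases of the induction on $\arity(f)$ are handled by \autoref{coro:adjacency}, \autoref{lemma:binary-non-affine}, and \autoref{lemma:4-block-rank-one}: these already say that if a low-arity realizable function sits outside $\atype\cup\ptype$, then either \#P-hardness occurs or a redundant $4\times 4$ witness appears, and the latter is exactly what violates block-rank-one at level $3$.

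For phase (b), suppose $f\in\mathcal{F}\cap(\atype\setminus\ptype)$ and $g\in\mathcal{F}\cap(\ptype\setminus\atype)$. Intuitively, $f$ enforces a linear (XOR-type) coupling among its arguments while $g$ enforces a multiplicative (pairwise equality/disequality) coupling, and these two must clash in the read-twice setting. Concretely, I would combine $f$ and $g$ in a small signature grid and marginalize down to an arity-$2$ or arity-$3$ function $h\in\mathcal{W}_\mathcal{F}$. If $h$ were block-rank-one, its $2\times 2$ marginals would force either $f$ to factor through equality/disequality (contradicting $f\notin\ptype$) or $g$ to have affine support with uniform value (contradicting $g\notin\atype$). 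Isolating a pair of rows in $M_{[\arity(h)-1]}(h)$ that are neither proportional nor orthogonal completes the contradiction, again invoking \autoref{lemma:4-block-rank-one} when the witness has arity $4$.

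The main obstacle will be phase (a): over the Boolean domain the targets $\atype$ and $\ptype$ have short descriptions but quite different flavors, and showing that block-rank-one of all $\mathcal{W}_\mathcal{F}$-marginals really forces one of them requires a careful, possibly lengthy case analysis on the support pattern of $f$. The technology is in place — \autoref{lemma:break-edges} lets me move freely between $\mathcal{W}_\mathcal{F}$ and $\sig(\mathcal{F})$, \autoref{lemma:function-square-definable} gives me squares, and the arity-$2$/arity-$4$ obstructions are already dichotomized — but organizing the induction so that every non-(\atype\cup\ptype) obstruction at arity $n$ is detected by block-rank-one violation at some marginal of arity $\leq n$ is where the bulk of the work lies.
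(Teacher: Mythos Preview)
Your proposal has a genuine gap in phase~(a). You plan to use \autoref{coro:adjacency}, \autoref{lemma:binary-non-affine}, and \autoref{lemma:4-block-rank-one} as base cases of an induction, but the conclusion of each of these is of the form ``$\Holant(\mathcal{F})$ is $\sP$-hard or $\mathcal{F}\subseteq\ttype$ or $\mathcal{F}\subseteq H\ptype$''. None of them produces a block-rank-one violation; the $\sP$-hardness in their proofs comes from orthogonal transformations, interpolation of $=_4$, and reductions to $\CSP$, not from exhibiting a non-block-rank-one marginal. So when you write that they say ``either $\sP$-hardness occurs or a redundant $4\times 4$ witness appears, and the latter is exactly what violates block-rank-one'', that is a misreading. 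Invoking a hardness conclusion to establish a purely combinatorial statement about $\mathcal{W}_\mathcal{F}$ is circular unless you assume $\mathrm{FP}\neq\sP$, and even then $\mathcal{F}\subseteq H\ptype$ for a nontrivial $H$ is not the conclusion you need. Your phase~(b) is also only a heuristic: you never specify the gadget combining $f\in\atype\setminus\ptype$ with $g\in\ptype\setminus\atype$, and the non-negativity constraint severely limits what affine $f$ can look like.

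The paper's route is quite different and avoids hardness results entirely. It first shows that the Block-rank-one condition is equivalent to \emph{balance} on $\sig(\mathcal{F})$ (\autoref{lemma:balance-eq-block}), and then proves the key closure \autoref{lemma:balance-unary}: if $\mathcal{F}$ is balanced, so is $\mathcal{F}\cup\{[1,0],[0,1],[1,1]\}$. The availability of $[1,1]$ is the step you are missing; it is not realizable in general, and establishing balance under its addition takes real work (\autoref{lemma:realizable-summing}, \autoref{lemma:balance-[1,1]}). With these unaries in hand, from any irreducible $f$ of arity $\ge 3$ one realizes either a symmetric $[1,0,1,0]$ or a generalized equality $[a,0,\ldots,0,b]$ of arity $\ge 3$, and then two dedicated structural lemmas (\autoref{lemma:balance-ternary-xor} and \autoref{lemma:balance-eq}) show that balance together with such a function forces $\mathcal{F}\subseteq\atype$ or $\mathcal{F}\subseteq\ptype$. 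There is no induction on arity and no separate ``mixing'' phase.
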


Therefore, if $\mathcal{F}$ satisfies the Block-rank-one condition, then $\Holant(\mathcal{F})$ is computable in polynomial time. So our dichotomy is quite simple and it is decidable in polynomial time \cite{CGW2014}:
\begin{thm}
\label{thm:dichotomy}
    Let $\mathcal{F}$ be a set of non-negative functions. The problem $\Holant(\mathcal{F})$ is computable in polynomial time if $\mathcal{F}$ satisfies one of the following three conditions:
    \begin{enumerate}[labelindent=\parindent,leftmargin=*]
      \item $\mathcal{F}\subseteq \ttype$;
      \item $\mathcal{F}\subseteq \atype$;
      \item $\mathcal{F}\subseteq H\ptype$ for some real orthogonal matrix $H$.
    \end{enumerate}
    Otherwise $\Holant(\mathcal{F})$ is $\sP$-hard.
\end{thm}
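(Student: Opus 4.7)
The plan is to reduce \autoref{thm:dichotomy} to \autoref{lemma:dichotomy-block} and \autoref{lemma:block-implies-affine} together with standard tractability arguments, performing a case split based on whether $\mathcal{F}$ satisfies the Block-rank-one condition.

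For tractability, I would argue as follows. The case $\mathcal{F}\subseteq\ttype$ is clause 1 of \autoref{thm:Holant*}, so $\Holant^*(\mathcal{F})$ is polynomial-time computable, and hence so is $\Holant(\mathcal{F})$. The case $\mathcal{F}\subseteq\atype$ follows from \autoref{thm:csp-dichotomy}: $\CSP(\mathcal{F})$ is polynomial-time computable, and since every signature grid can be viewed as a read-twice $\CSP$ instance (\autoref{subsec:CSP}), $\Holant(\mathcal{F})\redT\CSP(\mathcal{F})$. For $\mathcal{F}\subseteq H\ptype$ with $H$ a real orthogonal matrix, I would use the orthogonal-transformation equivalence to obtain $\Holant(\mathcal{F})\eqredT\Holant(H^{-1}\mathcal{F})$, reducing to $H^{-1}\mathcal{F}\subseteq\ptype$, which is again tractable via $\CSP$.

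For hardness, suppose $\mathcal{F}$ avoids all three tractable classes. I would split on whether $\mathcal{F}$ satisfies the Block-rank-one condition. If it does not, \autoref{lemma:dichotomy-block} yields three alternatives: $\Holant(\mathcal{F})$ is $\sP$-hard, $\mathcal{F}\subseteq\ttype$, or $\mathcal{F}\subseteq H\ptype$ for some orthogonal matrix $H$. The latter two contradict our assumption; note that the $H$ produced is real orthogonal because the spectral decomposition of \autoref{lemma:realsym-diag}, applied to a non-negative real symmetric matrix, yields a real $H$, matching clause 3 of the theorem. Hence we conclude $\sP$-hardness. If $\mathcal{F}$ does satisfy the condition, \autoref{lemma:block-implies-affine} forces $\mathcal{F}\subseteq\atype$ or $\mathcal{F}\subseteq\ptype$, again contradicting our assumption, so this case is vacuous.

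The main obstacle is not in assembling these pieces but in the two supporting lemmas — in particular \autoref{lemma:block-implies-affine}, which must translate an infinitary condition on \emph{every} function defined by a read-twice instance into the concrete dichotomic criterion of affine vs.\ product type. I would expect its proof to pass through the equivalent \emph{balance} notion promised in \autoref{subsec:balance} together with the vector-representation technique of \cite{CCL2011}, reducing the question to a weak-balance classification on an associated general-domain $\CSP$. By contrast, \autoref{lemma:dichotomy-block} is the more straightforward half, already established via the arity-$4$ hardness of \autoref{lemma:4-block-rank-one} and the orthogonal normalization afforded by \autoref{thm:Holant-nontrivial-eq}.
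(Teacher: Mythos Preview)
Your proposal is correct and matches the paper's approach: the theorem is obtained by combining \autoref{lemma:dichotomy-block} and \autoref{lemma:block-implies-affine} via a case split on the Block-rank-one condition, with tractability of the three listed cases following from \autoref{thm:Holant*}, \autoref{thm:csp-dichotomy}, and the orthogonal-transformation equivalence. Your remark that the $H$ arising in \autoref{lemma:dichotomy-block} is necessarily \emph{real} orthogonal (tracing back through \autoref{lemma:4-block-rank-one}, \autoref{coro:adjacency}, and \autoref{lemma:binary-non-affine} to the spectral decomposition in \autoref{lemma:realsym-diag}) is a detail the paper leaves implicit, and it is exactly what is needed to match clause~3 of the theorem.

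One small inaccuracy in your closing speculation: the proof of \autoref{lemma:block-implies-affine} in the paper does \emph{not} reduce to a weak-balance classification for a general-domain $\CSP$. It stays entirely over the Boolean domain, passing to the equivalent balance notion (\autoref{lemma:balance-eq-block}), freely adjoining $[1,0],[0,1],[1,1]$ (\autoref{lemma:balance-unary}), and then directly extracting either a generalized equality $[a,0,\dots,0,b]$ or the parity signature $[1,0,1,0]$ from an irreducible function of arity $\ge 3$, invoking \autoref{lemma:balance-eq} or \autoref{lemma:balance-ternary-xor} to force $\mathcal{F}\subseteq\atype$ or $\mathcal{F}\subseteq\ptype$. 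This does not affect the correctness of your proof of \autoref{thm:dichotomy} itself, which properly treats \autoref{lemma:block-implies-affine} as a black box.
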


The remaining is to prove \autoref{lemma:block-implies-affine}. Let $\mathcal{F}$ be a function set that satisfies the Block-rank-one condition. The condition is a little conceptual, since it is imposed on $\mathcal{W}_\mathcal{F}$, in which the functions are not necessarily realizable. To obtain the structure of $\mathcal{F}$, it is more convenient to consider directly the set $\mathcal{F}$ and the functions realizable from it. So in the next subsection, we will introduce a notion equivalent to the Block-rank-one condition. This notion restricts the function set $\sig(\mathcal{F})$.

\subsection{Balance}
\label{subsec:balance}
We define the notion of \emph{balance} for non-negative Holant problems. The notion was introduced for non-negative $\CSP$ by Cai, Chen and Lu \cite{CCL2011}.
\begin{mydef}[Balance]
    Let $\mathcal{F}$ be a set of non-negative functions. $\mathcal{F}$ is called \emph{balanced} if for any function  $f\in \sig(\mathcal{F})$, every signature matrix in $\{M_{[r]}(f)\myvert 1\leq r \leq \arity(f)\}$ is block-rank-one. A non-negative function $f$ is \emph{balanced} if the set $\{f\}$ is balanced.
\end{mydef}
Note that in the definition above, when $r=\arity(f)$, the matrix $M_{[r]}(f)$ is a column vector and hence trivially block-rank-one.

Balanced sets satisfy the Block-rank-one condition. Generally, we have the following lemma.

\begin{lemma}
\label{lemma:definable-balanced}
    Let $\mathcal{F}$ be a set of non-negative functions. Suppose that $\mathcal{F}$ is balanced. Then for any $f\in\mathcal{W}_\mathcal{F}$, every matrix in $\{M_{[r]}(f)\myvert 1\leq r \leq \arity(f)\}$ is block-rank-one.
\end{lemma}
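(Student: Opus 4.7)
The plan is to reduce the claim directly to the hypothesis on $g \in \sig(\mathcal{F})$ by means of \autoref{lemma:break-edges}. Fix $f \in \mathcal{W}_\mathcal{F}$ of arity $n$ and let $g \in \sig(\mathcal{F})$ be the arity-$2n$ function guaranteed by that lemma, so that
\begin{equation*}
    f(x_1,\ldots,x_n) = g(x_1,x_1,x_2,x_2,\ldots,x_n,x_n)
    \quad \text{for all } x_1,\ldots,x_n \in \{0,1\}.
\end{equation*}
Since $\mathcal{F}$ is balanced, every matrix $M_{[s]}(g)$, $1 \leq s \leq 2n$, is block-rank-one. Fix $r \in [n]$; the case $r=n$ is trivial, so suppose $r < n$. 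I want to show $M_{[r]}(f)$ is block-rank-one, i.e.\ any two of its rows are linearly dependent or orthogonal.

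The key observation is that the row of $M_{[r]}(f)$ indexed by $\mathbf{a}=(a_1,\ldots,a_r)$ is the diagonal restriction of the row of $M_{[2r]}(g)$ indexed by $\widehat{\mathbf{a}}=(a_1,a_1,\ldots,a_r,a_r)$: namely, evaluating the $\widehat{\mathbf{a}}$-row of $M_{[2r]}(g)$ at $(y_1,y_1,\ldots,y_{n-r},y_{n-r})$ returns $f(\mathbf{a},y_1,\ldots,y_{n-r})$. So for any two row indices $\mathbf{a},\mathbf{b}\in\{0,1\}^r$, pick the corresponding rows $u,v$ of $M_{[2r]}(g)$; by the balanced property they are linearly dependent or orthogonal as vectors in $\mathbb{R}_+^{2^{2n-2r}}$. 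I will argue that each of these two properties is inherited by their diagonal restrictions $u',v'$, which are exactly the rows of $M_{[r]}(f)$ we are comparing.

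For the linear-dependence case, proportionality of $u$ and $v$ (that is, $\alpha u = \beta v$ for some scalars not both zero) is clearly preserved under any coordinate projection, so $\alpha u' = \beta v'$. For the orthogonality case the crucial point is non-negativity: $\sum_\mathbf{z} u(\mathbf{z}) v(\mathbf{z}) = 0$ with $u,v \geq 0$ forces $u(\mathbf{z}) v(\mathbf{z}) = 0$ for every $\mathbf{z}$, in particular for the diagonal choices $\mathbf{z} = (y_1,y_1,\ldots,y_{n-r},y_{n-r})$; summing over these yields $\sum_\mathbf{y} u'(\mathbf{y}) v'(\mathbf{y}) = 0$, i.e.\ $u'$ and $v'$ are orthogonal. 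In either case the two rows of $M_{[r]}(f)$ indexed by $\mathbf{a}$ and $\mathbf{b}$ satisfy the block-rank-one dichotomy, which completes the proof.

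I do not expect a real obstacle here: the whole argument is a transfer via \autoref{lemma:break-edges}, and the only place where something substantive happens is the elementary observation that non-negative orthogonality implies pointwise vanishing of the product. The only detail to double-check is that the scalar-multiple case genuinely includes a zero row (which it does, since a zero vector is linearly dependent on every vector), so no separate argument is needed when a row of $M_{[2r]}(g)$ restricts to a zero row of $M_{[r]}(f)$.
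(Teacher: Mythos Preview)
Your proof is correct and follows essentially the same approach as the paper: invoke \autoref{lemma:break-edges} to get $g\in\sig(\mathcal{F})$, observe that $M_{[r]}(f)$ is a submatrix of $M_{[2r]}(g)$ (rows and columns of ``diagonal'' indices), and conclude block-rank-one from balance. The paper compresses the last step into a single sentence, whereas you spell out explicitly why both linear dependence and orthogonality of non-negative rows survive restriction to a subset of coordinates---this is a welcome clarification, not a different route.
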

\begin{proof}
    Let $f\in\mathcal{W}_\mathcal{F}$ be a function of arity $n$. Then by \autoref{lemma:break-edges}, there exists a function $g\in\sig(\mathcal{F})$ of arity $2n$, such that for all $x_1,x_2,...,x_n\in\{0,1\}$,
    \begin{equation*}
        f(x_1,x_2,...,x_n)=g(x_1,x_1,x_2,x_2,...,x_n,x_n).
    \end{equation*}
    Therefore, for any $r\in [n]$, $M_{[r]}(f)$ is a submatrix of $M_{[2r]}(g)$. Because $\mathcal{F}$ is balanced, $M_{[2r]}(g)$ is block-rank-one. Hence so is $M_{[r]}(f)$.
\end{proof}

Let $f$ be a non-negative function of arity $n$. And let $s_1,...,s_n$ be $n$ non-negative unary functions. We call $(s_1,...,s_n)$ a \emph{vector representation} of $f$ if for all $\mathbf{x}\in\{0,1\}^n$, either $f(\mathbf{x})=0$ or $f(\mathbf{x})=s_1(x_1)\cdots s_n(x_n)$.
\begin{lemma}[\cite{CCL2011}]
\label{lemma:vector-representation}
    Let $f$ be a non-negative function of arity $n$. If $f^{[t]}$ is block-rank-one for all $t\in [n]$, then $f$ has a vector representation.
\end{lemma}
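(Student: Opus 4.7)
The plan is to prove the lemma by induction on the arity $n$. The base case $n = 1$ is immediate: setting $s_1 := f$ trivially gives a vector representation, since $f(x_1) = s_1(x_1)$ holds for every input.

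For the inductive step, the first move is to apply the induction hypothesis to $f^{[n-1]}$, which has arity $n-1$. This requires checking that $f^{[n-1]}$ itself satisfies the hypotheses of the lemma, and this is immediate because successive marginalizations commute: $(f^{[n-1]})^{[t]}(x_1, \ldots, x_t) = \sum_{x_{t+1}, \ldots, x_n} f(x_1, \ldots, x_n) = f^{[t]}(x_1, \ldots, x_t)$ for every $t \in [n-1]$, and each such $f^{[t]}$ is block-rank-one by hypothesis. The induction hypothesis then furnishes a vector representation $(s_1, \ldots, s_{n-1})$ of $f^{[n-1]}$.

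Next, I would construct $s_n$ by examining the matrix $M_{[n-1]}(f)$, which is block-rank-one since $f = f^{[n]}$ is. This is a $2^{n-1} \times 2$ matrix with rows $R_{\mathbf{u}} := (f(\mathbf{u}, 0), f(\mathbf{u}, 1))$; block-rank-one says any two rows are parallel or orthogonal. Because two non-negative $2$-vectors are orthogonal exactly when their supports are disjoint, this yields a clean dichotomy. If some row has both entries strictly positive, then every non-zero row must be a positive multiple of a single direction $(a, b)$ with $a, b > 0$ (an orthogonal non-zero row would force $a$ or $b$ to vanish), so $f(\mathbf{u}, x_n) = r(\mathbf{u}) t(x_n)$ with $t(0) = a$, $t(1) = b$; one checks that $r(\mathbf{u}) = f^{[n-1]}(\mathbf{u})/(a+b)$, and setting $s_n(x_n) := t(x_n)/(a+b)$ extends the representation from $f^{[n-1]}$ to $f$. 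Otherwise, each row is supported on at most one column, and I would take $s_n$ to be $(1, 0)$, $(0, 1)$, or $(1, 1)$ according to whether the non-zero rows are all of the form $(c, 0)$, all of the form $(0, d)$, or a mixture of both.

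The only step that demands any care is the verification in the last (mixed) subcase, where the candidate product $\prod_{i=1}^n s_i(x_i)$ may be strictly positive at inputs $\mathbf{x}$ where $f(\mathbf{x}) = 0$. This does not obstruct the conclusion, since the vector-representation requirement is disjunctive: at any $\mathbf{x}$ with $f(\mathbf{x}) = 0$ the first clause of the definition holds for free, regardless of the value of the product. The real content of the argument is the observation that for non-negative matrices with two columns the block-rank-one condition collapses to the combinatorial statement that either all non-zero rows share a common strictly positive direction, or each is supported on a single coordinate; once this is in hand, the construction of $s_n$ and the casework are straightforward.
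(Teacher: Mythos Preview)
The paper does not supply its own proof of this lemma; it is quoted from \cite{CCL2011} and stated without argument. Your inductive proof is correct and is essentially the natural way to establish the result: marginalize one variable at a time, use the hypothesis to get a vector representation of $f^{[n-1]}$, and then read off $s_n$ from the block-rank-one structure of the $2^{n-1}\times 2$ matrix $M_{[n-1]}(f)$. Your case analysis is sound; in particular your observation that in the ``mixed'' subcase the product may be positive where $f$ vanishes, and that this is harmless by the disjunctive form of the definition, is exactly the point that needs to be made. This matches the argument in \cite{CCL2011}.
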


\begin{lemma}
\label{lemma:realizable-representation}
    Let $\mathcal{F}$ be a set of non-negative functions that satisfies the Block-rank-one condition. Then every function in $\sig(\mathcal{F})$ has a vector representation.
\end{lemma}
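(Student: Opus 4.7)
The plan is to invoke \autoref{lemma:vector-representation} on the pointwise square $f^2$ rather than on $f$ directly, and then recover a vector representation of $f$ by coordinate-wise non-negative square roots. The strategic advantage of this detour is that every $(f^2)^{[t]}$ lies in $\mathcal{W}_\mathcal{F}$ and is therefore controlled by the Block-rank-one hypothesis, whereas no analogous realization of $f^{[t]}$ itself is readily available.

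Given $f \in \sig(\mathcal{F})$ of arity $n$, realized by an $\mathcal{F}$-gate $\Gamma$ with $m$ internal edges, the first step is to form the following $\Holant(\mathcal{F})$ instance: take two disjoint copies of $\Gamma$ and, for each $k \in [n]$, connect the $k$-th dangling edges of the two copies. Number the $n$ resulting connecting edges $1,\ldots,n$ and the $2m$ internal edges $n+1,\ldots,n+2m$, and let $F$ be the resulting $(n+2m)$-ary function defined by this instance. Summing out the internal edges of each copy recovers $f$, so for every $t \in [n]$,
\begin{equation*}
F^{[t]}(x_1,\ldots,x_t) \;=\; \sum_{x_{t+1},\ldots,x_n} f(x_1,\ldots,x_n)^2 \;=\; (f^2)^{[t]}(x_1,\ldots,x_t).
\end{equation*}
Hence $(f^2)^{[t]} \in \mathcal{W}_\mathcal{F}$ for every $t \in [n]$, and by the Block-rank-one hypothesis each $(f^2)^{[t]}$ is block-rank-one.

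Now \autoref{lemma:vector-representation} applied to the non-negative function $f^2$ produces non-negative unary functions $s_1,\ldots,s_n$ with either $f^2(\mathbf{x}) = 0$ or $f^2(\mathbf{x}) = s_1(x_1)\cdots s_n(x_n)$ at every $\mathbf{x}$. Since $f \geq 0$, $\supp(f) = \supp(f^2)$; on this common support, taking non-negative square roots gives $f(\mathbf{x}) = \sqrt{s_1(x_1)}\cdots\sqrt{s_n(x_n)}$. Setting $s'_i := \sqrt{s_i}$ then delivers a valid vector representation $(s'_1,\ldots,s'_n)$ of $f$.

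The main substantive step is the bookkeeping in the second paragraph showing that $(f^2)^{[t]}$ belongs to $\mathcal{W}_\mathcal{F}$ for \emph{every} $t$, rather than only for $t = n$ as given by \autoref{lemma:function-square-definable}; once the edges of the two-copy instance are ordered with the connecting edges first and the internal edges last, this follows immediately from unpacking the definitions, and the passage from $f^2$ back to $f$ via $\sqrt{\cdot}$ is then routine thanks to nonnegativity.
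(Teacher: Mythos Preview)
Your proof is correct and follows essentially the same route as the paper: form the two-copy instance to realize $f^2$ as $F^{[n]}$ for a Holant-defined $F$, note that each $(f^2)^{[t]}=F^{[t]}\in\mathcal{W}_\mathcal{F}$ is block-rank-one, apply \autoref{lemma:vector-representation} to $f^2$, and take coordinate-wise square roots. The only difference is expository: you spell out explicitly why $(f^2)^{[t]}\in\mathcal{W}_\mathcal{F}$ for every $t$, whereas the paper simply cites \autoref{lemma:function-square-definable} and leaves this closure property (that $g\in\mathcal{W}_\mathcal{F}$ implies $g^{[t]}\in\mathcal{W}_\mathcal{F}$) implicit.
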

\begin{proof}
    Let $f$ be a function in $\sig(\mathcal{F})$ of arity $n$. By \autoref{lemma:function-square-definable}, $f^2\in \mathcal{W}_\mathcal{F}$. Then $f^2$ has a vector representation $(s_1,...,s_n)$ by \autoref{lemma:vector-representation}. Let $(s_1',...,s_n')$ be $n$ non-negative unary functions such that for all $i\in [n]$, $s_i'(a)=\sqrt{s_i(a)}$ for $a\in\{0,1\}$. Then $(s_1',...,s_n')$ is a vector representation of the function $f$.
\end{proof}

Now we are able to prove the equivalence between the notion of balance and the Block-rank-one condition.

\begin{lemma}
\label{lemma:balance-eq-block}
    Let $\mathcal{F}$ be a set of non-negative functions. $\mathcal{F}$ is balanced if and only if $\mathcal{F}$ satisfies the Block-rank-one condition.
\end{lemma}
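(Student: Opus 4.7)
The forward direction is essentially immediate: if $\mathcal{F}$ is balanced, then by \autoref{lemma:definable-balanced} every signature matrix $M_{[r]}(h)$ of every $h\in\mathcal{W}_\mathcal{F}$ is block-rank-one, and specializing $r=\arity(h)-1$ says $h$ is block-rank-one, which is precisely the Block-rank-one condition. So I will devote the substantive work to the reverse implication.

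For the backward direction, assume $\mathcal{F}$ satisfies the Block-rank-one condition, fix $f\in\sig(\mathcal{F})$ of arity $n$, and fix $r\in[n]$; I must show $M_{[r]}(f)$ is block-rank-one. The first move is to invoke \autoref{lemma:realizable-representation} to produce a vector representation $(s_1,\ldots,s_n)$ of $f$. This lets me identify the block-rank-one property of $M_{[r]}(f)$ with a purely combinatorial condition on $\supp(f)$: writing $T_{\mathbf{a}}=\{\mathbf{y}:f(\mathbf{a},\mathbf{y})\neq 0\}$ for $\mathbf{a}\in\{0,1\}^r$, any two nonzero-row fibers $T_{\mathbf{a}},T_{\mathbf{b}}$ must be equal or disjoint, because on their overlap the vector representation automatically forces proportional values with ratio $\prod_i s_i(a_i)/\prod_i s_i(b_i)$. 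Next, since $\sig(\mathcal{F})$ is closed under input permutations (\autoref{lemma:function-permutation}), every $(f_\pi)^2=(f^2)_\pi$ lies in $\mathcal{W}_\mathcal{F}$ by \autoref{lemma:function-square-definable}, hence is block-rank-one by assumption; a short squaring observation (for a non-negative two-column matrix $M$, block-rank-one of $M$ is equivalent to block-rank-one of the entrywise square $M^{\odot 2}$) upgrades this to: for every single coordinate $j\in[n]$, the matrix of $f$ obtained by taking $x_j$ as the column variable is block-rank-one.

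I then argue by contradiction: suppose the fibers fail rectangularity, so there exist $\mathbf{a},\mathbf{b}\in\{0,1\}^r$ and $\mathbf{y},\mathbf{y}'\in\{0,1\}^{n-r}$ with $f(\mathbf{a},\mathbf{y}),f(\mathbf{b},\mathbf{y}),f(\mathbf{a},\mathbf{y}')>0$ and $f(\mathbf{b},\mathbf{y}')=0$. If $\mathbf{y}$ and $\mathbf{y}'$ differ in a single coordinate $j$, the rows $(\mathbf{a},\mathbf{y}_{-j})$ and $(\mathbf{b},\mathbf{y}_{-j})$ of the single-coordinate matrix above are $(+,+)$ and $(+,0)$, immediately contradicting block-rank-one. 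The general case reduces to this adjacent case by walking along a shortest hypercube path from $\mathbf{y}$ to $\mathbf{y}'$ inside the coordinates where they differ and looking at the first step that leaves $T_\mathbf{b}$; the vector representation guarantees $\prod_j s_j(y^\ast_j)>0$ at every intermediate $\mathbf{y}^\ast$, so if both endpoints of that step remain in $T_\mathbf{a}$ we are done. Otherwise I feed the intermediate configuration into an auxiliary function $(f^2)_\pi^{[t]}\in\mathcal{W}_\mathcal{F}$, chosen with $\pi$ and $t$ so that summing out the coordinates where $\mathbf{y}=\mathbf{y}'$ aggregates the $(+,+,+,0)$ pattern into a $2\times 2$ submatrix of $M_{[t-1]}((f^2)_\pi^{[t]})$; using the vector representation to keep the aggregated weights positive, this submatrix again fails block-rank-one, contradicting the Block-rank-one condition applied to $\mathcal{W}_\mathcal{F}$.

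The main obstacle is precisely this last reduction: the single-coordinate-as-column information derived above is strictly weaker than rectangularity in all bipartitions (easy examples show that ``every $\pi_i$-slice pattern is equal or disjoint'' does not by itself imply ``every bipartition gives rectangular fibers''). The genuine leverage comes from the summed auxiliaries $(f^2)_\pi^{[t]}$, and the care lies in choosing $\pi$ and $t$ so that the positive weights $\prod_j s_j(y_j)^2$ contributed by the uncollapsed $Y_{\neq}$ coordinates cannot accidentally zero out the aggregated entries; this is where the vector representation is used essentially, since it pins down exactly which summands are forced to be nonzero whenever $\hat f>0$ at the corresponding argument.
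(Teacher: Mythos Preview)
Your forward direction is fine and matches the paper. The backward direction, however, takes a detour that does not close, and the paper's own argument is much shorter and sidesteps exactly the obstacle you flag.

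The paper never applies the vector representation to $f$. Given $f\in\sig(\mathcal{F})$ and $r$ with $M=M_{[r]}(f)$ not block-rank-one, it realizes $g=MM^{\mathsf T}\in\sig(\mathcal{F})$ and looks at the $2\times 2$ submatrix
\[
h=\begin{bmatrix} g(\mathbf{x},\mathbf{x}) & g(\mathbf{x},\mathbf{y})\\ g(\mathbf{y},\mathbf{x}) & g(\mathbf{y},\mathbf{y})\end{bmatrix}=\begin{bmatrix} a & b\\ b & c\end{bmatrix},
\]
where $\mathbf{x},\mathbf{y}$ index two rows of $M$ that are linearly independent but not orthogonal; then $a,b,c>0$ and $ac>b^2$ by Cauchy--Schwarz. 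Now \autoref{lemma:realizable-representation} is applied to $g$ (not $f$): since all four entries of $h$ are positive, each equals the product form $s(\cdot)t(\cdot)$ coming from the vector representation of $g$, and $\det h=0$ follows immediately. That is the entire argument.

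Your plan instead applies the vector representation to $f$ and tries to recover rectangularity of the fibers $T_{\mathbf a}$ via a hypercube walk plus an auxiliary $(f^2)_\pi^{[t]}\in\mathcal{W}_\mathcal{F}$. The gap is in that last step. After summing out the coordinates $\bar J$ on which $\mathbf{y}$ and $\mathbf{y}'$ agree, you need the entry at $(\mathbf b,\mathbf y'|_J)$ to remain zero, i.e.\ $\sum_{\mathbf z} f(\mathbf b,\mathbf y'|_J,\mathbf z)^2=0$. Nothing you have rules out some $\mathbf z\neq \mathbf y|_{\bar J}$ with $f(\mathbf b,\mathbf y'|_J,\mathbf z)>0$: the vector representation of $f$ constrains the nonzero values of $f$ but says nothing about which tuples lie outside $\supp(f)$, so your observation that $\prod_j s_j(y^\ast_j)>0$ along the walk is a red herring. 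If such a $\mathbf z$ exists, the aggregated $(+,+,+,0)$ pattern becomes $(+,+,+,+)$, and then the vector representation of the aggregate (which it does have, being in $\mathcal{W}_\mathcal{F}$) forces that $2\times 2$ block to be singular rather than rank two. In the extreme case $\bar J=\emptyset$ your auxiliary degenerates to the single-coordinate slice you already extracted, which you yourself note is insufficient. The correct fix is precisely the paper's move: pass to $g=MM^{\mathsf T}$ and use the vector representation there, where all four relevant entries are guaranteed positive from the outset.
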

\begin{proof}
    The necessity follows directly from \autoref{lemma:definable-balanced}. We only need to show the sufficiency.

    Let $f$ be an $n$-ary function in $\sig(\mathcal{F})$, with $n\geq 2$. And suppose that $M=M_{[r]}(f)$ is not block-rank-one for some $r\in [n]$. Then there exist two rows of $M$, indexed by some $\mathbf{x},\mathbf{y}\in\{0,1\}^r$, which are linearly independent but not orthogonal. So we can realize a signature $g=MM^{\mathsf T}$. Its submatrix
    \begin{equation*}
        h=\begin{bmatrix}
            g(\mathbf{x},\mathbf{x}) & g(\mathbf{x},\mathbf{y}) \\
            g(\mathbf{y},\mathbf{x}) & g(\mathbf{y},\mathbf{y})
        \end{bmatrix} =
        \begin{bmatrix}
          a & b \\
          b & c
        \end{bmatrix}
    \end{equation*}
    is of full rank and $a,b,c>0$. But by \autoref{lemma:realizable-representation}, $g$ has a vector representation $(s_1,...,s_{2r})$, such that for all $\mathbf{u}\in\supp(g)$, $g(\mathbf{u})=s_1(u_1)\cdots s_{2r}(u_{2r})$. Let $s=s_1\otimes \cdots \otimes s_r$ and $t=s_{r+1}\otimes \cdots \otimes s_{2r}$. Then
    \begin{equation*}
        h=\begin{bmatrix}
            s(\mathbf{x})t(\mathbf{x}) & s(\mathbf{x})t(\mathbf{y}) \\
            s(\mathbf{y})t(\mathbf{x}) & s(\mathbf{y})t(\mathbf{y})
        \end{bmatrix},
    \end{equation*}
    which is singular. A contradiction.
\end{proof}

Having shown the equivalence, we turn to consider some properties of balanced sets. There are two basic facts about balance. Later we will often use them but without explicit reference.
\begin{lemma}
    If $\mathcal{F}\subseteq \mathcal{G}$ and $\mathcal{G}$ is balanced, then $\mathcal{F}$ is also balanced.
\end{lemma}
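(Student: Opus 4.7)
The plan is to reduce the claim directly to the monotonicity of the realizability operator $\sig$ with respect to subset inclusion. Specifically, if $\mathcal{F}\subseteq\mathcal{G}$, then any $\mathcal{F}$-gate $\Gamma=(G,\pi)$ is automatically a $\mathcal{G}$-gate, because the map $\pi$ still assigns to each vertex a function in $\mathcal{G}$ (since $f_v\in\mathcal{F}\subseteq\mathcal{G}$). Hence every function realizable from $\mathcal{F}$ is realizable from $\mathcal{G}$, giving the inclusion $\sig(\mathcal{F})\subseteq \sig(\mathcal{G})$.

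With this inclusion in hand, I would simply unfold the definition of balance. By hypothesis $\mathcal{G}$ is balanced, so for every $g\in\sig(\mathcal{G})$ and every $r\in[\arity(g)]$, the signature matrix $M_{[r]}(g)$ is block-rank-one. Taking an arbitrary $f\in\sig(\mathcal{F})$, the inclusion $\sig(\mathcal{F})\subseteq \sig(\mathcal{G})$ gives $f\in\sig(\mathcal{G})$, and therefore every $M_{[r]}(f)$ is block-rank-one. This is exactly the condition that $\mathcal{F}$ is balanced.

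There is no real obstacle here; the only substantive ingredient is the trivial observation that realizability is monotone in the function set, which follows immediately from the definition of $\mathcal{F}$-gate since no structural restriction on the gate depends on the ambient function set other than that each vertex label lies in it. Accordingly, the proof is essentially a one-liner and the statement should be recorded as a basic fact, analogous to the preceding monotonicity facts about $\sig$ given earlier in the paper.
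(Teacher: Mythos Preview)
Your proposal is correct and matches the paper's treatment: the paper states this lemma as a basic fact without proof, and your argument via the monotonicity $\sig(\mathcal{F})\subseteq\sig(\mathcal{G})$ followed by unfolding the definition of balance is exactly the intended one-line justification.
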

\begin{lemma}
    If $f\in\sig(\mathcal{F})$ and $\mathcal{F}$ is balanced, then $\mathcal{F}\cup\{f\}$ is also balanced.
\end{lemma}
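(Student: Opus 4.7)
The plan is to reduce the statement to the earlier lemma asserting that $\sig(\mathcal{G})\subseteq \sig(\mathcal{F})$ whenever $\mathcal{G}\subseteq\sig(\mathcal{F})$. The definition of balance only quantifies over functions in $\sig(\mathcal{F})$, so the whole task is to verify that enlarging $\mathcal{F}$ to $\mathcal{F}\cup\{f\}$ does not enlarge the realizable set beyond $\sig(\mathcal{F})$.

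First I would observe that $\mathcal{F}\cup\{f\}\subseteq \sig(\mathcal{F})$: every $h\in\mathcal{F}$ is realizable from $\mathcal{F}$ as a single vertex with no internal edges, and $f\in\sig(\mathcal{F})$ by assumption. Applying the cited lemma with $\mathcal{G}=\mathcal{F}\cup\{f\}$ gives $\sig(\mathcal{F}\cup\{f\})\subseteq \sig(\mathcal{F})$.

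Now let $g$ be an arbitrary function in $\sig(\mathcal{F}\cup\{f\})$ and let $r\in[\arity(g)]$. By the inclusion just established, $g\in\sig(\mathcal{F})$, and since $\mathcal{F}$ is balanced the matrix $M_{[r]}(g)$ is block-rank-one. As $g$ and $r$ were arbitrary, $\mathcal{F}\cup\{f\}$ is balanced.

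There is no real obstacle here; the statement is essentially a bookkeeping consequence of the monotonicity of $\sig(\cdot)$. The only thing worth emphasizing in the write-up is the justification of $\mathcal{F}\cup\{f\}\subseteq\sig(\mathcal{F})$, which uses both the trivial realizability of members of $\mathcal{F}$ (a single vertex with dangling edges, no internal edges) and the hypothesis $f\in\sig(\mathcal{F})$.
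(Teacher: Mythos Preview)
Your argument is correct and is exactly the intended one: the paper states this lemma as a basic fact without proof, and your reduction via $\sig(\mathcal{F}\cup\{f\})\subseteq\sig(\mathcal{F})$ (using the earlier monotonicity lemma for $\sig$) is the natural justification.
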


In Boolean $\CSP$, the two unary functions $[1,0]$ and $[0,1]$ can be simulated \cite{DGJ2009}. And the function $[1,1]$ is the unary equality function, which is freely available. These unary functions make it more convenient to construct certain functions. But in Holant problems, generally we do not know how to realize or simulate them. Fortunately, we can circumvent this difficulty by the lemma below. It follows from \autoref{lemma:balance-pinning} and \autoref{lemma:balance-[1,1]}.
\begin{lemma}
\label{lemma:balance-unary}
    If $\mathcal{F}$ is balanced, then the set $\mathcal{F}\cup\{[1,0],[0,1],[1,1]\}$ is balanced.
\end{lemma}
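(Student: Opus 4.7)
The plan is to derive this statement as the composition of two intermediate lemmas, which I will call Lemma balance-pinning and Lemma balance-$[1,1]$. Lemma balance-pinning asserts that if $\mathcal{F}$ is balanced then so is $\mathcal{F}\cup\{[1,0],[0,1]\}$; Lemma balance-$[1,1]$ asserts that balance is preserved when one adjoins $[1,1]$. Applying the first to $\mathcal{F}$, and then the second to the balanced set $\mathcal{F}\cup\{[1,0],[0,1]\}$, yields the statement.

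For Lemma balance-pinning, the key observation is that connecting a $[1,0]$ or $[0,1]$ unary to a dangling edge of an $\mathcal{F}$-gate just pins the corresponding input to a fixed value. Hence every $f\in\sig(\mathcal{F}\cup\{[1,0],[0,1]\})$ can be written, up to a permutation of inputs (invoking \autoref{lemma:function-permutation}), as $f(\mathbf{y})=g(\mathbf{u},\mathbf{y})$ for some $g\in\sig(\mathcal{F})$ and some fixed string $\mathbf{u}\in\{0,1\}^k$. For any $r\in[\arity(f)]$, the matrix $M_{[r]}(f)$ is then the sub-matrix of $M_{[r+k]}(g)$ consisting of those rows whose first $k$ coordinates equal $\mathbf{u}$. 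Since being pairwise parallel-or-orthogonal is preserved by restricting to a subset of rows, the balance of $\mathcal{F}$ forces $M_{[r]}(f)$ to be block-rank-one, completing this case.

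For Lemma balance-$[1,1]$, attaching a $[1,1]$ to a dangling edge sums out the corresponding variable, so every $f\in\sig(\mathcal{F}\cup\{[1,1]\})$ can be expressed, again up to a permutation, as $f(\mathbf{y})=\sum_{\mathbf{u}\in\{0,1\}^k}g(\mathbf{u},\mathbf{y})$ for some $g\in\sig(\mathcal{F})$ of arity $n+k$. By \autoref{lemma:balance-eq-block} and \autoref{lemma:realizable-representation}, $g$ admits a vector representation $(t_1,\dots,t_{n+k})$, so on its support $g(\mathbf{u},\mathbf{y})=\prod_{i=1}^k t_i(u_i)\prod_{j=1}^n t_{k+j}(y_j)$. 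For a cut $M_{[r]}(f)$, I would consider the larger matrix $M_{[k+r]}(g)$, which by balance of $\mathcal{F}$ is block-rank-one: partition its rows into parallel classes $C_1,\dots,C_m$ with orthogonal representative vectors $\mathbf{w}_\ell$. Each row of $M_{[r]}(f)$ is a sum, over all $\mathbf{u}$, of the rows of $M_{[k+r]}(g)$ indexed by $(\mathbf{u},\mathbf{v})$. Using the product structure of the vector representation together with the block-rank-one structure of $M_{[k+r]}(g)$ (and also $M_{[r]}(g)$, which controls how support projects), one shows that the parallel-class label of the row $(\mathbf{u},\mathbf{v})$ is determined by $\mathbf{v}$ alone. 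Hence summing over $\mathbf{u}$ does not mix classes, and the aggregated rows of $M_{[r]}(f)$ are themselves pairwise parallel or orthogonal.

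The main obstacle is Lemma balance-$[1,1]$. Pinning trivially preserves block-rank-one because a sub-matrix inherits the property, whereas summation takes non-negative linear combinations of rows, which in general destroys block-rank-one. Closing this case requires exploiting several block-rank-one cuts of $g$ simultaneously, coupled with the product structure furnished by the vector representation, to verify that class-membership in $M_{[k+r]}(g)$ really is independent of the summed-out coordinate $\mathbf{u}$. Once this structural observation is proved, both sub-lemmas follow and their composition establishes the stated result.
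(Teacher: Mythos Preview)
Your decomposition into the two sub-lemmas (balance-pinning and balance-$[1,1]$) is exactly the route the paper takes, and your argument for balance-pinning is essentially the paper's.

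The gap is in balance-$[1,1]$. Your key structural claim---that in $M_{[k+r]}(g)$ the parallel-class label of the row indexed by $(\mathbf{u},\mathbf{v})$ depends only on $\mathbf{v}$---is false. Take $g=[1,0,1,0]$ (ternary even-parity), which lies in $\sig(\mathcal{F})$ for the balanced set $\mathcal{F}=\{[1,0,1,0]\}$. With $k=r=1$, the rows of $M_{[2]}(g)$ indexed by $(u,v)=(0,0),(0,1),(1,0),(1,1)$ are $(1,0),(0,1),(0,1),(1,0)$ respectively. Rows $(0,0)$ and $(1,0)$ share the same $v=0$ but sit in different (orthogonal) classes. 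Summing over $u$ \emph{does} mix classes here; the resulting $M_{[1]}(f)$ happens to be block-rank-one (both rows equal $(1,1)$), but not for the reason you propose. The vector representation only constrains $g$ on its support, and the support can, as in this example, couple $\mathbf{u}$ and $\mathbf{v}$ in ways that defeat your independence claim.

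The paper closes this case by a different mechanism. It proves an auxiliary inductive lemma (\autoref{lemma:realizable-summing}): for $f\in\sig(\mathcal{F})$ and $F=f^2$, there is a constant $\lambda_t>0$ with $F^{[t]}=\lambda_t\,(f^{[t]})^2$ for every $t$. Since $F=f^2\in\mathcal{W}_\mathcal{F}$ (\autoref{lemma:function-square-definable}), one has $F^{[t]}\in\mathcal{W}_\mathcal{F}$, and by \autoref{lemma:definable-balanced} every signature matrix of $F^{[t]}$ is block-rank-one; hence so is every signature matrix of $f^{[t]}$. Any function realized by a gate in $\sig(\mathcal{F}\cup\{[1,1]\})$ is, after stripping trivial $[1,1]$-components, exactly such an $f^{[t]}$, which finishes the argument. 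The idea you are missing is this passage through $f^2$ and $\mathcal{W}_\mathcal{F}$; a purely row-combinatorial argument on a single realizable $g$ does not suffice.
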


\begin{lemma}
\label{lemma:balance-pinning}
    If $\mathcal{F}$ is balanced, then $\mathcal{F}\cup\{[1,0],[0,1]\}$ is balanced.
\end{lemma}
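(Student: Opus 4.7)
The plan is to reduce realization with pinning gadgets to realization without pinning, and then argue that pinning preserves the block-rank-one structure of every signature matrix. The key conceptual point is that, for non-negative matrices, the block-rank-one property is about disjoint supports and proportionality, both of which are inherited when we restrict to a subset of columns.

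First I would observe the following structural fact. Given any function $f \in \sig(\mathcal{F} \cup \{[1,0], [0,1]\})$ realized by an $(\mathcal{F} \cup \{[1,0], [0,1]\})$-gate $\Gamma$, we can produce an $\mathcal{F}$-gate $\Gamma'$ by deleting every vertex labeled $[1,0]$ or $[0,1]$ and promoting each former internal edge incident to such a vertex into an extra dangling edge. The gate $\Gamma'$ realizes some $g \in \sig(\mathcal{F})$ of arity $n+k$, where $n = \arity(f)$ and $k$ is the number of pinning vertices deleted. If $\mathbf{c} \in \{0,1\}^k$ is the sequence of pinned values read off from those deleted vertices, then $f(\mathbf{x}) = g(\mathbf{x}, \mathbf{c})$ (after permuting $g$'s inputs, which is legal by \autoref{lemma:function-permutation} and keeps us inside $\sig(\mathcal{F})$).

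Next I would compare signature matrices. Fix $r \in [n]$ and permute the inputs of $g$ so that the $k$ pinned arguments are the last $k$. Then $M_{[r]}(g)$ has rows indexed by $\{0,1\}^r$ and columns by $\{0,1\}^{n-r+k}$, while $M_{[r]}(f)$ is precisely the submatrix of $M_{[r]}(g)$ obtained by keeping all rows and restricting to those columns whose final $k$ coordinates equal $\mathbf{c}$. Thus each row of $M_{[r]}(f)$ is a coordinate restriction of the corresponding row of $M_{[r]}(g)$. Now I would check that block-rank-one survives such restriction: for two non-negative rows $\mathbf{u},\mathbf{v}$ of $M_{[r]}(g)$, either $\mathbf{u}$ and $\mathbf{v}$ are linearly dependent, or $\langle \mathbf{u},\mathbf{v}\rangle = 0$, which in the non-negative setting means $\supp(\mathbf{u}) \cap \supp(\mathbf{v}) = \emptyset$. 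Both proportionality and disjointness of supports are preserved under coordinate restriction, so the corresponding rows of $M_{[r]}(f)$ remain linearly dependent or orthogonal. Since $\mathcal{F}$ is balanced, $M_{[r]}(g)$ is block-rank-one, hence so is $M_{[r]}(f)$.

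The only mild obstacle is the bookkeeping of the input permutation that sends the pinned arguments of $g$ to the tail of the variable list; however, by \autoref{lemma:function-permutation} every permutation of inputs of $g$ stays in $\sig(\mathcal{F})$, so the balance hypothesis continues to apply. As $r$ and $f$ were arbitrary, every signature matrix of every function in $\sig(\mathcal{F} \cup \{[1,0], [0,1]\})$ is block-rank-one, which is exactly the statement that $\mathcal{F} \cup \{[1,0], [0,1]\}$ is balanced.
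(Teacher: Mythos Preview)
Your proof is correct and follows essentially the same route as the paper: both exhibit $M_{[r]}(f)$ as a submatrix of a signature matrix of some $g\in\sig(\mathcal{F})$ and then use that, for non-negative matrices, block-rank-one is inherited by submatrices. The paper first strips zero rows and columns from $M_{[r]}(f)$, which silently handles the edge cases your construction skips (a pinning vertex whose sole edge is itself dangling, or two pinning vertices joined by an internal edge); these only contribute zero rows/columns or scalar factors, so your argument goes through once this trivial patch is noted.
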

\begin{proof}
    Let $f$ be a function in $\sig(\mathcal{F}\cup\{[1,0],[0,1]\})$ and let $M_{[r]}(f)$ be any signature matrix. For each row of $M_{[r]}(f)$, if it is a zero vector, we delete it. And then we delete all the columns of zeros. Let $M$ denote the remaining submatrix of $M_{[r]}(f)$, then $M$ is block-rank-one if and only if $M_{[r]}(f)$ is block-rank-one. Moreover, there exits a function $F\in\sig(\mathcal{F})$ of arity $m$ such that $M$ is a submatrix of $M_{[s]}(F)$ for some $s\in [m]$. Since $\mathcal{F}$ is balanced, $M_{[s]}(F)$ is block-rank-one. Thus so is $M$.
\end{proof}

\begin{lemma}
\label{lemma:realizable-summing}
    Suppose that $\mathcal{F}$ is a balanced set of non-negative functions. Let $f$ be an $n$-ary function in $\sig(\mathcal{F})$ and let $F$ denote the function $f^2$. Then for each $t\in [n]$, there exists a constant $\lambda_t>0$ such that $F^{[t]}=\lambda_t (f^{[t]})^2$.
\end{lemma}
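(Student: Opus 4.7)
The plan is to prove the identity by downward induction on $t$, from the base $t = n$ down to $t = 1$. The base case is immediate: $F^{[n]} = F = f^2 = (f^{[n]})^2$, so $\lambda_n = 1$ works. For the inductive step, suppose $F^{[t+1]} = \lambda_{t+1}(f^{[t+1]})^2$ for some $\lambda_{t+1} > 0$, and try to build $\lambda_t$ from $\lambda_{t+1}$.

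The first observation is that $F = f^2 \in \mathcal{W}_\mathcal{F}$ by \autoref{lemma:function-square-definable}, and further partial summation of the instance that defines $F$ yields $F^{[t+1]} \in \mathcal{W}_\mathcal{F}$ as well (if $F = G^{[s]}$ then $F^{[t+1]} = G^{[t+1]}$). Since $\mathcal{F}$ is balanced, \autoref{lemma:definable-balanced} then ensures that every signature matrix of $F^{[t+1]}$ is block-rank-one; in particular $M_{[t]}(F^{[t+1]})$ is block-rank-one, and by the inductive hypothesis the positive scalar multiple $M_{[t]}((f^{[t+1]})^2)$ is block-rank-one too.

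Next, write $a(\mathbf{x}) = f^{[t+1]}(\mathbf{x},0)$ and $b(\mathbf{x}) = f^{[t+1]}(\mathbf{x},1)$, so that the row of $M_{[t]}((f^{[t+1]})^2)$ indexed by $\mathbf{x}$ is the nonnegative vector $(a(\mathbf{x})^2, b(\mathbf{x})^2) \in \mathbb{R}^2$. The pivotal geometric observation is that in $\mathbb{R}_{\geq 0}^2$, two nonzero vectors are proportional or orthogonal only if they sit on a common coordinate axis, or on perpendicular axes, or both lie strictly in the open first quadrant and are parallel; a vector on an axis and another strictly in the interior is neither proportional nor orthogonal. Applied row by row, block-rank-one therefore forces one of two global regimes over $\mathbf{x} \in \supp(f^{[t]})$: either (i) every such $\mathbf{x}$ has exactly one of $a(\mathbf{x}), b(\mathbf{x})$ positive, or (ii) every such $\mathbf{x}$ has both positive and the ratio $a(\mathbf{x}):b(\mathbf{x})$ equals a common $p:q$ for some fixed $p,q > 0$.

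In regime (i) the ratio $(a^2+b^2)/(a+b)^2$ equals $1$ throughout the support, and in regime (ii) it equals the constant $(p^2+q^2)/(p+q)^2$. Calling this shared value $\mu > 0$, summing the inductive identity over $x_{t+1}$ gives $F^{[t]}(\mathbf{x}) = \lambda_{t+1}(a(\mathbf{x})^2 + b(\mathbf{x})^2) = \lambda_{t+1}\mu(a(\mathbf{x}) + b(\mathbf{x}))^2 = \lambda_{t+1}\mu(f^{[t]}(\mathbf{x}))^2$, so one may set $\lambda_t := \lambda_{t+1}\mu > 0$. The main obstacle is the structural dichotomy: one must check carefully that the block-rank-one condition on a two-column nonnegative matrix really does rule out any mixed regime, because otherwise $\mu$ would vary with $\mathbf{x}$ and the induction would not close. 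This relies only on elementary geometry in $\mathbb{R}_{\geq 0}^2$, but it is the single step where the balanced hypothesis actually does the work.
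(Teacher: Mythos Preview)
Your argument is correct and essentially identical to the paper's proof: both use downward induction on $t$, invoke $F^{[t+1]}\in\mathcal{W}_\mathcal{F}$ together with balance to obtain that $M_{[t]}(F^{[t+1]})$ is block-rank-one, and then split into the two cases (orthogonal versus proportional) to extract the constant $\mu$. The only cosmetic difference is that the paper first passes to $f^{[t+1]}=\sqrt{F^{[t+1]}/\lambda_{t+1}}$ and phrases the dichotomy in terms of the two \emph{columns} $\mathbf{v}_0,\mathbf{v}_1$ of $M_{[t]}(f^{[t+1]})$, whereas you reason directly with the rows of $M_{[t]}((f^{[t+1]})^2)$ via the geometry of $\mathbb{R}_{\geq 0}^2$; these are equivalent formulations of the same case split.
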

\begin{proof}
    Impose induction on $t$. The base case $t=n$ is trivial where $\lambda_n=1$.

    Suppose that $F^{[t]}=\lambda_{t} (f^{[t]})^2$ for $t=k+1\leq n$. Consider the case $t=k$. For all $\mathbf{x}\in\{0,1\}^k$,
    \begin{align*}
        F^{[k]}(\mathbf{x}) & = F^{[k+1]}(\mathbf{x},0) + F^{[k+1]}(\mathbf{x},1) \\
            & = \lambda_{k+1} \left[\left(f^{[k+1]}(\mathbf{x},0)\right)^2+\left(f^{[k+1]}(\mathbf{x},1)\right)^2\right].
    \end{align*}
    Note that the function $F^{[k+1]}\in \mathcal{W}_{\mathcal{F}}$ since $F=f^2\in \mathcal{W}_\mathcal{F}$. Because $\mathcal{F}$ is balanced, $F^{[k+1]}$ is block-rank-one by \autoref{lemma:balance-eq-block}. Thus the function $f^{[k+1]}=\sqrt{F^{[k+1]}/\lambda_{k+1}}$ is also block-rank-one, which implies that the two column vectors of the matrix $M_{[k]}(f^{[k+1]})$, denoted by $\mathbf{v}_0$ and $\mathbf{v}_1$, are orthogonal or linearly dependent:
    \begin{enumerate}[label=(\arabic*),labelindent=\parindent,leftmargin=*]
      \item $\mathbf{v}_0$ and $\mathbf{v}_1$ are orthogonal. Then for all $\mathbf{x}\in\{0,1\}^k$,
        \begin{align*}
          F^{[k]}(\mathbf{x}) &= \lambda_{k+1} \left[\left(f^{[k+1]}(\mathbf{x},0)\right)^2+\left(f^{[k+1]}(\mathbf{x},1)\right)^2\right]\\
            &= \lambda_{k+1} \left(f^{[k+1]}(\mathbf{x},0)+f^{[k+1]}(\mathbf{x},1)\right)^2 \\
            &= \lambda_{k+1} \left(f^{[k]}(\mathbf{x})\right)^2.
        \end{align*}
      \item $\mathbf{v}_0$ and $\mathbf{v}_1$ are linearly dependent. Without loss of generality, we assume that $\mathbf{v}_1=\lambda \mathbf{v}_0$ for some $\lambda\geq 0$. Then for all $\mathbf{x}\in\{0,1\}^k$,
          \begin{align*}
            F^{[k]}(\mathbf{x}) = \lambda_{k+1}(1+\lambda^2) \left(f^{[k+1]}(\mathbf{x},0)\right)^2
                = \lambda_{k+1}\frac{1+\lambda^2}{(1+\lambda)^2} \left(f^{[k]}(\mathbf{x})\right)^2.
          \end{align*}
    \end{enumerate}
    In either case, the conclusion holds. This completes the induction.
\end{proof}

\begin{lemma}
\label{lemma:balance-[1,1]}
    If $\mathcal{F}$ is balanced, then $\mathcal{F}\cup\{[1,1]\}$ is balanced.
\end{lemma}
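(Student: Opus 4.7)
The plan is to reduce the required block-rank-oneness for $\sig(\mathcal{F}\cup\{[1,1]\})$ back to the one we already have for $\mathcal{W}_\mathcal{F}$, by squaring and then taking a partial sum. The first step exploits the fact that attaching a copy of $[1,1]$ to an input of an $\mathcal{F}$-vertex is exactly the operation of summing that input out. Thus, given $g\in\sig(\mathcal{F}\cup\{[1,1]\})$ of arity $n$, there exists $F\in\sig(\mathcal{F})$ of some arity $n+k$ such that, after a suitable input permutation, $g=F^{[n]}$. Corner cases — for instance a $[1,1]$-vertex sitting on a dangling edge of the realizing gate, or two $[1,1]$-vertices joined directly to each other — contribute only harmless constants or trivially independent variables, so they are absorbed into a tensor factor of the form $[1,1]^{\otimes m}$, which is immediately seen to preserve block-rank-oneness of every signature matrix.

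Given this reduction, the main chain of steps is: first invoke \autoref{lemma:function-square-definable} to put $F^2\in\mathcal{W}_\mathcal{F}$, and use the routine closure of $\mathcal{W}_\mathcal{F}$ under further summation (if $H^{[m]}\in\mathcal{W}_\mathcal{F}$ then $(H^{[m]})^{[s]}=H^{[s]}\in\mathcal{W}_\mathcal{F}$ for $s\leq m$) to conclude $(F^2)^{[n]}\in\mathcal{W}_\mathcal{F}$. Next, apply \autoref{lemma:realizable-summing} with $f:=F$ to obtain a positive constant $\lambda_n$ with
\begin{equation*}
    (F^2)^{[n]}=\lambda_n\,(F^{[n]})^2=\lambda_n\,g^2.
\end{equation*}
Then, by balance of $\mathcal{F}$ and \autoref{lemma:definable-balanced}, every signature matrix $M_{[r]}((F^2)^{[n]})$ is block-rank-one; absorbing the positive factor $\lambda_n$, so is every $M_{[r]}(g^2)$.

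To pass from $g^2$ back to $g$, observe that block-rank-oneness is preserved under entrywise square and entrywise square-root on non-negative matrices: two non-negative rows $\mathbf{v}_1,\mathbf{v}_2$ have disjoint supports iff their entrywise squares do, and $\mathbf{v}_1=c\mathbf{v}_2$ with $c\geq 0$ iff $\mathbf{v}_1^{\odot 2}=c^2\mathbf{v}_2^{\odot 2}$. Since $M_{[r]}(g^2)$ is exactly the entrywise square of $M_{[r]}(g)$, every $M_{[r]}(g)$ is block-rank-one; closure of $\sig(\mathcal{F}\cup\{[1,1]\})$ under input permutations (\autoref{lemma:function-permutation}) then extends the conclusion to every input ordering, which is the balance of $\mathcal{F}\cup\{[1,1]\}$. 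The main obstacle I anticipate is the first step — writing $g=F^{[n]}$ with $F\in\sig(\mathcal{F})$ in a uniform way that handles all configurations of $[1,1]$-vertices — while everything afterwards is essentially bookkeeping assembled from the already-established lemmas.
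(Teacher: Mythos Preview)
Your proposal is correct and follows essentially the same route as the paper's own proof: strip the $[1,1]$-vertices from the realizing gate (handling the dangling/paired corner cases as a $2^s\cdot[1,1]^{\otimes m}$ factor), write the remaining core as $f^{[t]}$ for some $f\in\sig(\mathcal{F})$, invoke \autoref{lemma:realizable-summing} to get $(f^2)^{[t]}=\lambda_t\,(f^{[t]})^2$, use \autoref{lemma:definable-balanced} on $(f^2)^{[t]}\in\mathcal{W}_\mathcal{F}$, and then pass back through the entrywise square root. The only cosmetic difference is that the paper names the core function $h=f^{[t]}$ and spells out the vertex-removal procedure, whereas you phrase it as $g=F^{[n]}$ plus a separately handled tensor factor; the logical content is the same.
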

\begin{proof}

    Suppose that $[1,1]\notin\sig(\mathcal{F})$, otherwise we are done. Let $g$ be an $n$-ary function in $\sig(\mathcal{F}\cup\{[1,1]\})$. We need to show that all the matrices in $\{M_{[r]}(g)\myvert 1\leq r\leq \arity(g)\}$ are block-rank-one.

    Let $\Gamma$ denote the gate that realizes $g$. If there is an isolated vertex with a dangling edge in $\Gamma$, assigned the function $[1,1]$, then we remove this vertex; If there are two adjacent vertices, both assigned the function $[1,1]$, then we delete the pair. Repeat removing until no such vertices. Finally we obtain a new gate $\Gamma'$. If $\Gamma'$ has no dangling edges, then we are done. Suppose not. Let $h$ denote the function that $\Gamma'$ realizes. And for all $x_1,...,x_n\in\{0,1\}$, $g(x_1,...,x_n)=2^sh(x_{i_1},...,x_{i_t})$ where $1\leq i_1<\cdots <i_t\leq n$ and $s$ denotes the number of pairs we delete. It suffices to prove that the signature matrices of $h$ are all block-rank-one.

    Note that $h=f^{[t]}$ for some $f\in\sig(\mathcal{F})$ and $1\leq t\leq \arity(f)$. Let $F$ denote the function $f^2$. Then by \autoref{lemma:realizable-summing}, there is a constant $\lambda_t>0$ such that $F^{[t]}=\lambda_t (f^{[t]})^2$. Therefore, for any $r\in [t]$, the two matrices $M_{[r]}(f^{[t]})$ and $M_{[r]}(F^{[t]})$ are both block-rank-one or neither. Since $F^{[t]}\in\mathcal{W}_\mathcal{F}$, all of its signature matrices are block-rank-one by \autoref{lemma:definable-balanced}. Thus every matrix in $\{M_{[r]}(f^{[t]})\myvert 1\leq r \leq t\}$ is block-rank-one.
\end{proof}

Now we make some preparations for the proof of \autoref{lemma:block-implies-affine}. Let $g$ be a function whose signature has the form $[a,0,a,0]$ or $[a,0,...,0,b]\ (a,b>0\text{ and }\arity(g)\geq 3)$. We will show that, if $\mathcal{F}\cup\{g\}$ is balanced, then $\mathcal{F}\subseteq\atype$ or $\mathcal{F}\subseteq\ptype$. In the next subsection, we will see that, with a trivial exception, such a function $g$ is realizable from $\mathcal{F}\cup\{[1,0],[0,1],[1,1]\}$ if $\mathcal{F}$ is balanced.

We use $\mathbf{0}$ to denote a string of $0$'s. Its length will be clear from the context.

\begin{lemma}
\label{lemma:balance-ternary-xor}
    Let $\mathcal{F}$ be a set of non-negative functions and let $g=[1,0,1,0]$. If $\mathcal{F}\cup\{g\}$ is balanced, then $\mathcal{F}\subseteq \atype$.
\end{lemma}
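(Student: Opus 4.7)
The plan is to induct on $n=\arity(f)$ for each $f\in\mathcal{F}$. By \autoref{lemma:balance-unary}, we may assume $\mathcal{F}\cup\{g,[1,0],[0,1],[1,1]\}$ is balanced. Pinning one leg of $g=[1,0,1,0]$ with $[1,0]$ realizes (up to a positive scalar) the binary equality $=_2$, and pinning with $[0,1]$ realizes the binary disequality $\neq_2$; so pinning any input, flipping any input, and enforcing arbitrary XOR-even constraints (by chaining copies of $g$) are all balance-preserving operations available in $\sig(\mathcal{F}\cup\{g,[1,0],[0,1]\})$.

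For the base case $n=1$, write $f=[a,b]$ and form the binary function
\[
h(y_1,y_2)=\sum_{x\in\{0,1\}} f(x)\, g(y_1,y_2,x),
\]
whose $2\times 2$ signature matrix is $\begin{bsmallmatrix}a&b\\ b&a\end{bsmallmatrix}$. Since $h\in\sig(\mathcal{F}\cup\{g\})$, balance forces this matrix to be block-rank-one, so its rows $(a,b)$ and $(b,a)$ are either orthogonal (so $2ab=0$, forcing $ab=0$) or linearly dependent (so $a=b$). In either case $f\in\atype$.

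For the inductive step, assume the conclusion for arity $<n$ and let $f\in\mathcal{F}$ have arity $n\geq 2$. For each coordinate $i$ and each $c\in\{0,1\}$, the restriction $f^{x_i=c}$ lies in $\sig(\mathcal{F}\cup\{g,[1,0],[0,1]\})$ and is balanced with $g$, so the induction hypothesis places it in $\atype$. Moreover, by \autoref{lemma:realizable-representation}, $f$ has a vector representation $(s_1,\dots,s_n)$ with $f(\mathbf{x})=\prod_i s_i(x_i)$ on $\supp(f)$. Balance of $M_{[1]}(f)$ (with row-index $x_i$) forces the two slices $f^{x_i=0}$ and $f^{x_i=1}$, viewed as vectors in $\mathbb{R}^{2^{n-1}}_{\geq 0}$, either to have disjoint supports (orthogonal rows) or to be positive scalar multiples of each other (linearly dependent rows). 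Combined with the inductive hypothesis that each slice is in $\atype$, this pins $\supp(f^{x_i=0})$ and $\supp(f^{x_i=1})$ to be parallel affine subspaces of $\mathbb{Z}_2^{n-1}$ (or one to be empty). Applying the induction hypothesis also to the identifications $f_{i=j}$ and $f_{i\neq j}$ (of arity $n-1$, realized from $=_2$ and $\neq_2$) further enforces that the underlying linear spaces agree across coordinate directions, so $\supp(f)$ becomes a single affine subspace of $\mathbb{Z}_2^n$; the vector representation then collapses the question of constancy of $f$ on $\supp(f)$ to the statement that any coordinate $i$ with $s_i(0)/s_i(1)\neq 1$ must be determined by the others on the support, and the ``determined'' factors combine into a single overall constant value, placing $f$ in $\atype$.

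The hard part will be this global gluing in the inductive case: upgrading the purely local information (that each slice and each pairwise identification of $f$ lies in $\atype$) into the global statement that $\supp(f)$ is a single affine subspace of $\mathbb{Z}_2^n$ on which $f$ is genuinely constant. The argument splits according to which alternative of the block-rank-one condition---disjoint slice supports or proportional slice values---occurs in each coordinate direction, and keeping all these local conditions mutually consistent with the shared vector representation of $f$ is the main combinatorial obstacle.
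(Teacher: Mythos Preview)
Your proposal has a real gap in the inductive step. The tools you list there---pinned slices $f^{x_i=c}$, the block-rank-one condition on $M_{[1]}(f)$, the pairwise identifications $f_{i=j}$ and $f_{i\neq j}$, and the vector representation---are not enough to force $\supp(f)$ to be affine. Concretely, take $n=3$ and let $f$ be the indicator of $\{000,011,101\}$ (constant value $1$ on its support). Then every slice $f^{x_i=c}$ has affine support, the two rows of each $M_{[1]}(f)$ are orthogonal, every identification $f_{i=j}$ and $f_{i\neq j}$ has affine (in fact singleton or two-element) support, and the trivial vector representation $s_i\equiv[1,1]$ works; yet $\supp(f)$ is not affine since $011\oplus 101=110\notin\supp(f)$. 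So your ``global gluing'' step cannot be completed with the listed ingredients alone, and your closing paragraph is right that this is the obstacle---but it is not merely a combinatorial bookkeeping issue, it is a genuine missing idea.

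What the paper does differently is twofold. First, it proves \emph{purity} (range $\{0,\lambda\}$) for \emph{every} function in $\sig(\mathcal{G})$ up front, not just unary ones: from any non-pure $f$ one pins down to $[1,0,\dots,0,\lambda]$ with $\lambda\neq 1$, contracts with $[1,1]$'s to get $[1,\lambda]$, and then feeds $[1,\lambda]$ into one leg of $g$ to produce $\begin{bsmallmatrix}1&\lambda\\ \lambda&1\end{bsmallmatrix}$, which is not block-rank-one. Second, in the induction for affine support the paper does \emph{not} rely on slices and identifications; it uses the gadget
\[
h(x_1,\dots,x_{n-1})=\sum_{y,z} g(x_1,y,z)\,f(y,z,x_2,\dots,x_{n-1}),
\]
which genuinely injects the XOR structure of $g$ into two coordinates of $f$. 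Knowing that $h$, $f^{x_1=0}$, and $f^{x_1=1}$ are all affine (by induction) together with global purity is what forces $\mathbf{a}\oplus\mathbf{b}\in\supp(f)$ whenever $\mathbf{a},\mathbf{b},\mathbf{0}\in\supp(f)$. In the toy example above, $h=\begin{bsmallmatrix}1&0\\0&2\end{bsmallmatrix}$ is not pure, so the contradiction surfaces at the purity step---something your local tests never see. You should either adopt the paper's two-stage structure (global purity first, then the $g$-gadget $h$ in the induction) or explicitly bring a gadget that couples $g$ with two inputs of $f$ into your inductive step.
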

\begin{proof}
    Suppose that $\mathcal{F}\cup\{g\}$ is balanced. Then $\mathcal{G}=\mathcal{F}\cup\{g,[1,0],[0,1],[1,1]\}$ is balanced by \autoref{lemma:balance-unary}. Indeed, we will prove that $\sig(\mathcal{G})\subseteq \atype$. Then the conclusion follows directly since $\mathcal{F}\subseteq\sig(\mathcal{G})$.

    We say a function is \emph{pure}, if it has range $\{0,\lambda\}$ for some $\lambda>0$. First we show that all the functions in $\sig(\mathcal{G})$ are pure. Let $f\in \sig(\mathcal{G})$ be a function of arity $n$ that is not identically zero. Note that $g^{x_1=1}$ is the disequality function $[0,1,0]$, so if $f(\mathbf{0})=0$ then we can flip some inputs of $f$ to obtain a function $f'$ such that $f'(\mathbf{0})\neq 0$. $f'$ is pure if and only if $f$ is pure. Therefore, we can assume $f(\mathbf{0})=1$ and then show that for all $\mathbf{x}\in\supp(f)$, $f(\mathbf{x})=1$.

    For contradiction, suppose that the set $S=\{\mathbf{x}\in\supp(f)\myvert f(\mathbf{x})\neq 1\}$ is nonempty. Let $\mathbf{u}$ be an element of $S$ that has minimum Hamming weight. We define $I=\{k\in [n]\myvert \text{the $k$th bit of }\mathbf{u}\text{ is }0\}$. Then we can obtain the signature $h=[1,0,...,0,\lambda]\ (\lambda=f(\mathbf{u}))$ by pinning the inputs of $f$ indexed by $I$ to $0$. Further, by connecting $\arity(h)-1$ copies of $[1,1]$ to $h$, we get a function $h'=[1,\lambda]$. And connecting $h'$ with an input of $g$ yields a signature matrix $\begin{bmatrix}1 & \lambda \\ \lambda & 1\end{bmatrix}$, which is not block-rank-one.

    Therefore, all the functions in $\sig(\mathcal{F})$ must be pure. It follows that every unary function in $\sig(\mathcal{G})$ is affine. Based on this, we show by induction on function arity $n\ (n\geq 2)$ that $\sig(\mathcal{G})\subseteq \atype$.

    Suppose that all the $(n-1)$-ary functions in $\sig(\mathcal{G})$ are affine. And let $f\in \sig(\mathcal{G})$ be a function of arity $n$. By the induction hypothesis, for all $i\in [n]$, $f^{x_i=0}$ and $f^{x_i=1}$ are both affine. Moreover, the following realizable function of arity $n-1$ is also affine:
    \begin{equation*}
        h(x_1,...,x_{n-1})=\sum_{y,z\in\{0,1\}} g(x_1,y,z)f(y,z,x_2,...,x_{n-1}).
    \end{equation*}
    Suppose that $f$ does not have affine support. Again, we may assume that $f(\mathbf{0})>0$. Then there exist two elements $\mathbf{a}=a_1\cdots a_n,\mathbf{b}=b_1\cdots b_n \in \supp(f)$ such that $\mathbf{a}\oplus\mathbf{b}=\mathbf{c}\notin \supp(f)$. There is some $i\in [n]$ such that $a_i\neq b_i$. Without loss of generality, we assume that $a_1=0$ and $b_1=1$. For convenience, let $\mathbf{a}=a_1a_2a'$, $\mathbf{b}=b_1b_2b'$ and $\mathbf{c}=c_1c_2c'$. Note that
    \begin{align*}
        h(a_1\oplus a_2,a') &= f(a_1,a_2,a') + f(\overline{a_1},\overline{a_2},a')\neq 0, \\
        h(b_1\oplus b_2,b') &= f(b_1,b_2,b') + f(\overline{b_1},\overline{b_2},b')\neq 0.
    \end{align*}
    $h$ is affine and $h(\mathbf{0})\geq f(\mathbf{0})>0$, so $h(c_1\oplus c_2,c')= f(c_1,c_2,c') + f(\overline{c_1},\overline{c_2},c')=f(\overline{c_1},\overline{c_2},c')\neq 0$. Since $a_1=0$ and $b_1=1$, $\overline{c_1}=0$. Thus $a_2a',\overline{c_2}c'\in \supp(f^{x_1=0})$, which is affine. Note that $f(\mathbf{0})>0$, the support of $f^{x_1=0}$ is indeed a linear space. Therefore,  $(a_2a')\oplus(\overline{c_2}c')=\overline{b_2}b'\in\supp(f^{x_1=0})$. This implies that $\overline{b_1b_2}b'\in \supp(f)$. Because $f$ is pure, we have
     \begin{align*}
        h(b_1\oplus b_2,b') &= f(b_1,b_2,b') + f(\overline{b_1},\overline{b_2},b') = 2f(\mathbf{0}), \\
        h(c_1\oplus c_2,c') &= f(\overline{c_1},\overline{c_2},c') = f(\mathbf{0}).
     \end{align*}
    This contradicts the fact that $h$ is affine. Therefore, $f$ has affine support. And we complete the induction.
\end{proof}

Dyer, Goldberg and Jerrum proved a lemma on non-negative functions:

\begin{lemma}[\cite{DGJ2009}]
    Suppose that $f$ does not have affine support. Then $\CSP(\{f\})$ is $\sP$-hard.
\end{lemma}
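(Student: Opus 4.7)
The plan is to derive the hardness from the full \#CSP dichotomy that has already been imported as \autoref{thm:csp-dichotomy}. That dichotomy says $\CSP(\{f\})$ is tractable iff $f\in\atype$ or $f\in\ptype$, and is $\sP$-hard otherwise. So it suffices to show that if $f$ does not have affine support, then $f$ belongs to neither $\atype$ nor $\ptype$.

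For $\atype$: the claim is immediate from the definition. An affine function $f$ is, by definition, supported on an affine subset of $\{0,1\}^n$ (and then assigned values $\lambda\cdot i^{Q(\mathbf{x})}$ on that support). So $f\in\atype$ forces $f$ to have affine support, and we get $f\notin\atype$ by contrapositive.

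For $\ptype$: the main step is to check that any product-type function has affine support. Suppose $f=\prod_k F_k$ is expressible as a product of unaries $[a_k,b_k]$ and binaries of the form $=_2$ or $\neq_2$. Then $\mathbf{x}\in\supp(f)$ iff $F_k(\mathbf{x})\neq 0$ for every factor $k$. Each factor contributes a $\mathbb{Z}_2$-affine condition on the variables: a unary $[a,b]$ with $a=0$ forces $x_i=1$, with $b=0$ forces $x_i=0$, and with $ab\neq 0$ imposes no condition; $=_2(x_p,x_q)$ gives the linear equation $x_p\oplus x_q=0$; $\neq_2(x_p,x_q)$ gives the affine equation $x_p\oplus x_q=1$. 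The support is the intersection of these constraints, which is an affine subset of $\{0,1\}^n$ (possibly empty). This contradicts the hypothesis, so $f\notin\ptype$.

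Combining the two reductions, $f\notin\atype\cup\ptype$, and \autoref{thm:csp-dichotomy} immediately gives that $\CSP(\{f\})$ is $\sP$-hard. The only mildly nontrivial point is the bookkeeping that each product-type factor contributes an affine constraint on the support; everything else is definitional. Note that this route uses the later complex-weighted dichotomy of \cite{CLX2014}, whereas the original proof in \cite{DGJ2009} goes via a direct reduction from \#SAT for non-affine Boolean relations; either is acceptable, but the short route above is cleaner given what the paper has already cited.
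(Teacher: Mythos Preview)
Your argument is correct: if $f$ does not have affine support then $f\notin\atype$ by definition, and $f\notin\ptype$ because every product-type function has affine support (each factor contributes a $\mathbb{Z}_2$-affine constraint, and the intersection of affine sets is affine). Then \autoref{thm:csp-dichotomy} gives the $\sP$-hardness.

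However, note that the paper does not prove this lemma at all; it simply cites \cite{DGJ2009}, and the very next sentence says ``Indeed, the proof of this lemma shows inductively the following lemma,'' namely \autoref{lemma:nonaffine-supp}: from any non-negative $f$ with non-affine support one can realize a \emph{binary} function $h\in\sig(\{f,{=_3},[1,0],[0,1],[1,1]\})$ whose support is not affine. That structural byproduct is what the paper actually needs in the proof of \autoref{lemma:balance-eq}. Your shortcut through the later dichotomy of \cite{CLX2014} cleanly establishes the stated hardness, but it does \emph{not} yield \autoref{lemma:nonaffine-supp}, since it is non-constructive and gives no gadget. So while your route is shorter for the bare statement, the original \cite{DGJ2009} argument is what the paper is relying on; if you replace the citation by your proof, you would still have to supply a separate argument for \autoref{lemma:nonaffine-supp}.
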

Indeed, the proof of this lemma shows inductively the following lemma.
\begin{lemma}
\label{lemma:nonaffine-supp}
    Let $f$ be a non-negative function that does not have affine support. Then there is a binary function $h\in\sig(\{f,=_3,[1,0],[0,1],[1,1]\})$ whose support is not affine.
\end{lemma}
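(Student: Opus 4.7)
The plan is to induct on the arity $n$ of $f$. The base case is $n = 2$: since $f$ itself is a binary function with non-affine support, take $h = f$. The case $n = 1$ cannot arise, as every subset of $\{0,1\}$ is affine. For the inductive step $n \geq 3$, the strategy is to realize some $(n-1)$-ary function $f' \in \sig(\{f, =_3, [1,0], [0,1], [1,1]\})$ whose support is still not affine; the inductive hypothesis applied to $f'$ then yields the desired binary $h$, and transitivity of $\sig$ places $h$ in $\sig(\{f, =_3, [1,0], [0,1], [1,1]\})$.

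Three arity-reducing gadgets are at hand. Pinning coordinate $i$ to $0$ or $1$ is done by attaching $[1,0]$ or $[0,1]$ to input $i$. Marginalizing coordinate $i$ is done by attaching $[1,1]$. Identifying coordinates $i$ and $j$ is done by connecting two legs of $=_3$ to inputs $i$ and $j$ of $f$ and leaving the third leg as a new dangling input, so that positions $i$ and $j$ of $f$ are forced to take a common value.

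Fix a witness $\mathbf{a}, \mathbf{b}, \mathbf{c} \in \supp(f)$ with $\mathbf{d} := \mathbf{a} \oplus \mathbf{b} \oplus \mathbf{c} \notin \supp(f)$. At each coordinate $i \in [n]$ the quadruple $(a_i, b_i, c_i, d_i)$ satisfies $d_i = a_i \oplus b_i \oplus c_i$, so it is either constant (all four entries equal to some $v \in \{0,1\}$) or one of the six balanced patterns with exactly two $0$'s and two $1$'s. If some coordinate $i$ is constant with value $v$, then pinning input $i$ to $v$ gives an $(n-1)$-ary $f'$ whose support contains the projections of $\mathbf{a}, \mathbf{b}, \mathbf{c}$ but not that of $\mathbf{d}$; hence $\supp(f')$ is non-affine and the induction proceeds.

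The main obstacle is the remaining subcase, in which every coordinate of the chosen witness is balanced. One first chooses the witness so as to minimize a suitable quantity, for instance $|\mathbf{a} \oplus \mathbf{b}| + |\mathbf{a} \oplus \mathbf{c}| + |\mathbf{b} \oplus \mathbf{c}|$. Since only six balanced patterns exist, pigeonhole yields coordinates $i, j$ with matching pattern whenever $n \geq 7$; identifying them via $=_3$ then produces $f'$ whose support again omits the projection of $\mathbf{d}$, because minimality of the witness precludes $\mathbf{d} \oplus \mathbf{1}_i \oplus \mathbf{1}_j \in \supp(f)$ (otherwise one could replace the witness by a strictly smaller one). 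For $n \in \{3,4,5,6\}$, where the balanced patterns need not repeat, the proof is completed by a finite case analysis on the admissible configurations of at most six distinct balanced patterns, exhibiting in each case an explicit combination of pinning, marginalization, and identification that reduces the arity without making the support affine. This finite combinatorial verification, already carried out in \cite{DGJ2009}, is the technical heart of the argument; once it is established, iterating the reduction terminates at arity $2$ and yields the required $h$.
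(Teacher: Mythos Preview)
The paper gives no independent proof of this lemma; it simply records that the inductive argument behind the corresponding hardness lemma in \cite{DGJ2009} already yields the statement, and your sketch is a reasonable reconstruction of that same argument (induction on arity, reducing by pinning constant coordinates or identifying coordinates with equal pattern, and deferring the bounded-arity residue to \cite{DGJ2009}).

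One correction worth making: in the identification step you do \emph{not} need the minimality hypothesis, and the justification you give for it is off. When coordinates $i,j$ carry the same balanced pattern and you attach $=_3$ with its third leg left dangling, the resulting $(n{-}1)$-ary function is $f'(\ldots,y)=f(\ldots,y,\ldots,y,\ldots)$. The image $\mathbf d'$ of $\mathbf d$ under this identification has the merged coordinate equal to $d_i=d_j$, so its \emph{unique} preimage is $\mathbf d$ itself; hence $\mathbf d'\notin\supp(f')$ follows directly from $\mathbf d\notin\supp(f)$. The vector $\mathbf d\oplus\mathbf 1_i\oplus\mathbf 1_j$ projects to a \emph{different} point of $\{0,1\}^{n-1}$ (namely the one with merged coordinate $1-d_i$) and plays no role here. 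Moreover, your claimed implication ``$\mathbf d\oplus\mathbf 1_i\oplus\mathbf 1_j\in\supp(f)$ would yield a strictly smaller witness'' is not established by the quantity you chose to minimize; but since the step goes through without it, this does not affect correctness.
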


\begin{lemma}
\label{lemma:balance-eq}
    Let $\mathcal{F}$ be a set of non-negative functions and let $g=[a,0,...,0,b]$ be a general equality function where $\arity(g)\geq 3$ and $a,b>0$. If $\mathcal{F}\cup\{g\}$ is balanced, then $\mathcal{F}\subseteq\atype$ or $\mathcal{F}\subseteq\ptype$.
\end{lemma}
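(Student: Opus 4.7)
My plan is to first use \autoref{lemma:balance-unary} to enlarge the balanced set to $\mathcal{G} = \mathcal{F} \cup \{g,[1,0],[0,1],[1,1]\}$, then to extract the binary signature $[a,0,b]$ in $\sig(\mathcal{G})$ by attaching $[1,1]$ to $k-2$ inputs of $g$, and then to split on whether $a=b$ or $a\neq b$. I expect the case $a=b$ to yield $\mathcal{F}\subseteq\atype$ (via \autoref{lemma:balance-ternary-xor}) and the case $a\neq b$ to yield $\mathcal{F}\subseteq\ptype$.

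When $a=b$, $[a,0,b]$ is proportional to $=_2$, and gluing this with $g\propto{=_k}$ realizes $=_m$ for all $m\geq 2$. Assuming toward contradiction that $\mathcal{F}\not\subseteq\atype\cup\ptype$, pick $f\in\mathcal{F}$ outside both classes. Balance together with \autoref{lemma:nonaffine-supp} forces $\supp(f)$ to be affine: a non-affine $\supp(f)$ would yield a binary $h\in\sig(\mathcal{G})$ with three strictly positive entries in its $2\times 2$ matrix, whose rows are neither orthogonal nor proportional, violating block-rank-one. Because $f\notin\ptype$, its affine support is not product-type, so after pinning all but three of its inputs via $[1,0]$ and $[0,1]$ I can produce a ternary restriction $f'$ whose support is an XOR-coset $\{x_{i_1}\oplus x_{i_2}\oplus x_{i_3}=c\}$. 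Using the vector representation of $f'$ (\autoref{lemma:realizable-representation}) together with unaries realizable from $f$ itself (by pinning, and flipping through the disequality $\neq_2$ that becomes available once $f$ is non-pure), I would cancel the three vector-representation weights by gluing them through $=_2$, leaving the bare indicator $[1,0,1,0]$. Then \autoref{lemma:balance-ternary-xor} applied to the still-balanced $\mathcal{F}\cup\{[1,0,1,0]\}$ forces $\mathcal{F}\subseteq\atype$, contradicting the choice of $f$.

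When $a\neq b$, $[a,0,b]$ is a nontrivial weighted equality and $=_2$ is no longer automatic in $\sig(\mathcal{G})$. The approach is still to realize $[1,0,1,0]$ whenever an $f\in\mathcal{F}\setminus\ptype$ exists, using iterates $[a^m,0,b^m]$ from edge-chained copies of $[a,0,b]$ together with the unaries $[a^m,b^m]$ (obtained by attaching $[1,1]$) to supply a multiplicative family of weights with which to cancel the vector-representation ratios of a ternary XOR-restriction of $f$; when the cancellation works, \autoref{lemma:balance-ternary-xor} again gives $\mathcal{F}\subseteq\atype$, and when no $f$ outside $\ptype$ exists, the conclusion $\mathcal{F}\subseteq\ptype$ is immediate. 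The main obstacle is precisely this cancellation step: unlike polynomial-time reducibility, $\sig$-realizability forbids interpolation, so every inverse-unary used to strip the vector-representation weights must be produced by an explicit gate. In the case $a\neq b$ only the discrete family $\{[1,(b/a)^m]\}$ is directly available, and the residual combinatorial argument — showing that if the needed ratios lie outside the semigroup generated by $b/a$ then balance plus the rigid weight structure of $g$ already forces $f$ to factor as unaries with pairwise $=_2$ and $\neq_2$, hence into $\ptype$ — is the delicate technical heart of the proof.
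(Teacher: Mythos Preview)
Your plan has two genuine gaps. First, a logical one: in the $a=b$ case you assume for contradiction that $\mathcal{F}\not\subseteq\atype\cup\ptype$ and pick a single $f$ lying outside both classes. But the negation of ``$\mathcal{F}\subseteq\atype$ or $\mathcal{F}\subseteq\ptype$'' is ``$\mathcal{F}\not\subseteq\atype$ \emph{and} $\mathcal{F}\not\subseteq\ptype$'', which only supplies possibly different witnesses $f_1\notin\atype$ and $f_2\notin\ptype$; the case where every function of $\mathcal{F}$ lies in $\atype\cup\ptype$ while the set straddles both classes is not addressed. Second, the technical core of your argument---realizing a scalar multiple of $[1,0,1,0]$ inside $\sig(\mathcal{G})$---is never carried out. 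In the $a=b$ case, ``cancelling the three vector-representation weights'' would require producing exact inverse unaries, which $\sig$ (with no interpolation) does not provide in general; in the $a\neq b$ case you explicitly acknowledge that only the discrete family $\{[1,(b/a)^m]\}$ is available and label the remaining step ``the delicate technical heart'', i.e.\ it is missing.

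The paper's proof sidesteps both issues: it neither splits on $a$ versus $b$ nor aims for $[1,0,1,0]$. It first shows uniformly that every $f\in\mathcal{F}$ has affine support, by attaching $[1,1]$'s to $g$ to obtain $g'=[a,0,0,b]$ and substituting $g'$ for each ${=_3}$ in the gadget of \autoref{lemma:nonaffine-supp}; this preserves supports, so a non-affine support would yield a $2\times 2$ matrix with exactly one zero entry, violating block-rank-one. Next, assuming $\mathcal{F}\not\subseteq\atype$, some function is non-pure, and (as in the proof of \autoref{lemma:balance-ternary-xor}) pinning plus $[1,1]$ yields a unary $[x,y]$ with $x,y>0$ and $x\neq y$. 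Finally, assuming some $f\notin\ptype$ (now with affine support), \autoref{lemma:non-ptype} produces either a ternary $h$ with XOR support or a full-support binary $(a,b,c,d)\notin\ptype$. In the first case, contracting $[x,y]$ and $[1,1]$ into one input of $h$ gives two $2\times 2$ matrices that cannot both be block-rank-one unless $x=y$; in the second, the matrix already fails block-rank-one. No exact weight cancellation and no appeal to \autoref{lemma:balance-ternary-xor} are needed.
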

\begin{proof}
    Since $\mathcal{F}\cup\{g\}$ is balanced, so is $\mathcal{F}\cup\{g,[1,0],[0,1],[1,1]\}$ by \autoref{lemma:balance-unary}.

    First we show that every function in $\mathcal{F}$ has affine support. For contradiction, suppose that a function $f\in \mathcal{F}$ does not have affine support.  By \autoref{lemma:nonaffine-supp}, there is a binary function $h\in\sig(\{f,=_3,[1,0],[0,1],[1,1]\})$ such that the support of $h$ is not affine. Now using $g$ and $[1,1]$, we can realize a ternary function $g'=[a,0,0,b]$. Let $\Gamma$ denote the gate that realizes $h$. We replace each equality function $=_3$ in $\Gamma$ by $g'$. This substitution produces a binary function $h'\in\sig(\{f,g,[1,0],[0,1],[1,1]\})$ whose support is the same as that of $h$ and hence is not affine. Written as a $2\times 2$ matrix, $h'$ has exactly one zero entry, not block-rank-one. This contradicts the fact that the set $\mathcal{F}\cup\{g,[1,0],[0,1],[1,1]\}$ is balanced.

    Now suppose that $\mathcal{F}\not\subseteq\atype$. Then $\mathcal{F}$ can not be pure and hence, as in the proof of \autoref{lemma:balance-ternary-xor}, we can realize an unary function $[x,y]$ with $x,y>0$ and $x\neq y$. We show that $\mathcal{F}\subseteq\ptype$. For contradiction, suppose that there exists a function $f\in\mathcal{F}\backslash\ptype$. Then by \autoref{lemma:non-ptype}, there are two cases:
    \begin{enumerate}[label=\textbullet]
      \item There is a ternary function $h(x_1,x_2,x_3)\in\sig(\{f,[1,0],[0,1],[1,1]\})$ whose support is determined by a linear equation $x_1\oplus x_2\oplus x_3=c$ for some $c\in\{0,1\}$. We only consider the case $c=0$, the other case is similar. With $[x,y]$ and $[1,1]$ at hand, we can realize two binary functions:
          \begin{align*}
            &h_1 =[x,y]M_{[1]}(h)=(xh_{000},yh_{101},yh_{110},xh_{011}), \\
            &h_2 =[1,1]M_{[1]}(h)=(h_{000},h_{101},h_{110},h_{011}).
          \end{align*}
          At least one of them is not block-rank-one.
      \item There is a binary function $h=(a,b,c,d)\in \sig(\{f,[1,0],[0,1],[1,1]\})$ where $abcd\neq 0$ and $h\notin\ptype$. $h$ is not block-rank-one.
    \end{enumerate}
    In either case, we can realize a binary function that is not block-rank-one. This is impossible because $\mathcal{F}\cup\{g,[1,0],[0,1],[1,1]\}$ is balanced.
\end{proof}

\subsection{Proof of \autoref{lemma:block-implies-affine}}
\label{subsec:block-implies-affine}
Now we are ready to prove \autoref{lemma:block-implies-affine}.

Let $\mathcal{F}$ be a set of non-negative functions that satisfies the Block-rank-one condition. \autoref{lemma:balance-eq-block} shows that $\mathcal{F}$ is balanced. And by \autoref{lemma:balance-unary}, the set
\begin{equation*}
    \mathcal{G}=\mathcal{F}\cup\{[1,0],[0,1],[1,1]\}
\end{equation*}
is also balanced. So it suffices to prove that $\mathcal{G}\subseteq\atype$ or $\mathcal{G}\subseteq \ptype$.

First we consider the case $\mathcal{G}\subseteq \ttype$. In this case, every nondegenerate binary function in $\sig(\mathcal{G})$ has the form $[a,0,b]$ or $(0,a,b,0)$. Thus all of them are of product type. Since the set $\ptype$ is closed under tensor product, $\mathcal{G}\subseteq \ptype$.

Now suppose that $\mathcal{G}\not\subseteq \ttype$. Then there are a function $F\in\mathcal{G}$ and a permutation $\pi$ such that $F_\pi=F_1\otimes F_2$ where $F_1,F_2$ are both non-negative functions ($F_2$ is absent if $F$ is irreducible) and $F_1$ is irreducible with arity $n\geq 3$. Since $\{[1,0],[0,1]\}\subset \mathcal{G}$ and $F_2$ is not identically zero, by pinning we can realize an irreducible function $f=cF_1$ for some $c>0$.

For $1\leq i<j\leq n$ and $a,b\in\{0,1\}$, we use $f_{ij}^{ab}$ denote the column vector $M_{[n-2]}(f^{x_i=a,x_j=b})$. And we define the $2^{n-2}\times 2^2$ matrices $M_{ij}=(f_{ij}^{00},f_{ij}^{01},f_{ij}^{10},f_{ij}^{11})$. Note that $f$ can not satisfy the Adjacency condition, otherwise some $M_i(f)$ (see \autoref{sec:p-transformability}) is not block-rank-one. So we have
\begin{align*}
    \innerprod{f_{ij}^{00},f_{ij}^{01}}=0,\ \innerprod{f_{ij}^{00},f_{ij}^{10}}=0, \\
    \innerprod{f_{ij}^{11},f_{ij}^{01}}=0,\ \innerprod{f_{ij}^{11},f_{ij}^{10}}=0,
\end{align*}
where $\innerprod{\cdot,\cdot}$ denotes the inner product. Therefore, for every pair $(i,j)$, the $4\times 4$ matrix $B_{ij}=(M_{ij})^{\mathsf{T}}M_{ij}$ has the form
\begin{equation*}
    \begin{bmatrix}
      a & 0 & 0 & b \\
      0 & x & y & 0 \\
      0 & y & z & 0 \\
      b & 0 & 0 & c
    \end{bmatrix}
\end{equation*}
where $B_{ij}(\mathbf{u},\mathbf{v})=\innerprod{f_{ij}^{\mathbf{u}},f_{ij}^{\mathbf{v}}}$ for all $\mathbf{u},\mathbf{v}\in\{0,1\}^2$. By Cauchy-Schwarz inequality, $ac\geq b^2$ and $xz\geq y^2$.

\begin{lemma}
    If $B_{ij}$ is diagonal for all $1\leq i<j\leq n$, then $\mathcal{G}\subseteq\atype$ or $\mathcal{G}\subseteq\ptype$.
\end{lemma}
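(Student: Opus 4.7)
The plan is to exploit the diagonality of $B_{ij}$ for every pair $i<j$ to pin $f$ down to a function with a two-element complementary support, realize from it a standard-form generalized equality $[a,0,\ldots,0,b]$ of arity at least $3$, and finally invoke \autoref{lemma:balance-eq}. Unpacking the hypothesis, the vanishing of the off-diagonal entries of $B_{ij}$ says exactly that $f_{ij}^{\mathbf{u}}$ and $f_{ij}^{\mathbf{v}}$ have disjoint supports for every pair of distinct $\mathbf{u},\mathbf{v}\in\{0,1\}^2$; combined with the already noted $b_i=0$ from non-Adjacency, this means that any two distinct elements of $S:=\supp(f)$ must differ in at least three positions. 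Since $f$ is irreducible of arity $n\ge 3$ it is non-degenerate, so $|S|\ge 2$.

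Next, I would pick $\mathbf{a}\in S$ together with a minimum-distance neighbour $\mathbf{b}\in S$, set $d:=d(\mathbf{a},\mathbf{b})\ge 3$, and let $A:=\{k\in[n]:a_k=b_k\}$. Pinning $x_k$ to $a_k$ for each $k\in A$ via the unary signatures $[1,0]$ and $[0,1]$ in $\mathcal{G}$ yields a function $g\in\sig(\mathcal{G})$ of arity $d$, whose support I claim is exactly $\{\mathbf{a}|_{A^c},\mathbf{b}|_{A^c}\}$: both elements clearly lie in $\supp(g)$, and any other $\mathbf{c}\in S$ agreeing with $\mathbf{a}$ on $A$ would satisfy $d(\mathbf{a},\mathbf{c})\le|A^c|=d$, hence $=d$ by minimality of $d$, forcing $\mathbf{c}|_{A^c}=\overline{\mathbf{a}|_{A^c}}=\mathbf{b}|_{A^c}$ and so $\mathbf{c}=\mathbf{b}$. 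Therefore $g=\alpha\delta_{\mathbf{u}}+\beta\delta_{\overline{\mathbf{u}}}$ with $\mathbf{u}:=\mathbf{a}|_{A^c}$ and $\alpha,\beta>0$. When $\mathbf{u}\in\{0^d,1^d\}$ this $g$ is already of the form $[a,0,\ldots,0,b]$ demanded by \autoref{lemma:balance-eq}, and the proof is complete after noting that $[1,0],[0,1],[1,1]\in\atype\cap\ptype$.

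In the remaining cases the plan is to build a flip-cancelling gadget that converts $g$ into standard form. Summing $g$ (or $f$ itself) over one coordinate using $[1,1]$ realizes a binary auxiliary of signature $[\alpha,0,\beta]$ or $[0,\alpha,\beta,0]$ according to whether the summed coordinate of $\mathbf{u}$ is $0$ or $1$; connecting such a binary back into a coordinate of $g$ or $f$ then yields a ternary signature $\alpha\beta\cdot[1,0,0,1]$, i.e.\ a scalar multiple of $=_3$, which is in standard form. For $d=3$ the cases $\mathbf{u}\in\{001,010\}$ and their coordinate-permutation variants cover everything; for $d\ge 5$ a cleaner alternative is to sum $d-3$ coordinates of $g$ at positions on which $\mathbf{u}$ is constant (such a $3$-subset exists by pigeonhole), which produces the standard form directly; the intermediate case $d=4$ with $\mathbf{u}$ of weight $2$ is handled by the same flip-cancelling gadget. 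Applying \autoref{lemma:balance-eq} to the resulting standard-form function gives $\mathcal{F}\subseteq\atype$ or $\mathcal{F}\subseteq\ptype$, which upgrades at once to $\mathcal{G}$.

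The main obstacle is the third step: certifying uniformly, for every $\mathbf{u}$, that the flip-cancelling gadget produces a genuine $[a,0,\ldots,0,b]$ of arity $\ge 3$. The bookkeeping of which coordinate to sum out and which to connect depends on $\mathbf{u}$ and amounts to, for each coordinate where $u_k=1$, absorbing the flip into a disequality-like binary auxiliary. Once the ternary standard-form signature is in hand, the rest is a clean reduction to \autoref{lemma:balance-eq}.
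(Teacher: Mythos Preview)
Your strategy coincides with the paper's through the first two steps: the diagonality of every $B_{ij}$ forces any two distinct support elements of $f$ to have Hamming distance at least $3$, and pinning the coordinates where a minimum-distance pair $\mathbf{a},\mathbf{b}$ agree yields $g\in\sig(\mathcal{G})$ of arity $d\ge 3$ with support $\{\mathbf{u},\overline{\mathbf{u}}\}$.

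Where you diverge is in the third step, converting $g$ to a standard generalized equality $[a,0,\ldots,0,b]$, and here your case analysis is both more laborious and partly wrong. In the $d=3$ case your claim that summing out a coordinate gives $[\alpha,0,\beta]$ versus $(0,\alpha,\beta,0)$ ``according to whether the summed coordinate of $\mathbf{u}$ is $0$ or $1$'' is incorrect: what matters is whether the \emph{remaining} two bits of $\mathbf{u}$ agree. Connecting $[\alpha,0,\beta]$ back into $g$ does \emph{not} produce a scalar multiple of $=_3$; only the disequality-type auxiliary, applied to the right coordinate, flips the support to $\{000,111\}$. And for $d=4$ with $\mathbf{u}$ of weight $2$ you assert the ``same flip-cancelling gadget'' suffices, but summing one coordinate no longer gives a binary, and a single flip does not reach standard form; you would need two flips, which you have not argued.

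The paper sidesteps all of this with one uniform move. After permuting inputs so that $\mathbf{u}=0^s1^t$, it forms the $2^s\times 2^t$ matrix $M=M_{[s]}(g)$ and realizes both $MM^{\mathsf T}$ (arity $2s$) and $M^{\mathsf T}M$ (arity $2t$). Each has support $\{0\cdots 0,\,1\cdots 1\}$ with positive values, hence is already of the form $[a,0,\ldots,0,b]$; since $s+t=d\ge 3$, one of $2s,2t$ is at least $4\ge 3$, and \autoref{lemma:balance-eq} applies. This single gadget replaces your entire case split and avoids the bookkeeping you flagged as the main obstacle.
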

\begin{proof}
    Suppose that $B_{ij}$ is diagonal for all $i<j$. Then for any two different elements $\mathbf{x},\mathbf{y}\in\supp(f)$, their bitwise XOR $\mathbf{x}\oplus\mathbf{y}$ is of Hamming weight $\geq 3$. Let $\mathbf{u}=u_1\cdots u_n$ and $\mathbf{v}=v_1\cdots v_n$ be two different elements such that $\mathbf{u}\oplus\mathbf{v}$ has minimum Hamming weight $m$:
    \begin{equation*}
        m=\min
        \{w(\mathbf{a}\oplus \mathbf{b})\myvert \mathbf{a},\mathbf{b}\in \supp(f)\text{ and } \mathbf{a}\neq\mathbf{b}\}.
    \end{equation*}
    We define two sets: $S_c=\{k\in [n]\myvert u_k=v_k=c\}$ for $c\in\{0,1\}$. Then the function
    \begin{equation*}
        g=\partial_{[1,0]}^{S_0}(\partial_{[0,1]}^{S_1}(f))
    \end{equation*}
    has arity $m\geq 3$ and its support is $\{\mathbf{w},\overline{\mathbf{w}}\}$ for some $\mathbf{w}\in\{0,1\}^m$. Since we can permute the inputs of $g$, it is reasonable to assume that $\mathbf{w}=0^s1^t\ (s+t=m\geq 3)$. Then we can realize two functions $M_{[s]}(g)(M_{[s]}(g))^{\mathsf T}$ and $(M_{[s]}(g))^{\mathsf T}M_{[s]}(g)$, of arity $2s$ and $2t$, respectively. The signatures of the two functions are both of the form $[a,0,...,0,b]$ with $a,b>0$ and at least one of them, say $h$, has arity $\geq 3$. Since $h\in\sig(\mathcal{G})$, $\mathcal{G}\cup\{h\}$ is balanced. So by \autoref{lemma:balance-eq}, $\mathcal{G}\subseteq\atype$ or $\mathcal{G}\subseteq \ptype$.
\end{proof}

Now suppose that some $B_{ij}=\begin{bsmallmatrix}a & 0 & 0 & b \\ 0 & x & y & 0 \\ 0 & y & z & 0 \\ b & 0 & 0 & c\end{bsmallmatrix}$ is not diagonal. Then $a,b,c>0$ or $x,y,z>0$. Let $C_1=\begin{bmatrix}a & b \\ b & c\end{bmatrix}$ and $C_2=\begin{bmatrix}x & y \\ y & z\end{bmatrix}$ be the two submatrices. By the definition of $B_{ij}$, any one of $\{C_1,C_2\}$ can not be the zero matrix, otherwise the other is not block-rank-one since $f$ is irreducible. Moreover, with the unary function $[1,1]$ at hand, we can realize two matrices:
\begin{align*}
  D_1=\partial_{[1,1]}^{\{1\}}(B_{ij})=\begin{bmatrix}
                                         a & y & z & b \\
                                         b & x & y & c
                                       \end{bmatrix},\\
  D_2=\partial_{[1,1]}^{\{2\}}(B_{ij})=\begin{bmatrix}
                                         a & x & y & b \\
                                         b & y & z & c
                                       \end{bmatrix}.
\end{align*}
Again, both $D_1$ and $D_2$ must be block-rank-one. This implies that $a=b=c=x=y=z>0$. By pinning an input of $B_{ij}$ to $0$, we get the function $g=a[1,0,1,0]$. Thus $\mathcal{G}\cup\{g\}$ is balanced, which implies that $\mathcal{G}\subseteq\atype$ by \autoref{lemma:balance-ternary-xor}. This completes the proof.

\section{Back to \#CSP with Non-negative Weights}
\label{sec:back-to-csp}

In \autoref{subsec:balance}, we show the equivalence between the Block-rank-one condition and balance. In fact, using the same method, we can prove that the notions of weak balance and balance in \cite{CCL2011} are equivalent, without assuming $\mathrm{FP}\neq \sP$.

Let $D=\{1,2,...,d\}\ (d>1)$ be a finite domain. For completeness, here we give the definitions of the two notions.

\begin{mydef}[Weak Balance, \cite{CCL2011}]
    We say $\mathcal{F}$ is \emph{weakly balanced} if for any input instance $I$ of $\CSP(\mathcal{F})$ (which defines a non-negative function $F(x_1,...,x_n)$ over $D$) and for any integer $a:1\leq a<n$, the following $d^a\times d$ matrix $M$ is \emph{block-rank-one}: the rows of $M$ are indexed by $\mathbf{u}\in D^a$ and the columns are indexed by $v\in D$, and
    \begin{equation*}
        M(\mathbf{u},v)=\sum_{\mathbf{w}\in D^{n-a-1}} F(\mathbf{u},v,\mathbf{w}),\ \text{for all }\mathbf{u}\in D^a\text{ and }v\in D.
    \end{equation*}
    For the special case when $a+1=n$, we have $M(\mathbf{u},v)=F(\mathbf{u},v)$ is block-rank-one.
\end{mydef}

\begin{mydef}[Balance, \cite{CCL2011}]
    We say $\mathcal{F}$ is \emph{balanced} if for any input instance $I$ of $\CSP(\mathcal{F})$ (which defines a non-negative function $F(x_1,...,x_n)$ over $D$) and for any integers $a,b:1\leq a<b\leq n$, the following $d^a\times d^{b-a}$ matrix $M$ is \emph{block-rank-one}: the rows of $M$ are indexed by $\mathbf{u}\in D^a$ and the columns are indexed by $\mathbf{v}\in D^{b-a}$, and
    \begin{equation*}
        M(\mathbf{u},\mathbf{v})=\sum_{\mathbf{w}\in D^{n-b}} F(\mathbf{u},\mathbf{v},\mathbf{w}),\ \text{for all }\mathbf{u}\in D^a\text{ and }\mathbf{v}\in D^{b-a}.
    \end{equation*}
    For the special case when $b=n$, we have $M(\mathbf{u},\mathbf{v})=F(\mathbf{u},\mathbf{v})$ is block-rank-one.
\end{mydef}

According to the definition of weak balance, \autoref{lemma:vector-representation} has a direct corollary.
\begin{corollary}
\label{coro:vector}
    Let $\mathcal{F}$ be a function set that is weakly balanced. Then for any function $F(x_1,...,x_n)$ defined by an instance of $\CSP(\mathcal{F})$ and any integer $t\in [n]$, $F^{[t]}$ has a vector representation.
\end{corollary}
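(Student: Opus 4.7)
The plan is to apply Lemma~\ref{lemma:vector-representation} directly to $F^{[t]}$, regarded as a non-negative function of arity $t$. According to that lemma, it suffices to verify that $(F^{[t]})^{[s]}$ is block-rank-one for every $s \in [t]$, and then the vector representation of $F^{[t]}$ is handed to us.

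The pivotal observation I would use is the identity $(F^{[t]})^{[s]} = F^{[s]}$ for $s \le t$, which is immediate because iterated marginalization equals marginalizing all suppressed coordinates in one shot:
\begin{equation*}
(F^{[t]})^{[s]}(x_1,\ldots,x_s) \;=\; \sum_{x_{s+1},\ldots,x_t \in D}\ \sum_{x_{t+1},\ldots,x_n \in D} F(x_1,\ldots,x_n) \;=\; F^{[s]}(x_1,\ldots,x_s).
\end{equation*}
So the task reduces to showing $F^{[s]}$ is block-rank-one for each $s \in [t]$.

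For $s = 1$ this is automatic, since $F^{[1]}$ has arity $1$ and is block-rank-one by convention. For $s \ge 2$, being block-rank-one of $F^{[s]}$ means, by definition, that the $d^{s-1} \times d$ matrix $M_{[s-1]}(F^{[s]})$ has pairwise orthogonal or linearly dependent rows; and this matrix is precisely the one appearing in the definition of weak balance with $a = s-1$, since its $(\mathbf{u},v)$-entry equals
\begin{equation*}
F^{[s]}(\mathbf{u},v) \;=\; \sum_{\mathbf{w}\in D^{n-s}} F(\mathbf{u},v,\mathbf{w}).
\end{equation*}
Because $F$ is defined by an instance of $\CSP(\mathcal{F})$ and $\mathcal{F}$ is weakly balanced, this matrix is block-rank-one. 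Assembling these pieces hands us the hypothesis of Lemma~\ref{lemma:vector-representation} applied to $F^{[t]}$, yielding the desired vector representation.

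There is essentially no obstacle: the corollary is meant as a near-immediate rewriting. The only substantive move is noticing that the hypothesis of Lemma~\ref{lemma:vector-representation} (block-rank-one-ness of all marginals of a fixed function) is precisely what weak balance asserts when applied to the instance that defines $F$, after the trivial identity $(F^{[t]})^{[s]}=F^{[s]}$ lets us identify marginals of $F^{[t]}$ with marginals of $F$.
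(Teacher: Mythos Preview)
Your argument is correct and is exactly the unpacking the paper intends: it states only that the corollary follows directly from Lemma~\ref{lemma:vector-representation} and the definition of weak balance, and your identity $(F^{[t]})^{[s]}=F^{[s]}$ together with the identification of $M_{[s-1]}(F^{[s]})$ with the weak-balance matrix at $a=s-1$ is precisely that direct route.
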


By definition, balance implies weak balance. Now we show the other direction.
\begin{lemma}
    Suppose that a function set $\mathcal{F}$ is weakly balanced. Then $\mathcal{F}$ is balanced.
\end{lemma}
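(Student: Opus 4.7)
The plan is to mirror the proof of \autoref{lemma:balance-eq-block} in the \#CSP setting: build a single auxiliary instance whose defined function equals the Gram matrix of the matrix we wish to show is block-rank-one, then invoke \autoref{coro:vector} to force that Gram matrix into a rank-one product form, and derive a contradiction from strict Cauchy--Schwarz against the assumption of linearly independent rows.

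Fix an instance $I$ of $\CSP(\mathcal{F})$ defining $F(x_1,\ldots,x_n)$ and integers $1\le a<b\le n$. Let $M$ be the $d^a\times d^{b-a}$ matrix with entries $M(\mathbf{u},\mathbf{v})=\sum_{\mathbf{w}\in D^{n-b}}F(\mathbf{u},\mathbf{v},\mathbf{w})$. First I would construct the auxiliary instance $I'$ consisting of two disjoint copies of $I$ with only the variables at positions $a+1,\ldots,b$ identified between them. The defined function factorizes as $F'(\mathbf{x},\mathbf{y},\mathbf{z},\mathbf{w}^{(1)},\mathbf{w}^{(2)})=F(\mathbf{x},\mathbf{z},\mathbf{w}^{(1)})\,F(\mathbf{y},\mathbf{z},\mathbf{w}^{(2)})$, and after relabeling so that $(\mathbf{x},\mathbf{y})$ occupy the first $2a$ positions,
\[
F'^{[2a]}(\mathbf{x},\mathbf{y}) \;=\; \sum_{\mathbf{z}}F^{[b]}(\mathbf{x},\mathbf{z})\,F^{[b]}(\mathbf{y},\mathbf{z}) \;=\; (MM^{\mathsf T})(\mathbf{x},\mathbf{y}).
\]
Since $I'$ is a legitimate $\CSP(\mathcal{F})$ instance, \autoref{coro:vector} will supply a vector representation $(s_1,\ldots,s_a,t_1,\ldots,t_a)$ of $F'^{[2a]}$; thus every entry of $MM^{\mathsf T}$ is either zero or equals $S(\mathbf{x})T(\mathbf{y})$, where $S(\mathbf{x})=\prod_i s_i(x_i)$ and $T(\mathbf{y})=\prod_j t_j(y_j)$.

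The final step is a contradiction argument. Suppose $M$ is not block-rank-one; then there exist rows $r_\mathbf{u}, r_{\mathbf{u}'}$ that are linearly independent and not orthogonal. Non-negativity and non-orthogonality give $\langle r_\mathbf{u},r_{\mathbf{u}'}\rangle>0$, so all four entries of the $2\times 2$ submatrix of $MM^{\mathsf T}$ indexed by $\{\mathbf{u},\mathbf{u}'\}$ are strictly positive; by the vector representation, each factors as $S(\cdot)T(\cdot)$, and the submatrix equals the outer product $(S(\mathbf{u}),S(\mathbf{u}'))^{\mathsf T}(T(\mathbf{u}),T(\mathbf{u}'))$ and has rank one. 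However, strict Cauchy--Schwarz applied to the linearly independent pair $r_\mathbf{u},r_{\mathbf{u}'}$ yields determinant $\|r_\mathbf{u}\|^2\|r_{\mathbf{u}'}\|^2-\langle r_\mathbf{u},r_{\mathbf{u}'}\rangle^2>0$, forcing rank two. The contradiction shows $M$ is block-rank-one. I expect no serious obstacle here since the proof is essentially the translation of \autoref{lemma:balance-eq-block} from Holant to \#CSP; the only delicate point is confirming that the two-copy construction with identified middle variables, together with the relabeling needed to invoke \autoref{coro:vector}, are permitted operations on $\CSP(\mathcal{F})$ instances, which is immediate from the definition.
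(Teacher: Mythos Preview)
Your proposal is correct and follows essentially the same approach as the paper: both build the auxiliary instance consisting of two copies of $I$ sharing the middle block of variables, identify $F'^{[2a]}$ with $MM^{\mathsf T}$, invoke \autoref{coro:vector} to obtain a vector representation, and derive a contradiction from the rank-one versus strict Cauchy--Schwarz tension on the $2\times 2$ submatrix indexed by the offending pair of rows. The only cosmetic difference is the order of presentation (you construct $I'$ before assuming $M$ fails block-rank-one, whereas the paper assumes failure first), which is immaterial.
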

\begin{proof}
    For contradiction, we assume that $\mathcal{F}$ is not balanced. Then by the definition of balance, there exists an $n$-ary function $F$ defined by some instance $I$ of $\CSP(\mathcal{F})$ and two integers $a,b:1\leq a<b\leq n$, such that the $d^{a}\times d^{b-a}$ matrix $M$ of $F^{[b]}$ is not block-rank-one. That is, there exist two rows of $M$, indexed by $\mathbf{u}_1,\mathbf{u}_2\in D^{a}$, which are linearly independent but not orthogonal.

    Now consider the matrix $G=MM^{\mathsf T}$. For all $\mathbf{x},\mathbf{y}\in D^a$,
    \begin{equation*}
        G(\mathbf{x},\mathbf{y})=\sum_{\mathbf{z}\in D^{b-a}} F^{[b]}(\mathbf{x},\mathbf{z}) F^{[b]}(\mathbf{y},\mathbf{z}).
    \end{equation*}
    $G$ has a $2\times 2$ submatrix
    \begin{equation*}
        g=\begin{bmatrix}
            G(\mathbf{u}_1,\mathbf{u}_1) & G(\mathbf{u}_1,\mathbf{u}_2) \\
            G(\mathbf{u}_2,\mathbf{u}_1) & G(\mathbf{u}_2,\mathbf{u}_2)
        \end{bmatrix}=
        \begin{bmatrix}
          p & q \\
          q & r
        \end{bmatrix}
    \end{equation*}
    where $p,q,r>0$ and $pr>q^2$.

    Suppose that $I$ has variables: $\mathbf{x}=(x_1,...,x_a),\mathbf{z}=(z_{a+1},...,z_{b}),\mathbf{w}=(w_{b+1},...,w_n)$. We add a copy of $I$ on variables: $\mathbf{y}=(y_1,...,y_a),\mathbf{z}=(z_{a+1},...,z_{b}),\mathbf{w}'=(w_{b+1}',...,w_n')$. Then the two instances constitute a new instance, which defines the function
    \begin{equation*}
        H(\mathbf{x},\mathbf{y},\mathbf{z},\mathbf{w},\mathbf{w}')=F(\mathbf{x},\mathbf{z},\mathbf{w})F(\mathbf{y},\mathbf{z},\mathbf{w}').
    \end{equation*}
    It is easy to see that $H^{[2a]}(\mathbf{x},\mathbf{y})=G(\mathbf{x},\mathbf{y})$ for all $\mathbf{x},\mathbf{y}\in D^a$. By \autoref{coro:vector}, $H^{[2a]}$ has a vector representation $(s_1,...,s_{2a})$ such that for all $\mathbf{x}\in  D^{[2a]}$, either $H^{[2a]}(\mathbf{x})=0$ or $H^{[2a]}(\mathbf{x})=s_1(x_1)\cdots s_{2a}(x_{2a})$. Let $s=s_1\otimes \cdots \otimes s_a$ and $t=s_{a+1}\otimes \cdots \otimes s_{2a}$. Then
    \begin{equation*}
        g=\begin{bmatrix}
            s(\mathbf{u}_1)t(\mathbf{u}_1) & s(\mathbf{u}_1)t(\mathbf{u}_2) \\
            s(\mathbf{u}_2)t(\mathbf{u}_1) & s(\mathbf{u}_2)t(\mathbf{u}_2)
        \end{bmatrix},
    \end{equation*}
    which is singular. A contradiction.
\end{proof}

Cai, Chen and Lu \cite{CCL2011} gave a criterion for \#CSP with non-negative weights:
\begin{lemma}
    Problem $\CSP(\mathcal{F})$ is in polynomial time if $\Gamma_\mathcal{F}$ is strongly rectangular and $\mathcal{F}$ is weakly balanced; otherwise it is $\sP$-hard.
\end{lemma}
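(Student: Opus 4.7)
The plan is to establish both directions of the dichotomy, leveraging the equivalence between weak balance and balance that was established earlier in this section. Because balance is a stronger-looking but in fact equivalent condition, I may freely assume that every marginal matrix $M_{[r]}(F^{[t]})$ coming from an instance is block-rank-one, not merely the ``column-marginalization'' kind singled out by weak balance.

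For the tractable direction, suppose $\Gamma_{\mathcal{F}}$ is strongly rectangular and $\mathcal{F}$ is weakly balanced. Given an input instance $I$ defining $F(x_1,\ldots,x_n)$, the goal is to compute $Z(I) = \sum_{\mathbf{x}} F(\mathbf{x})$ in polynomial time. By \autoref{coro:vector}, each marginal $F^{[t]}$ admits a vector representation $(s_1,\ldots,s_t)$, so on its support $F^{[t]}(\mathbf{x}) = s_1(x_1)\cdots s_t(x_t)$. Strong rectangularity of $\Gamma_{\mathcal{F}}$ implies that this support decomposes as a product-like set whose projections can be enumerated from the constraint graph in polynomial time. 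The algorithm then proceeds by variable elimination: at each step it maintains a compact encoding (support plus vector factors) of the current marginal, uses strong rectangularity to update the support description, and uses the vector representation to combine numerical contributions multiplicatively. Correctness comes from the factorization, and polynomial running time follows because both the support description and the number of vector factors grow polynomially.

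For the hardness direction, I would split into two cases. If $\Gamma_{\mathcal{F}}$ is not strongly rectangular, the classical argument from the general-domain \#CSP literature applies: a non-rectangular support lets one embed an arbitrary bipartite counting structure and reduce from a known \#P-hard problem. If $\Gamma_{\mathcal{F}}$ is strongly rectangular but $\mathcal{F}$ is not weakly balanced, then some instance produces a marginal matrix $M$ with two rows indexed by $\mathbf{u}_1,\mathbf{u}_2\in D^{a}$ that are linearly independent but not orthogonal. The matrix $G = MM^{\mathsf T}$ is then realizable (by the same doubling trick used in the proof that weak balance implies balance), and its principal $2\times 2$ submatrix has the form $\begin{bsmallmatrix} p & q \\ q & r \end{bsmallmatrix}$ with $p,q,r>0$ and $pr>q^2$. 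This is a full-rank, strictly positive, non-multiplicative binary gadget on a two-element subdomain; interpolating through its distinct positive eigenvalues (as in the spectral argument used around \autoref{lemma:binary-non-affine}) lets one simulate an arbitrary binary constraint on $\{\mathbf{u}_1,\mathbf{u}_2\}$ and reduce from a canonical \#P-hard weighted CSP (for example a two-spin model at non-trivial activities).

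The main obstacle will be the tractable direction, since one must coordinate two structural promises, namely the vector representation and strong rectangularity, across iterated variable elimination while keeping both the support description and the numeric factors polynomially bounded. In contrast, the hardness direction is more routine once the equivalence of weak balance with balance is in hand, since the non-block-rank-one $2\times 2$ submatrix extracted from $G$ is a textbook source of \#P-hardness and, after the equivalence, one never has to worry about the distinction between the ``marginalize one coordinate'' and ``marginalize a block'' versions of the matrix condition.
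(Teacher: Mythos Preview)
This lemma is not proved in the paper; it is \emph{cited} as a result of Cai, Chen and Lu \cite{CCL2011}. The paper simply quotes it and then uses it together with the just-established equivalence ``weak balance $\Leftrightarrow$ balance'' to conclude that one only needs to test weak balance. There is therefore no proof in the paper to compare against.

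As for the proposal itself: the high-level shape is right (that is how \cite{CCL2011} proceeds), but several steps are too sketchy to stand on their own. On the tractable side, ``strong rectangularity implies the support decomposes as a product-like set whose projections can be enumerated'' is not a sentence that proves anything; the actual algorithm in \cite{CCL2011} requires care in maintaining a representation of the support and the vector factors through each elimination step, and you would need to spell out why each step keeps the data polynomial-size. On the hardness side, invoking the spectral/orthogonal-transformation argument around \autoref{lemma:binary-non-affine} is misplaced: that lemma lives in the Holant world over the Boolean domain and relies on orthogonal holographic transformations, which are not available in general-domain $\CSP$. The hardness argument in \cite{CCL2011} for a non-block-rank-one marginal uses a different reduction (essentially to counting weighted graph homomorphisms over a two-element subdomain) rather than the eigenvalue interpolation you describe.
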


Since balance implies strongly rectangularity, so does weak balance. Therefore, to determine the complexity of $\mathcal{F}$, we only need to decide whether $\mathcal{F}$ is of weak balance.

\section{Conclusion}

To determine the complexity of a problem $\Holant(\mathcal{F})$, the proofs of previous Holant dichotomies often start with a non-trivial function in $\mathcal{F}$. This works well for symmetric functions, but the structure of an asymmetric one can be very intricate. In \cite{CLX2011b}, we have already seen that asymmetry poses great challenges in arity reduction and gadget construction, even assuming the presence of all unary functions. In fact, similar difficulty arises on higher domains, where it is tough to obtain an explicit dichotomy. The \#CSP dichotomies over general domains \cite{DR2013,CCL2011,CC2012} are more abstract than those over the Boolean domain, but they offer great insights into sum-of-product computation. Inspired by them, we introduce the Block-rank-one condition for Holant problems, which leads to a clear classification. At the beginning of our work, we were not sure whether the condition is sufficient for tractability. Lemma \ref{lemma:balance-eq-block} and Lemma \ref{lemma:balance-unary} make it possible to absorb the results in \cite{DGJ2009} and reach the destination.

\section*{Acknowledgements}

This work was supported by the National Natural Science Foundation of China (Grants No. 61170299, 61370053 and 61572003).

\newpage
\bibliography{D:/texworks/mybib}

\end{document}